\title{Single Scale Cluster Expansions with Applications to Many Boson and Unbounded Spin Systems}
\author{Martin Lohmann \\ \textit{{\small  Department Mathematik, ETH Z\"urich }}}
\date{}
\newcommand{\Text}[1]{\textnormal{#1}}
\newcommand{\ud}{\,\mathrm{d}}
\newcommand{\erem}{ \begin{flushright} \(\diamond\)  \end{flushright}}
\newcommand{\bd}[1]{\mathbf{#1}}
\newcommand{\mc}[1]{\mathcal{#1}}
\newcommand{\mf}[1]{\mathfrak{#1}}
\newcommand{\half}[0]{\frac{1}{2}}
\newcommand{\id}[0]{\mathbbm{1}}
\newcommand{\pderi}[1]{\frac{\partial}{\partial #1}}
\newcommand{\ol}[1]{\overline{#1}}
\newcommand{\ul}[1]{\underline{#1}}
\newcommand{\mb}[1]{\mathbb{#1}}
\newcommand{\qqquad}{\qquad\qquad\qquad\qquad\qquad}
\newcommand{\const}{\Text{ const }}
\newcommand{\ob}[1]{^{(#1)}}
\newcommand{\supp}{\textnormal{supp }}
\newcommand{\N}{\ul{\sf N}}
\theoremstyle{plain}
\newtheorem{lem}{Lemma}
\newtheorem{thm}{Theorem}
\newtheorem{prop}{Proposition}
\theoremstyle{definition}
\newtheorem{defin}{Definition}
\theoremstyle{remark}
\newtheorem{rem}{Remark}
\begin{document}

\maketitle

\begin{abstract}
We develop a cluster expansion to show exponential decay of correlations for quite general single scale spin systems, as they arise in lattice quantum field theory and discretized functional integral representations for observables of quantum statistical mechanics. We apply our results to: the small field approximation to the coherent state correlation functions of the grand canonical Bose gas at negative chemical potential, constructed by Balaban, Feldman, Kn\"orrer and Trubowitz in \cite{BFKT4}; and to $\sf N $ component unbounded spin systems with repulsive two body interaction and massive, possibly complex, covariance. Our cluster expansion is derived by a single application of the BKAR interpolation formula.
\end{abstract}

\section{Introduction}

Cluster expansions are a well-known tool to prove critical behavior and decay properties of truncated correlation functions in spin systems from statistical mechanics and quantum field theory. For multi scale systems, they are part of many rigorous implementations of the renormalization group, where they give detailed control over the exponents of the power law decay of the correlation functions. For single scale systems, they can be used to prove exponential decay of correlations. \\
In field theoretic language, most systems that qualify as ``single scale'' are defined by an action such that either: the single site (``diagonal'') contributions to the action are uniformly bounded below, and the multi site (``off-diagonal'') contributions are small as compared to the diagonal ones; or: the off-diagonal contributions which are not small are Gaussian (quadratic) and given by a uniformly bounded and exponentially decaying covariance, and all other contributions to the action (including (nonquadratic) diagonal ones) are small as compared to the Gaussian contribution. \\
Exponential decay of the two point function for the first type of systems has been investigated intensively using Witten Laplacian techniques, see for example \cite{Sj,H,BM}. In somewhat less generality (but in more detail), decay of all truncated correlation functions has also been studied with cluster expansions \cite{PS}. These works were mainly motivated by the link between exponential decay of correlations to logarithmic Sobolev inequalities (see, e.g. \cite{L,Y,GZ}). \\
Even though the methods of this thesis imply strong results in these directions, we have optimized their implementation for application to systems of the second type, which have only been accessible to cluster expansions. The structure of the expansion in this situation is more complicated than for the first type, and relies on the well-known integration by parts identity for Gaussian measures. Most classical versions of such cluster expansions are designed to treat single steps of a renormalization group iteration for multi scale systems, see e.g. \cite{MP,BY,Ri,Br,AR2}. An example of a single scale system whose decay properties determine the time evolution of quantum fluids \cite{LS} was solved in \cite{APS} and is a special case of the models treated in this thesis. \\
An interesting single scale system more complicated than the ones in the above applications is the Bose gas with strictly negative chemical potential $\mu<0 $. Its lattice version is defined by the Hamiltonian $$ H = \sum_{\bd x,\bd y\in\mb T} \psi^\dagger (\bd x) \big[-\Delta-\mu\big](\bd x,\bd y) \psi(\bd y) + \half\sum_{\bd x,\bd y\in\mb T} \psi^\dagger(\bd x)\psi^\dagger(\bd y) v(\bd x,\bd y)\psi(\bd x)\psi(\bd y), $$ where $\mb T = \mb Z^D/L\mb Z^D,L\in\mb N, $ is the discrete torus, $\Delta $ is the lattice Laplacian, and $ v$ is a small, repulsive, and exponentially decaying two body potential. $\psi(\bd x),\psi^\dagger(\bd x) $ are canonical creation and annihilation operators on Fock space. Standard arguments \cite{BR} (applicable also in the continuum model) using a Feynman Kac functional integral representation and based on the repulsiveness of the interaction and positivity of the heat kernel can be used to show that the untruncated correlation functions are pointwise dominated by their value at $v=0$. Since the free system is uncondensed at $\mu<0 $, this, in particular, gives exponential decay of the two point function. See \cite{US} for a recent application of this to the critical temperature of Bose Einstein condensation.\\
The arguments above have no direct implications on higher order truncated correlation functions, and, more importantly, the low temperature critical behavior of the model as the chemical potential becomes positive is obscure in the Feynman Kac type representation for the correlations. A more intuitive representation that explains the Bogoliubov picture of Bose Einstein condensation by a simple mean field argument and has long been used by physicists to study this critical behavior is the one by coherent state functional integrals (see, e.g. \cite{NO}). This representation has been made rigorous in \cite{BFKT1}, and leads (at fixed infrared and ultraviolet space cut-off) to a superrenormalizable (as a high temperature cut-off is removed) theory, which was subsequently constructed in \cite{BFKT4}. This construction is done at large finite temperature, small chemical potential of either sign, small interaction strength, and uniformly in the infrared cut-off.\\
Proving exponential decay of correlations in the resulting effective model at $\mu<0 $ should not be expected as simple as in the Feynman Kac representation. On the most obvious technical level, one has to deal with a complicated ``large field small field expansion'' that is the output of \cite{BFKT4} (and whose precise form is tailored towards the more exciting $\mu>0 $ phase). But even the pure small field term in this expansion cannot be treated by methods that only rely on the repulsiveness of the interaction because of the oscillatory nature of coherent state functional integrals.\\
On the other hand, this pure small field term is a natural candidate for an analysis by a cluster expansion. Unfortunately, due to their quite technical nature, cluster expansions have never been formulated on the level of generality necessary for an application to this situation. In particular, complex covariance matrices, non-local repulsive interactions, together with non-polynomial interactions (in an appropriate small field domain) have not yet been analyzed by cluster expansions. It is the motivation of this thesis to achieve this.\\ 
We have tried to find a formalism in which the logic of the argument is not buried too deeply under its technicalities. We have also tried to assure that this formalism can naturally be extended to a multi scale setting. The cluster expansion described in this thesis converges for chemical potentials of the order $\mu\lesssim -\mf v^{\frac1{4+D}} $, where $\mf v\sim \Vert v\Vert_{1,\infty} $ is roughly the size of the interaction. For technical reasons related to the oscillatory nature of coherent states, this is not where intuition suggests the appearance of new scales, namely at $ \mu\sim -g^{\frac 12}$. To get to this point and beyond, multi scale expansions should be useful. In particular, they should prove exponential decay of correlations at $0\leq \mu\ll g $ and positive temperature, and critical (algebraic) decay at $\mu=0 $ and zero temperature. Further work on this is in progress.\\
In the remainder of this introductory section, we describe in more detail the models for which our cluster expansion was developed. In section 2, the algebra of the expansion is described. In section 3, we formulate the framework for the norms that we will use in section 4 to control the expansion. Section 5 defines explicit norms to which this framework applies. In section 6, we give details about how the bounds of section 4 apply to our models. For the convenience of the reader, we have collected tables of symbols and notation in an appendix.

\subsection{Small field many Boson systems}

We recall some of the results from \cite{BFKT4} that are relevant to the present work. They concern the small field part of their construction only and are also the content, in a slightly different form, of the more accessible paper \cite{BFKT3}.\\
Let $(X,d) $ be a finite metric space, and $\text{h} $ be a real symmetric operator on $L^2(X) $; in \cite{BFKT4}, $X=\mb T $ is the $D$ dimensional discrete torus, and $\text{h} $ is a slight generalization of (minus) the lattice Laplacian. Let the two body potential $v(\bd x,\bd y) $ be a real symmetric, exponentially decaying (with fixed positive mass $5{\sf m}$) and repulsive operator on $L^2(X) $; in \cite{BFKT4}, it is also assumed translation invariant. Let $\mu\in\mb R $ be the chemical potential. Depending on $\text{h}$, $\mu$ and ${\sf m}$, let $\theta>0 $ be small enough\footnote{See Remark \ref{remallowedmu} below for an important comment on this smallness condition}. Let $\beta \in \theta\mb N $ be an inverse temperature. Set $\mb L =  \theta\mb Z/\beta\mb Z \times X $. Denote by $\psi^{(\dagger)}(\bd x)$ canonical (creation and) annihilation operators acting on the Fock space $\mc F= \oplus_{n\geq 0} L^2(X)^{\otimes_S n} $, and, for $ x = (\tau,\bd x)\in \mb L $, set $ \psi^{(\dagger)}(x) = e^{\tau H_\mu}\psi^{(\dagger)}(\bd x) e^{-\tau H_\mu} $ with
\begin{align*}
H_\mu &= H-\mu N = \sum_{\bd x,\bd y\in X} \psi^\dagger(\bd x) \big[h(\bd x,\bd y)-\mu\delta_{\bd x,\bd y}\big]\psi(\bd y) + \half\sum_{\bd x,\bd y\in X} \psi^\dagger(\bd x)\psi^\dagger(\bd y) v(\bd x,\bd y)\psi(\bd x)\psi(\bd y).
\end{align*}
For a kernel $A(\bd x,\bd y)$ on $X$, denote 
\begin{align}\label{notnorms}
\Vert A\Vert_{\sf m} &= \sup_{\bd x}\sum_{\bd y} e^{{\sf m}d(\bd x,\bd y)}\vert A(\bd x,\bd y)\vert\\
\Vert A\Vert_{{\sf m},\infty} &= \sup_{\bd x,\bd y} e^{{\sf m}d(\bd x,\bd y)}\vert A(\bd x,\bd y)\vert\nonumber
\end{align}
Assume that $0<\Vert v\Vert_{5{\sf m}}=:\mf v $ is small enough (but cf. Remark \ref{remallowedmu}), and that the smallest eigenvalue $\lambda_{\text{min}}(v)\geq c_v\mf v $, $c_v>0 $. The result that was motivated in \cite{BFKT3} and proven in \cite{BFKT4} is that then, for any $x_1,\ldots,x_n\in \mb L $, the unnormalized, untruncated correlation functions are approximately given by ($\mb T $ denotes the usual ``time ordering'')
\begin{align}\label{fnintrep}
\text{Tr }e^{-\beta H_\mu} \mb T \prod_{m=1}^n \psi^{(\dagger)}(x_m) &\approx \int_{\mb C^{\mb L}} \prod_{x\in \mb L}\ud\phi(x) \chi\big(\phi\big)e^{\mc A(\phi)} \prod_{m=1}^n \phi(x_m)^{(*)}
\end{align}
``Approximately'' means ``modulo large field contributions'', see \cite{BFKT4} for their complicated description. In the above integral, $\ud\phi(x) = \frac1\pi \ud\Re\phi(x)\ud\Im\phi(x) $ is the standard Lebesgue measure. The domain of integration is determined by the characteristic function $\chi(\phi) $. This function is equal to $1$ iff the following three conditions are satisfied at every $x\in\mb L $:
\begin{align}\nonumber
\vert \phi(x)\vert &\leq \big(\theta\mf v\big)^{-\frac 14-\epsilon} &&\\
\vert \partial_0\phi(x)\vert &\leq   \big(\theta\mf v\big)^{-\epsilon} && \partial_0\phi(\tau,\bd x) = \phi(\tau+\theta,\bd x)-\phi(\tau,\bd x) \label{originalchi}\\
\vert \nabla\phi(x)\vert &\leq  \theta^{-\half}\big(\theta\mf v\big)^{-\epsilon} && \nabla_i\phi(\tau,\bd x) = \phi(\tau,\bd x+\bd e_i)-\phi(\tau,\bd x).\nonumber
\end{align}
The last condition is absent in \cite{BFKT3} and was added in \cite{BFKT4} in view of a potential infrared analysis in the $\mu>0 $ regime (see Remark \ref{remallowedchi} below). The action $\mc A $ defining the integrand is
\begin{align*}
\mc A(\phi) &= -\big\langle \phi^*,{\sf C}^{-1}\phi\big\rangle + V(\phi) + \mc D(\phi).
\end{align*}
The explicit quadratic contribution is defined through $\big\langle \phi^*,A\phi\big\rangle = \sum_{x,y\in\mb L}\phi(x)^*A(x,y)\phi(y) $ by 
\begin{align}\label{mbbfktp}
{\sf C}^{-1}\big((\tau,\bd x),(\tau',\bd x')\big) &= -\delta_{\tau+\theta,\tau'}j(\theta)(\bd x,\bd y) + \delta_{(\tau,\bd x),(\tau',\bd x')}. 
\end{align}
Here, for any $\tau\geq 0 $, $j(\tau)=e^{-\tau(\text{h}-\mu)} $. Note that ${\sf C}^{-1} $ is not symmetric. If $X=\mb T $ is the torus, $\text{h}$ is translation invariant with positive, analytic Fourier transform, and $\mu <0 $, it is standard to check that ${\sf C}^{-1}$ is normal, has spectrum $\sigma({\sf C}^{-1})\subset \{\Re z\geq 1-e^\mu\} $, and that both ${\sf C} $ and ${\sf C}^{-1} $ are exponentially decaying in the space and ``time'' direction; in particular, with constants $c_t,c_x>0 $ depending on $\text{h} $ and $\theta $, $$ {\sf C}\big((\tau,\bd x),(\tau',\bd x')\big)\sim e^{-c_t\mu \vert \tau-\tau' \vert - c_x \sqrt{-\mu} \vert\bd x-\bd x'\vert_2}. $$ We assume that ${\sf m}$ above was chosen bigger than ($\frac16$ of) the corresponding decay rate.\\
The explicit quartic part of the interaction is
\begin{align*}
V(\phi) &= -\sum_{x_1,\ldots,x_4\in\mb L} w(x_1,\ldots,x_4)\phi(x_1)^*\phi(x_2)\phi(x_3)^*\phi(x_4)\\
w\big((\tau_1,\bd x_1),\ldots,(\tau_4,\bd x_4)\big) &= \delta_{\tau_1,\tau_3}\delta_{\tau_1+\theta,\tau_2}\delta_{\tau_3+\theta,\tau_4} \sum_{\bd x,\bd y }v(\bd x,\bd y)\\&\qquad\qquad\times \int_0^\theta\ud t\; j(t)(\bd x,\bd x_1)j(t)(\bd x,\bd x_2)j(t)(\bd y,\bd x_3)j(t)(\bd y,\bd x_4)
\end{align*}
Its kernel is exponentially decaying in the space direction and ``nearest neighbor'' in the ``time'' direction. \\
Finally, the non-explicit part $\mc D $ of the action is an even power series (starting with quartic term) that is again ``nearest neighbor'' in the ``time'' direction:
\begin{align*}
\mc D(\phi) &= \sum_{\tau\in \theta\mb Z/\beta\mb Z}\sum_{n\geq 2} \sum_{\substack{\bd x_1,\ldots,\bd x_n \\ \bd y_1,\ldots,\bd y_n}} \mc D\big(\bd x_1,\ldots,\bd x_n;\bd y_1,\ldots,\bd y_n\big)\prod_{m=1}^n \phi\big((\tau-\theta,\bd x_m)\big)^*\phi\big((\tau,\bd y_m)\big)
\end{align*}
This power series is convergent if the small field condition $\vert \phi(x)\vert\leq \big(\theta\mf v\big)^{-\frac 14-\epsilon}=: R $ is satisfied everywhere because, as always for $\mf v $ small enough, we have the bound $$ \Vert \mc D\Vert_{2R,2{\sf m}}\leq \const \big(\theta\mf v\big)^{\frac12-8\epsilon}   $$ with the $1,\infty \;+ $ tree decay norm
\begin{align}\label{originalnorm}
\Vert \mc D\Vert_{R,{\sf m}} &= \sum_{n\geq 2} \max_{\bd x\in X}\max_{1\leq m\leq 2n} \sum_{\substack{\bd x_1,\ldots,\bd x_{2n} \\ \bd x_m = \bd x }} R^{2n}e^{{\sf m}d_{\text{t}}(\bd x_1,\ldots,\bd x_{2n}) } \big\vert d\big(\bd x_1,\ldots,\bd x_n;\bd x_{n+1},\ldots,\bd x_{2n}\big)\big\vert.
\end{align}
Here, for any set or sequence $S $ of points of $X$, $d_{\text{t}}(S) $ is the minimal size of a tree on $S$. Note already here that the corresponding norm of the explicit quartic part $\Vert w\Vert_{R,{\sf m}}\leq \text{const } \theta\big(\theta\mf v\big)^{-4\epsilon} $ is much larger than $\Vert \mc D\Vert_{R,{\sf m}} , $ but for $\mu<0 $ still much smaller than the kinetic (quadratic) contribution, whose real part at the boundary of the small field region is $\sim -(1-e^\mu)R^2 $.\\
The normalized and truncated correlation functions can be obtained from the unnormalized, untruncated ones by the usual formalism. Since the representation (\ref{fnintrep}) is through an integral over a compact domain, by dominated convergence, the resulting truncated correlation functions have a generating function given by
\begin{align}\label{genfctmb}
\log \mc Z(J^*,J) &= \log  \int_{\mb C^{\mb L}} \prod_{x\in \mb L}\ud\phi(x) \chi\big(\phi\big)e^{\mc A(\phi) + \langle J^*,\phi\rangle + \langle J,\phi^*\rangle },
\end{align}
as long as the logarithm exists for $\Vert J\Vert_\infty $ small enough.

\begin{rem}\label{remallowedmu}
The smallness assumption on $\theta $ and validity of the whole construction depends, among many other things, on $\Vert j(\tau) -\id \Vert_{{\sf m}} $, with $\tau\geq 0 $. In \cite{BFKT3}, this allows chemical potentials with $\vert\mu\vert\leq \mc O(1) $. In \cite{BFKT4}, due to reasons related to large fields, a $v$ dependent smallness condition $\vert\mu\vert\leq\const \mf v^{\frac12+\epsilon} $ is added. It is mild enough to include the symmetry breaking region for $\mu>0 $, but rather inconvenient for $\mu<0 $. In particular, it excludes the ``single scale'' region, and does not fall into the range of application of the present work. We believe that, for $\mu <0 $, the restrictions of both papers could be dropped upon changing the single site measure used in these works.

\erem
\end{rem}

\subsection{Unbounded spin systems}

A more simple minded approximation to the (generating functional of the) truncated correlation functions of the Bose gas is
\begin{align}\label{fnintrep2}
\log \mc Z(J^*,J) &= \log  \int_{\mb C^{\mb L}} \prod_{x\in \mb L}\ud\phi(x) e^{\mc A(\phi) + \langle J^*,\phi\rangle + \langle J,\phi^*\rangle }\\
\mc A(\phi) &= -\big\langle \phi(x)^*,\left[-\partial_0+\text{h}-\mu\right]\phi(x)\big\rangle  - \half \big\langle \vert \phi\vert^2,v\vert\phi\vert^2\big\rangle \nonumber
\end{align}
In fact, this is the naively discretized version of the usual coherent state path integral used by physicists \cite{NO}\footnote{It might be asked if the actual correlation functions are the limit of the ones defined by (\ref{fnintrep2}) as $\theta\to0 $. Already at the perturbative level, at low temperatures, some differences to the rigorous approach appear and would have to be addressed. The apparently more complicated representation of \cite{BFKT1} owes its justification to the fact that this ultraviolet limit is comparatively simple to take.}. It lacks the characteristic functions and the non-polynomial $\mc D $ term present in (\ref{fnintrep}), and features a manifestly positive quartic interaction term. The quadratic part is still defined by a non-symmetric ``covariance'' with mass (lower bound on the real part of the spectrum) $-\mu$. \\ 
(\ref{fnintrep2}) is an unbounded spin system with positive polynomial interaction and (non-symmetric) massive covariance. More generally, consider $\sf N $ component fields $\phi:\mb L\to \Xi := \mb R^{{\sf N}} $ and a symmetric, but complex covariance $C\in \text{End }\mb C^{\mb L\times \N} $, $\N = \{1,\ldots ,{\sf N}\} $, which is normal and has eigenvalues whose inverses have real part $\geq \upmu>0 $. We will construct (uniformly in $\vert\mb L\vert $)
\begin{align}\label{genfctuss}
\log \mc Z(J) &= \log \int_{\Xi^{\mb L}} \prod_{x\in X}\ud \phi(x) e^{-\half \langle \phi,C^{-1}\phi\rangle + V(\phi) + \langle J,\phi\rangle},
\end{align}
where $ \ud \phi(x)$ is Lebesgue measure on $\Xi $ and 
\begin{align}\label{formtwobodypotential}
V(\phi) &= -\half \sum_{x\in\mb L} \left( \sum_{y\in\mb L} v^\half(x,y)\phi(y)^2\right)^2.
\end{align}
$v^\half $ is the square root of $v$. In fact, our argument works for more general positive polynomials $V(\phi) $, see in particular Remark \ref{rempositivity} below. Note that complex fields have ${\sf N}=2 $, and a non-symmetric real covariance $C$ for complex fields corresponds to the symmetric, but complex covariance $\left(\begin{array}{cc} C+C^T & i(C-C^T) \\ -i(C-C^T) & C+C^T  \end{array}\right) $ for real two component fields.

\subsection{Description of results}

Let $\mb L $ be a finite set and $d$ be a metric on $\mb L$. As far as the algebra of the expansion is concerned, they are arbitrary, but in the bounds they enter in particular through the geometric constants
\begin{align*}
c_{\sf g}(m)&= \sup_{x\in\mb L}\sum_{x'\in\mb L} e^{-md(x,x')}\\
c_{\sf g}'(a)&= \sup_{Q\subset \mb L} \vert Q\vert^{-1} \cdot \Big\vert \big\{x\in\mb L,\, d(x,Q)\leq a \big\}  \Big\vert 
\end{align*}
where $m,a>0 $. We think of these constants as independent of $\vert\mb L\vert $, and this makes our bounds uniform in $\vert\mb L\vert $. For a $D $ dimensional lattice, $c_{\sf g}(m)\sim m^{-D} $ and $c_{\sf g}'(a)\sim a^D $ if $m<1,a>1 $.\\ 
Let $\sf N\in\mb N$ be fixed. We consider real $\sf N$ component fields $\phi:\mb L\to \mb R^{\sf N} =: \Xi  $. Let $C \in \Text{End } \mb C^{\mb L\times \N}$ be a symmetric normal matrix, and assume that $$\lambda_{\text{min}}\big(\Re C^{-1}\big) =: \upmu>0. $$ We will sometimes write $C(x,y) ,\,x,y\in\mb L$, for the (symmetric) ${\sf N}\times {\sf N} $ matrix $C((x,{\sf m}),(y,{\sf n})) $. We define the Gaussian measure with covariance $C$ by
\begin{align*}
\ud \mu_C(\phi) &= \det ( 2\pi C)^{-\half} \prod_{x\in \mb L} \ud\phi(x) \exp\left(-\half\langle \phi,C^{-1}\phi\rangle\right),
\end{align*}
where $\ud \phi(x) = \ud\phi(x,1) \cdots\ud \phi(x,{\sf N}) $ is Lebesgue measure on $\Xi $ and $$\langle \phi,\psi\rangle =\sum_{x\in \mb L} \phi(x)\cdot \psi(x) = \sum_{x\in\mb L} \sum_{{\sf n}=1}^{\sf N} \phi(x,{\sf n})\psi(x,{\sf n}) = \sum_{\xi,\xi'\in\mb L\times\N} \phi(\xi)\psi(\xi'). $$ The square root is canonically defined and the normalization finite since the eigenvalues of $C$ have positive real part. \\
Let $J\in \Xi^{\mb L} $ be a source field. We consider an interaction $V(\phi;J) = V_1(\phi;J) + V_2(\phi;J) $ with $V_1 $ a generic power series without constant term and $V_2$ a two body potential with source term (more general repulsive polynomial interactions are also allowed, see Remark \ref{rempositivity}). More precisely, introduce the notation $\bd L $ for the space of unordered sequences (multisets) of $\mb L\times \N $,
\begin{align*}
\bd L = \bigcup_{n\geq 0} \Big\{(\xi_1,\ldots,\xi_n) =: (\xi_m)_1^n,\xi_m= (x_m,{\sf n}_m)\in\mb L\times\N\Big\}\diagup_{\mc S_n}
\end{align*}
We sometimes write $\{(\xi_m)_1^n\} $ for the equivalence class of $(\xi_m)_1^n $ and $\{\bm\xi\} = \{\xi_m\}_1^n $ for the set of different elements of $\bm\xi\in\bd L $. We also write $$\{(\xi_1,\ldots,\xi_n)\}\circ \{(\xi'_1,\ldots,\xi'_{n'})\} = \{(\xi_1,\ldots,\xi_n,\xi'_1,\ldots,\xi'_{n'})\}   $$ and $\supp\{(\xi_m)_1^n\} = \{x_m\}_1^n \subset \mb L$. Here and in the following, we write $x,x',x_m,\ldots\in\mb L $ for the first (space) component of $\xi,\xi',\xi_m,\ldots\in \mb L\times\N $, and $z,z',z_m,\ldots\in\mb L $ for the first component of $\zeta,\zeta',\zeta_m, \ldots\in \mb L\times\N $, abandoning more unambiguous (as will turn out, overly unambiguous) notation like $\xi\vert_1 $ for brevity. If, for some $X\subset \mb L $, $x\in X $ for all $\xi\in \bm \xi $, we write $\bm\xi\in \bd L\vert_X $, or $\bm\xi\subset X $. We abbreviate $\phi(\bm\xi ) = \prod_{m=1}^n\phi(\xi_m) $ for $\bm\xi = \{(\xi_m)_1^n\} $.\\
With this notation
\begin{align*}
V_1(\phi;J) &= \sum_{\bm\xi,\bm\zeta\in\bd L} v_1(\bm\xi;\bm\zeta) \phi(\bm\xi)J(\bm\zeta)
\end{align*}
is any power series with $v_1(\emptyset;\emptyset) = 0 $ (no constant term) and, with a constant $a $,
\begin{align*}
V_2(\phi;J) &= - \sum_{x\in\mb L} \left( \sum_{y\in\mb L} v^\half(x,y)\phi(y)^2\right)^2 - a \cdot J(x)\cdot\phi(x)
\end{align*}
a two body potential with source term. Let $\mf v_1,\mf v_2\ll 1\ll r\ll R $ be four constants. For some $\dot m>0 $ and $\lambda_J\geq0 $, we assume that (notation for norms is the same as in (\ref{notnorms}))
\begin{align}
\Vert v^\half\Vert_{2\dot m} &\leq \mf v_2^{\half} \nonumber\\
\lambda_{\text{min}}(v^\half)&\geq c_{v}\mf v_2^\half \label{assv2}\\
a&\leq \lambda_J\mf v_2^{\frac14}\nonumber
\end{align}
for some $c_v>0 $ (independent of $\mf v _2$), and that 
\begin{align}\label{assv1}
\Vert V_1\Vert_{r'+\mf v_1^{-1},\lambda_J,\dot m} &\leq \omega(r')
\end{align}
for all $r'\in[r,R] $, for some nondecreasing function $\omega:[r, R]\to\mb R $ with $\omega(r)\leq 1 $ (see Remark \ref{remchoicer} for candidates). Here,
\begin{align*}
\Vert V_1\Vert_{R,\lambda_J,\dot m} &= \sup_{x\in\mb L} \sum_{\substack{ \bm\xi,\bm\zeta\in\bd L  \\ x\in \supp\bm\xi\circ\bm\zeta  }} R^{n(\bm\xi)}\lambda_J^{-n(\bm\zeta)} e^{\dot md_{\text{t}}(\bm\xi\circ\bm\zeta) } \vert v_1(\bm\xi;\bm\zeta)\vert
\end{align*}
is a $1,\infty $ norm with exponential tree decay\footnote{This norm is not exactly the same as (\ref{originalnorm}); ignoring the source field, we have, for any $R'>R $, $\Vert V_1\Vert_{R,\lambda_J,m}\leq (e\cdot \log\frac{R'}{R})^{-1}\Vert V_1\Vert_{R',m} $, so our norm is slightly weaker. We found it more more convenient to work with.}. If $\Vert J\Vert_\infty\leq \lambda_J^{-1} $ and $\phi\in\Xi^{\mb L} $ is in the support of
\begin{align*}
\chi_R(\phi) &= \prod_{x\in\mb L}\chi\Big(\vert \phi(x)\vert\leq R\Big),
\end{align*}
then the series for $V_1(\phi;J) $ converges absolutely. For an analytic function $W(J)$ just of $J$, we define $ \Vert W\Vert_{\lambda_J,m} $ in the same way, with $ \bm\xi = \emptyset$ everywhere. Our main result is

\begin{thm}\label{mainthm}
Let $  m_V, m>0 $ be fixed. Let $C\in\text{End }\mb C^{\mb L\times \N} $ be symmetric and normal with  
\begin{align*}
\lambda_{\text{min}}\big(\Re C^{-1}\big) =&\, \upmu>0 & \text{and}& & \Vert C\Vert_{6m,\infty} &= \mf c_{\infty}<\infty.
\end{align*}
Fix $\lambda_J>0 $. Depending on these data, let $\mf v_1,\mf v_2\geq0 $ be small enough, $R $ and $\frac Rr $ be large enough, but let $r $ satisfy\footnote{This condition will ensure that the kinetic contribution to the action dominates the one of $V_1 $ both for $0\leq\vert\phi(x)\vert< r $ and $r\leq \vert\phi(x)\vert\leq R $. See the proof of Proposition \ref{boundsintint} and Remark \ref{remchoicer}.}
\begin{align}\label{choicer}
\upmu r^2 \geq 16   \cdot\omega(R)\cdot c_{\sf g}'\left(\frac{\log \omega(R)}{m_V}\right).
\end{align}
$R=\infty $ is also allowed (this implies $V_1=0=\omega $). Let $V_1(\phi;J) $ be analytic and satisfy (\ref{assv1}), and let $V_2(\phi;J) $ be a two body potential with source term that satisfies (\ref{assv2}), with $\dot m = 2m+3m_V $. Set $V= V_1+V_2 $ and
\begin{align}\label{logz}
\mc Z(J) &= \int \ud\mu_C(\phi) \chi_R(\phi)e^{V(\phi;J)}.
\end{align}
Then, $\log \mc Z(J) $ exists, is analytic in $J $, and satisfies $$ \Vert \log \mc Z\Vert_{\lambda_J,m}<\infty.   $$ In particular, the truncated correlation functions decay exponentially with mass $m$. $\log \mc Z $ is given by the Mayer series of a cluster expansion.

\end{thm}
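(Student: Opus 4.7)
The plan is to realize $\log\mc Z(J)$ as the Mayer series of a cluster expansion produced by a single application of the BKAR interpolation formula. First, for each unordered pair $\ell=\{x,y\}\subset\mb L$ with $x\ne y$ I introduce an interpolation parameter $s_\ell\in[0,1]$ and define an interpolated integrand $\mc Z(s;J)$ in which (i) the off-diagonal entries $C(x,y)$ of the Gaussian covariance are replaced by $s_{\{x,y\}}C(x,y)$; (ii) each kernel $v^\half(x,y)$ appearing in $V_2$ is multiplied by $s_{\{x,y\}}$; and (iii) each multi-site coefficient $v_1(\bm\xi;\bm\zeta)$ of $V_1$ is dressed by a product of $s$'s that vanishes as soon as any $s_\ell$ on a spanning tree of $\supp(\bm\xi\circ\bm\zeta)$ does. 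With this setup, $\mc Z(1;J)=\mc Z(J)$ while $\mc Z(0;J)=\prod_x\mc Z_x(J)$ factorizes over sites. A single application of the BKAR formula then gives
$$\mc Z(J)=\sum_F\int_{[0,1]^{E(F)}}d\bm t\prod_{\ell\in F}\frac{\partial}{\partial s_\ell}\mc Z(s(F,\bm t);J),$$
with $F$ ranging over forests on $\mb L$; by construction the integrand decouples across the components of $F$, so standard Mayer combinatorics give $\log\mc Z(J)=\sum_T[\text{connected tree contribution}]$, with $T$ running over trees on subsets of $\mb L$.

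I would then bound a single tree contribution. Each BKAR derivative $\partial/\partial s_\ell$, $\ell=\{x,y\}$, acts either on the Gaussian weight (where, by the standard integration-by-parts identity for the $C(s)$-Gaussian measure, it becomes a kernel $C(x,y)$ paired with two functional derivatives $\partial/\partial\phi(x),\partial/\partial\phi(y)$ inside the integrand), or on the $V_2$ factor (producing a $v^\half$ kernel between $x$ and $y$), or on the $V_1$ factor (producing terms bounded via $\Vert V_1\Vert_{r'+\mf v_1^{-1},\lambda_J,\dot m}$ for suitable $r'\in[r,R]$). What remains is an integral over the decoupled single-site Gaussian measures $\ud\mu_{C(x,x)}$ with cutoff $\chi_R$ and residual $e^V$, and these single-site integrals are controlled by combining the Gaussian damping $\Re C^{-1}\geq\upmu$, the cutoff $|\phi(x)|\leq R$, the repulsivity $\lambda_{\text{min}}(v^\half)\geq c_v\mf v_2^\half$ (so that $e^{V_2}$ is harmless up to source corrections), and condition (\ref{choicer}), which ensures that the quadratic kinetic term dominates the $V_1$ contribution both in the Gaussian regime $|\phi(x)|<r$ and in the tail regime $r\leq|\phi(x)|\leq R$.

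The sum over trees is then controlled by distributing the exponential decay carried by each tree edge (either $\mf c_\infty e^{-6md(x,y)}$ from a covariance or comparable $v^\half$ decay) as $e^{-md}\cdot e^{-2md}\cdot e^{-3md}$: the first factor delivers the target mass $m$ decay of $\log\mc Z$, the second performs the sum over tree positions via the geometric constants $c_{\sf g}(m)$ and $c_{\sf g}'(a)$, and the third together with the smallness of $\mf v_1,\mf v_2$ absorbs combinatorial factors. Analyticity in $J$ and the bound $\Vert\log\mc Z\Vert_{\lambda_J,m}<\infty$ follow from uniform convergence of the tree sum; exponential decay of truncated correlations is built into the tree-edge decay.

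The hard part will be the per-tree bound: many BKAR derivatives may pile up at a single vertex, producing a single-site integral with high-order $\phi$-polynomials multiplied by unboundedly many local factors of $V_1$ (which is only a power series) and of $V_2$. Matching the resulting combinatorial growth against the gains from $\mf v_1,\mf v_2$ small, $\upmu r^2$ large, and the specific choice (\ref{choicer}) of $r$ is where the two tree-decay norms (\ref{notnorms}) and (\ref{originalnorm}), together with the careful passage between the regimes $|\phi|<r$ and $r\leq|\phi|\leq R$, must work in concert.
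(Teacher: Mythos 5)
Your plan coincides with the paper's approach: one application of BKAR to the same interpolations of $C$, $V_1$, $V_2$ (Propositions \ref{aipgr}, \ref{boundsintint} and Theorem \ref{algebra}), Gaussian integration by parts, and a Mayer expansion (\ref{mealg}) for the logarithm. The execution of what you flag as ``the hard part'' hinges on two devices your sketch leaves implicit: the per-site split $\chi_R=\sum_{Q\subset X}\chi_{Q^c}\,\chi_Q^c$ between the small-field region $\vert\phi(x)\vert\leq r$ and the large-field region $r<\vert\phi(x)\vert\leq R$, which is how condition (\ref{choicer}) is actually deployed (via the regulator $\mf G_{Q,\bm\lambda}$ and the $\sup_Q$ built into $\bm\Vert\cdot\bm\Vert_{\phi,\bm\lambda}$), and the treatment of the distributional boundary terms that arise when the integration-by-parts derivatives hit $\chi_R$, which the paper resolves by integrating by parts again and which crucially exploits the ultra-local product structure of $\chi_R$ (cf. Proposition \ref{boundce} and Remark \ref{remnogenchar}).
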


\begin{rem}\label{remallowedchi}
The characteristic function $\chi_R $ in our definition of $\mc Z(J) $ enforces only the first of the conditions (\ref{originalchi}). Thanks to this fact, it is an ``ultra-local'' product of characteristic functions $\chi(\vert \phi(x)\vert\leq R) $, each depending only on a single $\phi(x) $. We will exploit this simplification in our proof, and therefore, our result does not apply to more general characteristic functions. See Remark \ref{remnogenchar} for the simple main reason behind this limitation. At least for ``nearest neighbor'' type characteristic functions as in (\ref{originalchi}), a generalization of the results seems feasible, and we leave it to future work in the multi scale setting, where such small field conditions have their natural origin.  \\
Note also that, while the condition $\Vert \phi\Vert_\infty\leq R $ enforced by our characteristic function is necessary for the existence of the interaction $V_1$, the other two conditions of (\ref{originalchi}) are associated to positive terms in the kinetic part of the action, such as $\langle\phi,\partial_0\phi\rangle $ and $\Vert \nabla\phi\Vert_2^2$. Unlike in the symmetry broken phase $\mu>0 $ for which the representation (\ref{fnintrep}) was developed, in the current case of a massive model, the positivity of these contributions is not needed otherwise, and could in principle be used to eliminate the two ``non-ultra-local'' restrictions of (\ref{originalchi}) by an additional large field decomposition that would partly undo the one that let to (\ref{fnintrep}).

\end{rem}

\erem

\begin{rem}\label{remchoicer}
For the parameters $r,R $ and $\mf v_1 $ used in measuring the size of $V_1 $, a common situation is that $$\mf v_1 = \mf w^\epsilon,\;r = \mf w^{-\epsilon'} \Text{ and }R=\mf w^{-\gamma} \qquad \Text{with} \qquad 0<\epsilon,\epsilon'<1 \Text{ and } \gamma>1 ,$$ for a single small parameter $\mf w $, and that $\Vert V_1\Vert_{\mf v_1^{-1}+R,\lambda_J,\dot m} \leq  \mf w^{-\delta} $ with $\delta\geq 0 $. Suppose $V_1 $ has degree at least $d$ in $\phi$ (w.l.o.g. $d\geq 1 $). Then
\begin{align*}
\Vert V_1\Vert_{r'+\mf v_1^{-1},\lambda_J,\dot m} &\leq \omega(r')
\end{align*}
for $$ \omega(r') = \mf w^{-\delta}\left(\frac{\mf v_1^{-1}+r'}{\mf v_1^{-1}+R}\right)^d . $$ Then, the condition $\omega(r) \leq 1 $ becomes a condition on the exponents. Usually $\delta $ is very close to $0$, $\epsilon<1 $ is at (perhaps small but) finite distance to $1$, and $\epsilon' $ can (if necessary) be chosen very close to $1$. In this situation, $ \omega(r) \leq 1$ follows from the smallness of $\mf w $.\\
Condition (\ref{choicer}) now reads
\begin{align*}
\upmu \mf w^{-2\epsilon'} \geq 16\cdot \mf w^{-\delta}\cdot  c_{\sf g}'\left(\frac{\delta}{m_V}\log \mf w^{-1}\right).
\end{align*}
As remarked above, on a $D$ dimensional lattice, $c_{\sf g}'(a) \sim a^D$, so the last factor is subexponentially large. If $\upmu \geq \mf w^{2\epsilon'-\delta+o(1)}, $ (\ref{choicer}) is satisfied for $\mf w $ small enough (and so $r$ and $\frac Rr $ large enough). In single scale models, usually $\upmu = \mc O(1) $ is considered.

\end{rem}

\erem
Even though we tried to obtain a rather general statement on exponential decay of correlation functions, also in this work (like any other on the topic) the emphasis is more on the method than about the sharpest results. Other or more general results can be obtained by tweaking the algebra of the expansion and the methods of bounding to the model at hand.

\section{The Algebra of the expansion}

In this section, we derive a formula for $\log \mc Z(J) $ as a Mayer series expansion of a polymer system, with explicit formulas for the polymer activities in terms of a cluster expansion. We also derive a representation as a Mayer series with large field small field decomposition. The bounds necessary for proving convergence are provided in the later sections.

\subsection{General tools}

We gather the general tools upon which our algebra rests.

\subsubsection{Mayer expansions} 

The final step of our construction of $\log \mc  Z(J) $ is a Mayer expansion for a polymer gas, by which we mean an application of the formula (see, e.g. \cite{Sa})
\begin{align}
\log \sum_{\{X_m\}_1^n\in \mc P(\mb L)} \prod_{m=1}^n A(X_m) &=  \sum_{X\subset \mb L }W(X)\nonumber \\
W(\{x\}) &= \log A(\{x\})\nonumber \\
W(X) &= \sum_{\{X_m\}_1^n\in \mc M(X)} \rho\Big(\{X_m\}_1^n\Big)\prod_{m=1}^n \dot A(X_m) \qquad\vert X\vert\geq 2\label{mealg} \\
\dot A(X) &= A(X)\cdot\prod_{x\in X}  A(\{x\})^{-1}.\nonumber
\end{align}
for some given activities $A: 2^{\mb L}\to \mb C $ (more generally, $\to $ commutative Banach algebra). Here, 
\begin{align*}
\mc P(\mb L) &= \Big\{   \{X_1,\ldots,X_n\} =:\{X_m\}_1^n , X_1\dot\cup \cdots\dot\cup X_n = \mb L,\;\forall m:X_m\neq \emptyset\Big\}\\
\mc M(X) &= \Big\{ \{X_m\}_1^n,\cup_m X_m=X,\;\forall m:\vert X_m\vert\geq 2 \Big\}.
\end{align*}
$X\dot\cup X'=Y $ means $X\cup X'=Y $, $X,X'\neq\emptyset $ and $X\cap X'=\emptyset $. That is, the $\mc P(\mb L) $ is the set of partitions of $\mb L $. We have used the Ursell functions $$\rho(X_1,\ldots,X_n) = \sum_{g\subset \mc G(X_1,\ldots,X_n) \Text{ connected}}(-1)^{\vert g\vert}, $$ with $\mc G(X_1,\ldots,X_n) $ the incidence graph of $X_1,\ldots,X_n $. The following representation is inspired by \cite{RW}, and the bound goes back to \cite{Ro}

\begin{lem}\label{lemursell}
We have
\begin{align*}
\rho(X_1,\ldots,X_n) &=\sum_{T\in \mf T(\ul n)}\rho(T;X_1,\ldots,X_n)\\ 
\rho(T;X_1,\ldots,X_n)&= (-)^{n-1}\delta_{  T\subset \mc G(X_1,\ldots,X_n) }\int_0^1 \prod_{\ell\in T}\ud s(\ell) \prod_{\ell\in \mc G(X_1,\ldots,X_n)\setminus T} \big[1- s^T(\ell)\big]
\end{align*}
Here, $\ul n = \{1,\ldots,n\} $, $\mf T(\ul n) $ is the set of trees on $\ul n$, and
\begin{align*}
s^T(\{i,j\}) &= \min\{s(\ell),\ell \Text{ on the }T\Text{ path linking }i,j\}.
\end{align*}
In particular, $\vert \rho(X_1,\ldots,X_n)\vert\leq  $ number of spanning trees of $\mc G(X_1,\ldots,X_n) $.
\end{lem}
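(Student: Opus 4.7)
The plan is to recognize the identity as the connected-part form of the Brydges--Kennedy--Abdesselam--Rivasseau (BKAR) forest formula. First I would reduce the claim to the polynomial identity in indeterminates $\{z_\ell\}_{\ell\in K_n}$, where $K_n$ is the complete graph on $\ul n$:
\begin{align*}
\sum_{g\subset K_n\Text{ connected}}\prod_{\ell\in g}(-z_\ell)=(-1)^{n-1}\sum_{T\in\mf T(\ul n)}\int_{[0,1]^T}\prod_{\ell\in T}\ud s(\ell)\,z_\ell\prod_{\ell\in K_n\setminus T}\bigl(1-z_\ell s^T(\ell)\bigr).
\end{align*}
Specializing $z_\ell=\delta_{\ell\in\mc G(X_1,\ldots,X_n)}$ converts this into the claimed decomposition of $\rho$: the LHS becomes $\sum_{g\subset\mc G\Text{ connected}}(-1)^{|g|}=\rho(X_1,\ldots,X_n)$, while on the RHS the vanishing factors $z_\ell=0$ for $\ell\notin\mc G$ enforce $T\subset\mc G$ in the tree sum and trivialise all $\ell\notin\mc G$ factors in the product, so the RHS collapses to $\sum_T\rho(T;X_1,\ldots,X_n)$.

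To establish the polynomial identity I would apply the standard BKAR formula to $h(x):=\prod_{\ell\in K_n}(1-z_\ell x_\ell)$. A direct computation gives $\partial_F h(x)=(-1)^{|F|}\prod_{\ell\in F}z_\ell\prod_{\ell\in K_n\setminus F}(1-z_\ell x_\ell)$, and BKAR states
\begin{align*}
\prod_{\ell\in K_n}(1-z_\ell)=h(1)=\sum_{F\subset K_n\Text{ forest}}\int_{[0,1]^F}\prod_{\ell\in F}\ud s_\ell\,(\partial_F h)\bigl(X(s_F)\bigr),
\end{align*}
where $X(s_F)_\ell=s^F(\ell)$ if the endpoints of $\ell$ lie in the same connected component of $F$, and $X(s_F)_\ell=0$ otherwise. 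The key observation is that each cross-component edge contributes the factor $(1-z_\ell\cdot 0)=1$, so the integrand factorises over the blocks of the partition $\pi(F)$ of $\ul n$ into connected components of $F$. Grouping forests by $\pi(F)=\pi$ (so that $F$ is the disjoint union of spanning trees on the blocks of $\pi$) therefore rewrites the above as $\prod_{\ell\in K_n}(1-z_\ell)=\sum_\pi\prod_{B\in\pi}A_B(z|_B)$, where $A_B=1$ for $|B|=1$ and, for $|B|\geq 2$, $A_B(z|_B)$ is precisely the RHS of the target identity restricted to $B$.

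A second decomposition of the same product is obtained from the expansion $\prod_\ell(1-z_\ell)=\sum_g\prod_{\ell\in g}(-z_\ell)$ over all graphs on $\ul n$ by classifying each $g$ according to the partition induced by its connected components: $\prod_\ell(1-z_\ell)=\sum_\pi\prod_{B\in\pi}H_B(z|_B)$, with $H_B=1$ for singletons and $H_B(z|_B)=\sum_{g\Text{ connected on }B}\prod_{\ell\in g}(-z_\ell)$ otherwise. Comparing the two partition decompositions and inducting on $|B|$ forces $A_B=H_B$ on every block; taking $B=\ul n$ yields the desired identity. The bound $|\rho|\leq\#\{\Text{spanning trees of }\mc G(X_1,\ldots,X_n)\}$ then follows at once from $|1-s^T(\ell)|\leq 1$ on $[0,1]^T$ together with the restriction $\delta_{T\subset\mc G}$.

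The only nontrivial technical point is verifying that $(\partial_F h)(X(s_F))$ factorises over the blocks of $\pi(F)$; once this is in place, the inductive comparison of the two partition decompositions is routine, since the singleton base case is trivial.
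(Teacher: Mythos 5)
Your proof is correct, and it takes a genuinely different route from the paper's. The paper proves the partition-of-unity identity
\begin{align*}
\sum_{T\Text{ spanning tree of }g}\int_0^1\prod_{\ell\in T}\ud s(\ell)\prod_{\ell\in g\setminus T}s^T(\ell)=1
\end{align*}
for each connected $g$ by partitioning the integration cube according to which spanning tree Kruskal's algorithm produces from an edge-ordering drawn from the $s$-weights, then inserts this directly into the definition $\rho=(-)^{n-1}\sum_{g\subset\mc G\Text{ conn}}(-1)^{\vert g\vert -n+1}$ and interchanges the sums over $T$ and $g\supset T$. You instead lift the claim to a universal polynomial identity in edge indeterminates $z_\ell$, obtain one partition decomposition of $\prod_\ell(1-z_\ell)$ by applying BKAR to the test function $h(x)=\prod_\ell(1-z_\ell x_\ell)$ and using that cross-block edges contribute trivial factors, obtain a second from the graph expansion classified by connected components, and then identify block activities by induction on block size; the lemma is recovered by the evaluation $z_\ell=\delta_{\ell\in\mc G}$. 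Both proofs are complete: yours has the appeal of being a corollary of the BKAR forest formula, which the paper already invokes, so it unifies the tools used in Section 2; the paper's proof is self-contained combinatorics and exposes the probabilistic "Kruskal partition of unity" mechanism explicitly, which is the classical route to this positivity result. One small remark worth making in a writeup: your inductive step relies on the observation that once $A_B=H_B$ for all $\vert B\vert<n$, every term $\prod_B A_B$ with $\pi\neq\{\ul n\}$ matches $\prod_B H_B$ block by block (since all blocks then have size $<n$), so the two partition expansions cancel except for the $\pi=\{\ul n\}$ contribution — you state this, but it is the load-bearing step and deserves the one-line justification. The final bound follows, as you say, from $0\leq 1-s^T(\ell)\leq 1$ on $[0,1]^T$ and the factor $\delta_{T\subset\mc G}$.
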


\begin{proof}
For any connected $g$,
\begin{align*}
w(T,g) &= \int_0^1   \prod_{\ell\in T}\ud  s(\ell) \prod_{\ell\in g\setminus T} s^T(\ell)
\end{align*}
satisfies
\begin{align}\label{plemurs}
\sum_{T\Text{ spanning tree of }g}w(T,g) = 1
\end{align}
To see this, note that
\begin{align*}
w(T,g) &= \int_0^1 \prod_{\ell\in g}\ud s(\ell) \prod_{\ell\in g\setminus T}\prod_{\ell'\in P^T(\ell)}\chi\Big(s(\ell)<s(\ell')\Big)
\end{align*}
Here, $P^T(\{i,j\}) $ is the path on $T$ from $i $ to $j $. Introduce an arbitrary ordering $l:\ul N\to g$ with $N=\vert g\vert $ and $l$ bijective. Then,
\begin{align*}
w(T,g)&= \sum_{\pi\in \mc S_N} \int\limits_{0<s(l(\pi(N)) )<\cdots<s(l(\pi(1))<1}\prod_{\ell\in g}\ud s(\ell) \prod_{\ell\in g\setminus T}\prod_{\ell'\in P^T(\ell)}\chi\Big(s(\ell)<s(\ell')\Big)
\end{align*}
For a given $\pi $, on the corresponding domain of integration the integrand is $0$ unless for all $\ell\in g\setminus T $, we have $l^{-1}(\ell) > \max\{l^{-1}(\ell'),\ell'\in P^T(\ell)\}  $, in which case it is $1$. At a given $\pi $, this is the case for exactly one tree $T(\pi)$, namely the one constructed by inductively adding to the edge set the edge $l(\pi(i)) $ with lowest possible $i$ such that at each stage the resulting graph is loop free (Kruskals algorithm). Therefore, 
\begin{align*}
w(T,g)&= \sum_{\pi\in \mc S_N} \delta_{T,T(\pi)}\int \limits_{0<s(l(\pi(N)) )<\cdots<s(l(\pi(1))<1}\prod_{\ell\in g}\ud s(\ell) \\
&= \sum_{\pi\in \mc S_N}\frac{\delta_{T,T(\pi)}}{N!}
\end{align*}
and
\begin{align*}
\sum_{T\subset g} w(T,g) &= \frac1{N!}\sum_{\pi\in \mc S_N}\sum_{T\subset g}\delta_{T,T(\pi)} = 1
\end{align*}
Applying (\ref{plemurs}) to the definition of the Ursell functions, we obtain
\begin{align*}
\rho(X_1,\ldots,X_n) &= (-)^{n-1}\sum_{g\subset \mc G(X_1,\ldots,X_n) \Text{ connected}}\sum_{T\Text{ spanning tree of }g}\int_0^1  \prod_{\ell\in T} \ud s(\ell) \prod_{\ell\in g\setminus T} -s^T(\ell)\\
&= (-)^{n-1}\sum_{T\in \mf T(\ul n)}\sum_{T\subset g\subset \mc G(X_1,\ldots,X_n)}\int_0^1  \prod_{\ell\in T} \ud s(\ell) \prod_{\ell\in g\setminus T} -s^T(\ell)\cdot \prod_{\ell\in \mc G\setminus g} 1\\
&= (-)^{n-1}\sum_{\substack{T\in\mf T(\ul n)  \\  T\subset \mc G(X_1,\ldots,X_n) }}\int_0^1  \prod_{\ell\in T} \ud s(\ell) \prod_{\ell\in \mc G(Y_1,\ldots,Y_n)\setminus T} \big[1-s^T(\ell)\big]
\end{align*}
as required. The bound $\vert\rho(X_1,\ldots,X_n)\vert\leq  $ number of spanning trees of $\mc G(X_1,\ldots,X_n) $ is obvious from this.
\end{proof}

\subsubsection{Mayer expansion with large field small field decomposition}

In case a large field small field decomposition is performed, we will have a modified polymer gas representation for $Z(J) $, and will need a modified Mayer expansion to take the logarithm. Introduce ${2^{\mb L}}' = \{(X,Q),\emptyset\neq Q\subset X\subset \mb L\} $ and 
\begin{align*}
\mc P'(\mb L) = \Big\{\{(X_m,Q_m)\}_1^n,\, (X_m,Q_m)\in  {2^{\mb L}}',\, \{X_m\}_1^n\in\mc P(\mb L) \Big\}.
\end{align*}
Introduce also ${\tilde 2}^{\mb L} = \{(Z,X,Q),\, Z\subset 2^{\mb L}, (X,Q)\in {2^{\mb L}}',Z\dot\cap X\neq \emptyset\} $ where $Z\dot\cap X\neq \emptyset $ means $Z\cap X\neq \emptyset $ or $Z=\emptyset $. We have the following result.

\begin{lem}\label{sflfme}
Let $\mc V: 2^{\mb L}\to\mb C $ and $B:{2^{\mb L}}' \to \mb C $ be activities. Define 
\begin{align*}
\mc Z&= \sum_{\Omega\subset \mb L} \exp\left(\sum_{Z\subset\Omega}\mc V(Z) \right) \sum_{\{(X_m,Q_m)\}_1^n\in \mc P'(\Omega^c)} \prod_{m=1}^n B(X_m,Q_m).
\end{align*}
Then, formally,
\begin{align*}
\log\mc  Z &= \sum_{Z\subset \mb L} \mc V(Z) + \sum_{(Z,X,Q)\in {\tilde 2}^{\mb L} } \mc L(Z,X,Q)\\
\mc L(Z,X,Q) &= \sum_{\{(Z_m,X_m,Q_m)\}_1^n\in \mc M(Z,X,Q)} \rho(\{Z_m\cup X_m\}_1^n)\prod_{m=1}^n \dot B(Z_m,X_m,Q_m)\\
\dot B(Z,X,Q) &= \sum_{(\{Z_{m'}\}_1^{n'},\{(X_{m},Q_{m})\}_1^{n})\in \mc C(Z,X,Q) } \prod_{m'=1}^{n'} -\mc V(Z_{m'})\prod_{m=1}^{n} B(X_{m},Q_{m}).
\end{align*}
Here,
\begin{align*}
\mc M(Z,X,Q) &= \Big\{ \{(Z_m,X_m,Q_m)\}_1^n,\, (Z_m,X_m,Q_m)\in {\tilde 2}^{\mb L},\, \cup M_m=M,\,M=Z,X,Q   \Big\}\\
\mc C(Z,X,Q) &= \Big \{ (\{Z_{m'}\}_1^{n'},\{(X_{m},Q_{m})\}_1^{n}),\, Z_{m'}\subset \mb L, \{(X_{m},Q_{m})\}_1^{n}\in \mc P'(X) , \\&\qquad\qquad Z_{m'}\dot\cap X\neq \emptyset,\,\mc G\Big(\{Z_{m'}\}_1^{n'},\{X_{m}\}_1^{n}\Big)\Text{ conn.},\,\cup M_{m\ob '}=M,M=Z,X,Q \Big\}
\end{align*}
\end{lem}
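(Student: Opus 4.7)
The approach is to reduce $\mc Z$ to a standard polymer gas on ${\tilde 2}^{\mb L}$ and then invoke the Mayer formula (\ref{mealg}) together with Lemma \ref{lemursell}. First I factor out the full vacuum: using $\sum_{Z \subset \Omega} \mc V(Z) = \sum_{Z \subset \mb L} \mc V(Z) - \sum_{Z:\, Z \cap \Omega^c \neq \emptyset} \mc V(Z)$, expanding the resulting defect exponential as a power series, and substituting $\Omega^c = \bigcup_m X_m$, I obtain
\begin{align*}
\mc Z = e^{\sum_{Z \subset \mb L} \mc V(Z)} \sum_{n, n' \geq 0} \frac{1}{n!\, n'!}\, {\sum}^{\!*} \prod_{m=1}^n B(X_m, Q_m) \prod_{m'=1}^{n'} \big(-\mc V(Z_{m'})\big),
\end{align*}
where ${\sum}^*$ runs over ordered tuples $(X_m, Q_m)_{m=1}^n$ with the $X_m$ pairwise disjoint nonempty subsets of $\mb L$ and $\emptyset \neq Q_m \subset X_m$, and $(Z_{m'})_{m'=1}^{n'}$ with each $Z_{m'} \subset \mb L$ meeting $\bigcup_m X_m$.

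Second, I recognize the remaining sum as a polymer gas. Form the incidence graph on the labels of the $X_m$'s and $Z_{m'}$'s, with edges exactly where the corresponding sets meet in $\mb L$, and decompose each configuration into connected components. Each component assembles into a super-polymer $(Z, X, Q) \in {\tilde 2}^{\mb L}$, with $X$ the union of the $X_m$ in the component, $Q$ the union of the corresponding $Q_m$, and $Z$ the collection of the $Z_{m'}$ in the component. The factor $1/(n!\, n'!)$ splits multinomially across components; the internal sum over all decompositions of a fixed super-polymer reproduces exactly $\dot B(Z, X, Q)$ and the definition of $\mc C(Z, X, Q)$, while the remaining outer sum runs over unordered collections of super-polymers whose supports $(\cup Z) \cup X$ are pairwise disjoint in $\mb L$ (any shared point would force them into a single component).

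Finally, applying the Mayer machinery of (\ref{mealg}) to this polymer gas and using Lemma \ref{lemursell} to unfold the Ursell function yields
\begin{align*}
\log \mc Z = \sum_{Z \subset \mb L} \mc V(Z) + \sum_{(Z,X,Q) \in {\tilde 2}^{\mb L}} \mc L(Z,X,Q),
\end{align*}
with $\mc L$ exactly as stated. The main obstacle is the second stage: checking that the multinomial split of the symmetry factor $1/(n!\, n'!)$ combined with the sum over decompositions of a single super-polymer reproduces $\dot B(Z, X, Q)$ with the correct signs and enumeration, and that inter-polymer compatibility collapses to disjointness of supports. The asymmetry that two $Z_{m'}$'s in a single super-polymer are allowed to overlap, while two in distinct super-polymers are not, is the delicate point. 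Apart from this bookkeeping, the lemma is a routine consequence of the Mayer expansion developed in the previous subsection.
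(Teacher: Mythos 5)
Your proof follows the same route as the paper: factor out the full vacuum $\exp(\sum_{Z\subset\mb L}\mc V(Z))$, rewrite the $\Omega$-dependent sum as a defect $\exp(-\sum_{Z\cap\Omega^c\neq\emptyset}\mc V(Z))$, expand that exponential, and then regroup the $Z_{m'}$'s and $(X_m,Q_m)$'s by connected components of the intersection graph to produce a polymer gas on ${\tilde 2}^{\mb L}$ on which (\ref{mealg}) and Lemma~\ref{lemursell} are applied. Your version is slightly more explicit about the $1/(n!n'!)$ symmetry factors that arise when ordered tuples are used (the paper absorbs these into its multiset notation $\{\cdot\}_1^{n'}$), but you also leave the key combinatorial verification — the multinomial split of those factors across components and the resulting identification of $\dot B$ — at the same "routine but delicate" level the paper does when it closes with "analogous to the standard Mayer expansion". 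Since the approaches coincide and neither fully spells out the bookkeeping, this is a matching proof, not a different one.
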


\begin{proof}

We compute
\begin{align*}
\mc Z&= \exp\Big(\sum_{Z\subset \mb L} \mc V(Z)  \Big) \sum_{\Omega\subset\mb L}\exp\left(\sum_{Z\cap \Omega^c\neq \emptyset}-\mc V_0(Z)\right) \sum_{\{(X_m,Q_m)\}_1^n\in \mc P'(\Omega^c)} \prod_{m=1}^n B(X_m,Q_m) \\
&= \exp\Big(\sum_{Z\subset \mb L} \mc V(Z)  \Big)\sum_{\Omega\subset\mb L} \sum_{\substack{\{Z_{m'}\}_1^{n'}\\ Z_{m'}\cap \Omega^c\neq \emptyset}}\prod_{m'=1}^{n'}-\mc V(Z_{m'})\sum_{\{(X_m,Q_m)\}_1^n\in \mc P'(\Omega^c)} \prod_{m=1}^n B(X_m,Q_m)  \\
&= \exp\Big(\sum_{Z\subset \mb L} \mc V(Z)  \Big) \sum_{\substack{ \{(Z'_{j},X'_{j},Q'_{j})\}_1^{k} \\ (Z'_{j},X'_{j},Q'_{j})\in {\tilde 2}^{\mb L} \\ (Z'_{j}\cup X'_{j})\cap(Z'_{j'}\cup X'_{j'})=\emptyset    }} \prod_{j=1}^k \dot B(Z'_{j},X'_{j},Q'_{j}).
\end{align*}
In the last step, we collected together sets $X'_{j}\cup Z'_{j} $ according to the connected components of the intersection graph of $\{Z_{m'}\}_1^{n'},\{X_m\}_1^n $. Note that any $Z_{m'} $ intersects at least one $X_m$. The small field region $\Omega $ is identified as $\left(\cup_m X_m\right)^c = \left(\cup_{j}X'_{j}\right)^c $. The Proposition now follows from a Polymer expansion analogous to the standard Mayer expansion.
\end{proof}

\subsubsection{BKAR Formula}

We will arrive at a polymer gas representation of $\mc Z(J) $ by interpolating (in a positivity preserving way) between $\mc Z(J)$ and a product of $\sf N $-dimensional integrals (one for each point in $\mb L$), using the BKAR Taylor forest formula \cite{AR}. This formula states the following. Let $H(s)$ be a smooth function of $s \in [0,1]^{{\sf P}(\mb L)} $, with ${\sf P}(\mb L) $ the full graph (set of all pairs) on $\mb L $. Then
\begin{align}\label{BKAR}
H(1) &= \sum_{\{X_m\}_1^n\in\mc P(\mb L)}\sum_{T_m\in\mf T(X_m)}\int_0^1\prod_{m=1}^n \ud \bd s^{T_m} \left(\prod_{m=1}^{n}\prod_{\ell\in T_m}\partial_{s(\ell)}H\right)( s^{T_1,\ldots,T_n}).
\end{align}
Here, $1 $ is the constant function $s(\{x,y\})=1 $, and
\begin{align*}
\ud \bd s^{T_m} &= \prod_{\ell\in T_m}\ud s(\ell)\\
s^{F}(\{x,y\}) &= \min\{s(\ell),\ell \Text{ on the }F\Text{ path linking }x,y\}
\end{align*}
for any forest $F=\{T_1,\ldots,T_n\}\in\mf F(\mb L) $, the set of forests on $\mb L $. The minimum here is set to $0$ if no path exists between $x $ and $y$. In particular, $\mc P(s^{T_1,\ldots,T_n}) = \{X_m\}_1^n $ a.e., where for any $s\in [0,1]^{{\sf P}(\mb L)} $ we define $\mc P(s)\in\mc P(\mb L) $ to be the finest partition with $s(\{x,y\})=0 $ for any $x,y$ in different elements of $\mc P(s) $. We also define $\mc P(F) = \{X_m\}_1^n $ if $F=\{T_m\}_1^n $ with $T_m\in\mf T(X_m) $. Note that $s^{F}(\ell) = s(\ell) $ if $\ell\in F $. In our application, there will be functions $H(X,s\vert_X) $ such that the following factorization holds
\begin{align*}
\left(\prod_{m=1}^{n}\prod_{\ell\in T_m}\partial_{s(\ell)}H\right)(s^{T_1,\ldots,T_n}) &=  \prod_{m=1}^{n}\left(\prod_{\ell\in T_m}\partial_{s(\ell)}H(X_m;s^{T_m})\right)
\end{align*}
Here, $ s\vert_X \in [0,1]^{{\sf P}(X)} $ is the restriction of $s\in [0,1]^{{\sf P}(\mb L)}$ to pairs in $X\subset \mb L$, and the definition of $s^{T_m}$ is the same as above. For later use, we denote by $d^F(x) $ the coordination number of $x$ in any forest $F \in\mf F(\mb L)$.

\subsubsection{Parameter derivatives of Gaussian integrals}

Let $C \in \Text{End } \mb C^{\mb L\times \N}$ be symmetric and normal, and assume that $\Re C^{-1} $ has strictly positive eigenvalues. Consider the Gaussian measure $\ud\mu_C(\phi) $ on $\Xi^{\mb L} $ associated to $C$. The covariance will be interpolated, and for any $s\in\mb C^{{\sf P}(\mb L)} $ we denote $(C_{ s})(x,y) = C(x,y)  s(\{x,y\}) $. Here and in the following, we set $s(\{x,x\}) \equiv 1 $. Note that this matrix is still symmetric, but not necessarily normal. We will later derive bounds on the eigenvalues of $\Re C_{s}^{-1} $ for $s = s^{T_1,\ldots,T_n} $. In particular, they will be strictly positive, so that also $\ud \mu_{C_{s}}(\phi) $ can be defined. Given this, we have the standard integration by parts formula
\begin{align*}
\partial_{s(\{x,y\})} \int \ud \mu_{C_{s}}(\phi) F(\phi) &= \int \ud \mu_{C_{s}}(\phi) C(\{x,y\})\partial_{\{x,y\}}^2F(\phi)
\end{align*}
where $$C(\{x,y\}) \partial_{\{x,y\}}^2 = \sum_{{\sf m},{\sf n}=1}^{\sf N} C\big((x,{\sf m}),(y,{\sf m})\big) \partial_{\phi(x,{\sf m})}\partial_{\phi(y,{\sf n})}. $$ We will use it for compactly supported piecewise smooth functions and for integrable smooth functions. In the first case, the derivatives are taken in the weak sense. Interpolation as described in the previous section allows to restrict $\ud\mu_C(\phi) $ to integration on $\phi \in\Xi^X $ for some $X\subset \mb L $. Being somewhat sloppy, we reflect such restrictions by writing $\ud\mu_C(\phi\vert_X) $.

\subsection{Algebra for the logarithm of the partition function}

We derive a polymer gas representation for $\mc Z(J) $ under the following assumption on the form of the interaction: $V(\phi;J)  =  V_1(\phi;s;J)\vert_{s=1} + V_2(\phi;s;J)\vert_{s=1} $, with functions $ V_i(\phi;s;J) $ of $\phi,J\in\Xi^{\mb L} $, $s\in[0,1]^{{\sf P}(\mb L)} $ satisfying
\begin{align}\label{asintv1}
V_1(\phi;s;J) = \sum_{X\in\mc P(s)} V_1(\phi\vert_{X};s\vert_{X};J\vert_X)\\\label{asintv2}
V_2(\phi;s;J) = \sum_{X\in\mc P(s)} \sum_{x\in X} V_2(x;\phi\vert_X;s\vert_X;J\vert_X)
\end{align}
for some $V_2(x;\phi;s;J) $. The examples we have in mind are as in Theorem \ref{mainthm}:

\begin{prop}\label{aipgr}
(i) Suppose that 
\begin{align*}
V_1(\phi;J) &= \sum_{\bm\xi,\bm\zeta\in\bd L} v_1(\bm\xi;\bm\zeta) \phi(\bm\xi)J(\bm\zeta)
\end{align*}
is a generic power series. Then, $V_1(\phi;J) = V_1(\phi;s;J)\vert_{s=1} $, where 
\begin{align*}
V_1(\phi;s;J) &= \sum_{\bm\xi,\bm\zeta\in\bd L} \prod_{\ell\in \mc T(\bm\xi\circ\bm\zeta) } s(\ell) v_1(\bm\xi;\bm\zeta) \phi(\bm\xi)J(\bm\zeta)
\end{align*}
satisfies (\ref{asintv1}). Here $\mc T(\bm\xi) $ is an arbitrary minimal spanning tree of $\supp\bm\xi \subset \mb L$.\\
(ii) Suppose that
\begin{align*}
V_2(\phi;J) &= -\sum_{x\in\mb L} \sum_{\substack{\bm\xi,\bm\zeta\in\bd L \\ n(\bm\xi\circ\bm\zeta)  \leq 2M }} v_2(x;\bm\xi;\bm\zeta) \phi(\bm\xi)J(\bm\zeta)
\end{align*}
is a polynomial. Then, $V_2(\phi;J) = V_2(\phi;s;J)\vert_{s=1} $, where 
\begin{align*}
V_2(\phi;s;J) &= -\sum_{x\in\mb L}  \sum_{\substack{\bm\xi,\bm\zeta\in\bd L \\ n(\bm\xi\circ\bm\zeta)  \leq 2M }} v_2(x;\bm\xi;\bm\zeta) \prod_{\xi'\in \bm\xi\circ\bm\zeta}s(\{x,x'\}) \phi(\bm\xi)J(\bm\zeta)
\end{align*}
satisfies (\ref{asintv2}). 
\end{prop}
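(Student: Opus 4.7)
Both parts rest on the same elementary observation: by definition, the blocks of $\mc P(s)$ are precisely the connected components of the subgraph $\{\ell\in{\sf P}(\mb L):s(\ell)\neq 0\}$, so any edge $\{x,y\}$ joining two distinct blocks of $\mc P(s)$ satisfies $s(\{x,y\})=0$. The $s$-deformations chosen for $V_1$ and $V_2$ are designed so that each term in the sum contains an explicit product of $s$-factors which vanishes as soon as $\bm\xi\circ\bm\zeta$ (together with the base point $x$, in the case of $V_2$) straddles more than one block. I would exploit this vanishing to show that only the ``block-diagonal'' terms survive and that they reorganize into the claimed block-factorized sum.

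\textbf{Part (i).} The identity $V_1(\phi;s;J)\vert_{s=1}=V_1(\phi;J)$ is immediate. To verify (\ref{asintv1}), I would split the sum over $(\bm\xi,\bm\zeta) $ according to the blocks of $\mc P(s)$ met by $\supp(\bm\xi\circ\bm\zeta)$. If this support lies in a single block $X\in\mc P(s)$, the tree $\mc T(\bm\xi\circ\bm\zeta)$ and its weight $\prod_{\ell}s(\ell)$ only involve edges inside $X$, so the summand coincides with its contribution to $V_1(\phi\vert_X;s\vert_X;J\vert_X)$. Otherwise $\supp(\bm\xi\circ\bm\zeta)$ meets at least two distinct blocks, and since $\mc T(\bm\xi\circ\bm\zeta)$ is a spanning (hence connected) tree, it must contain some edge $\ell$ whose endpoints lie in different blocks; for such $\ell$ we have $s(\ell)=0$, so $\prod_{\ell\in\mc T(\bm\xi\circ\bm\zeta)}s(\ell)=0$ and the summand drops out. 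Only the connectedness of $\mc T(\bm\xi\circ\bm\zeta)$ is used here; its minimality (w.r.t.\ the metric $d$) is immaterial for this algebraic identity and will only be exploited later in the tree-decay norm bounds.

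\textbf{Part (ii).} Setting
\begin{align*}
V_2(x;\phi;s;J) &:= -\sum_{\substack{\bm\xi,\bm\zeta\in\bd L\\ n(\bm\xi\circ\bm\zeta)\leq 2M}} v_2(x;\bm\xi;\bm\zeta)\prod_{\xi'\in\bm\xi\circ\bm\zeta}s(\{x,x'\})\,\phi(\bm\xi)J(\bm\zeta),
\end{align*}
one again has $V_2(\phi;1;J)=V_2(\phi;J)$. For (\ref{asintv2}), the star-shaped product $\prod_{\xi'\in\bm\xi\circ\bm\zeta}s(\{x,x'\})$ vanishes as soon as some $x'\in\supp(\bm\xi\circ\bm\zeta)$ lies in a block of $\mc P(s)$ different from the one containing $x$, since then $s(\{x,x'\})=0$. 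Hence, for fixed $x$, only those summands with $\{x\}\cup\supp(\bm\xi\circ\bm\zeta)$ contained in the unique block $X\ni x$ survive. Grouping the outer sum $\sum_{x\in\mb L}$ by that block produces precisely $\sum_{X\in\mc P(s)}\sum_{x\in X}V_2(x;\phi\vert_X;s\vert_X;J\vert_X)$, as required.

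\textbf{Expected obstacle.} There is essentially none of substance: the proposition is a bookkeeping identity expressing that the particular $s$-weights chosen in the two $s$-deformations are supported on $(\bm\xi,\bm\zeta)$ respecting the block decomposition $\mc P(s)$. The only thing to watch is to ensure that the factors $s\vert_X$ inside the tree product (part (i)) or the star product (part (ii)) are well-defined as weights on pairs within $X$ — which is automatic, since the surviving terms only involve such pairs. The conceptual point, that the interpolation is forced to be ``star-shaped around $x$'' for $V_2$ rather than a general tree, is already baked into the definition and reflects that $V_2$ is indexed by a distinguished site $x$.
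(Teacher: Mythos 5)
Your proof is correct and is simply the spelled‑out version of the argument the paper declares ``Immediate'': the tree (resp.\ star) $s$-weights vanish once the support crosses a block of $\mc P(s)$, and the surviving terms regroup block by block. The only small point worth noting, which you implicitly use, is the paper's convention $s(\{x,x\})\equiv 1$, without which the diagonal factors in the star product for $V_2$ would not be well defined.
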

\begin{proof}
Immediate.
\end{proof}
\noindent

\begin{rem}\label{rempositivity}
As in the assumptions of Theorem \ref{mainthm}, the part $V_1 $ of the interaction will be small in comparison to the kinetic energy in the domain of integration. It is small absolutely in the region $\Vert\phi\Vert_\infty\leq r $ for $ r\ll R$, and small relatively to the kinetic energy in the region $r\leq\vert\phi(x)\vert\leq R $, by an appropriately large choice (\ref{choicer}) of $r$ (in the region $r\ll R $).\\ 
The part $V_2$ is assumed to satisfy a positivity property of the kind
\begin{align*}
\Re V_2(\phi;0)\leq -\lambda_\phi^{2M}\cdot c_{pos}\cdot \sum_{x\in \mb L}\phi(x)^{2M} + c_{pos}'\cdot \vert X\vert
\end{align*}
for some small $\lambda_\phi>0 $ and constants $c_{pos},c_{pos}' $. It is further assumed that the interpolation of (ii) does not destroy this positivity. Note that $V_2(\phi;J)-V_2(\phi;0) $ has degree at most $2M-1 $ in $\phi. $\\
As an example, let 
\begin{align*}
V_2(\phi) &= -\sum_{x,x'\in\mb L} \phi(x)^2 v(x,x') \phi(x')^2 
\end{align*}
be a two body interaction with $v\in\Text{End }\mb C^{\mb L} $ positive definite with minimal eigenvalue $c_v^2\mf v_2\geq \lambda_\phi^4 $. This satisfies the above positivity property with
\begin{align*}
c_{pos} &= 1 & c_{pos}' &= 0
\end{align*}
and so does
\begin{align*}
V_2(\phi; s) &= -\sum_{x\in \mb L} \left(\sum_{x'\in \mb L}v^\half(x,x')s(\{x,x'\})^2 \phi(x')^2\right)^2,
\end{align*}
for $s = s^{T_1,\ldots,T_n} $, because $v^\half(x,x')s(\{x,x'\})^2 $ has minimal eigenvalue at least $c_v\mf v_2^\half $ (compare Lemma \ref{lemhadamard} below). More generally, also the ``effective action style'' interaction $V_2(\phi;J) = V_2(\phi+ DJ) $ for $D\in\text{End }\Xi^{\mb L} $ and its interpolation according to the proposition has the positivity property, and so does, of course, $$ V_2(\phi;J)= \sum_{x\in X} p\Big(x;\lambda_\phi\phi(x);J(x)\Big) $$ for any uniformly bounded above polynomial $p $ of degree $2M$ in $\phi $ (interpolation is trivial here).

\erem
\end{rem}

\begin{rem}\label{remdel}

Sometimes (usually not in single scale models) one studies potentials $V(\phi;J) $ that depend on ``discrete derivatives'' $\partial\phi $ of the field, such as $\partial\phi(x) = \epsilon^{-1}(\phi(x+\epsilon e_\mu) -\phi(x)) $. The above interpolation essentially introduces Dirichlet boundary conditions and is therefore undesirable in this situation. \\
Although we do not investigate this situation further, we sketch a way to find a polymer gas representation for
\begin{align*}
A=\exp\sum_{x_1,\ldots,x_k\in\mb L} v(x_1,\ldots,x_k) \partial^1\phi(x_1)\cdots\partial^n\phi(x_k),
\end{align*}
where $\partial^j\phi(x) $ is linear in $\phi $, depends on $\phi(y) $ for $y\in \square^j(x)\ni x $, and must not be interpolated. For simplicity, here ${\sf N}=1 $. Define the interpolation
\begin{align*}
A(s) &= \exp \sum_{x_1,\ldots,x_k\in\mb L}  v(x_1,\ldots,x_k)\prod_{\ell\in \mc T(\{x_1,\ldots,x_k\})} \mc S(\ell;s)\partial^1\phi(x_1)\cdots\partial^k\phi(x_k)
\end{align*}
with
\begin{align*}
\mc S(\{x_j,x_{j'}\};s) &= \prod_{\{z,z'\}\in {\sf P}\big(\square^j(x_j),\square^{j'}(x_{j'})\big)} s(\{z,z'\}).
\end{align*}
Here, $ {\sf P}\Big(\square^j(x_j),\square^{j'}(x_{j'})\Big) = \{\{z,z'\},z\in\square^j(x_j), z'\in\square^{j'}(x_{j'}),z\neq z' \} $. Note that no factor $ s(\{z,z'\})$ appears more than linearly in $\mc S $. The BKAR formula gives
\begin{align*}
A&=  \sum_{\{X_m\}_1^n\in\mc P(\mb L)} \prod_{m=1}^n A(X_m)\\
A(X) &= \sum_{T\in\mf T(X)}\int \ud \bd s^{T} \prod_{\ell\in T}\partial_{s(\ell)}  \exp \Bigg[\sum_{x_1,\ldots,x_k\in X}  v(x_1,\ldots,x_k) \\& \qqquad\times\prod_{\ell\in \mc T(\{x_1,\ldots,x_k\})} \mc S(\ell;s^{T})\partial^1\phi(x_1)\cdots\partial^k\phi(x_k)\Bigg]
\end{align*}
This is because whenever $x_j\in X_m $ and $x_{j'}\in X_{m'} $, $m\neq m' $, by $\square^j(x)\ni x$, $$ \mc S(\{x_j,x_{j'}\};s^{T_1,\ldots,T_n}) \propto s_{\{x_j,x_{j'}\}} = 0. $$ In particular, for $\ell\in{\sf P}(X_m), $ $ \mc S(\ell;s^{T_1,\ldots,T_n}) $ only depends on $s_{\{z,z'\}} $ for $z,z'\in X_m $. This also implies that the above sum over the $x_j $ actually only includes terms such that not only $x_j\in X $, but $\square^j(x_j)\subset X $. This gives the desired polymer gas representation.

\erem
\end{rem}

\noindent
We now state the expansion formula for $\log \mc Z(J) $ in the case when no large field small field decomposition is performed.

\begin{thm}\label{algebra}
Under the assumptions (\ref{asintv1}), (\ref{asintv2}), we have the formal series expansion
\begin{align*}
\log \mc Z(J) &=  \sum_{X\subset\mb L } W(X;J\vert_X)
\end{align*}
with $W(X;J\vert_X) $ given by (\ref{mealg}) (argument $J$ suppressed there) with 
\begin{align*}
A(X;J\vert_X) &= \sum_{T\in \mf T(X)} \bm\int\limits_{T}\ud\phi\vert_X\ud s\vert_X A'(X;\phi\vert_X;s^T;J\vert_X)\\
A'(X;\phi\vert_X;s^T;J\vert_X)&= \exp\left( V_1(\phi\vert_{X};s^T;J\vert_X)+\sum_{x\in X} V_2(x;\phi\vert_X;s^T;J\vert_X) \right).
\end{align*}
Here, we use the operator
\begin{align*}
\bm\int\limits_{T} \ud \phi\vert_X\ud s\vert_X&=\sum_{F_C\dot\cup F_V=T}  \int_0^1 \ud\bd s^{T}  \int \ud\mu_{C_{s^{T}}}(\phi\vert_X) \\&\qquad\qquad\qquad \prod_{\ell\in F_C}C(\ell)\partial_\ell \chi_R(\phi\vert_{X})\prod_{\ell\in F_V}\partial_{s(\ell)} .
\end{align*}

\end{thm}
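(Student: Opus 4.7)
The plan is to apply the BKAR forest formula (\ref{BKAR}) to an interpolated version of $\mathcal{Z}(J)$, identify the resulting expression as a polymer gas, and then take the logarithm using the Mayer expansion (\ref{mealg}). Concretely, I would introduce
\[
H(s) = \int d\mu_{C_s}(\phi)\,\chi_R(\phi)\exp\bigl(V_1(\phi;s;J) + V_2(\phi;s;J)\bigr), \qquad s\in[0,1]^{{\sf P}(\mb L)},
\]
so that $H(1) = \mc Z(J)$, and verify that when $s = s^{T_1,\ldots,T_n}$ (with $\{X_m\} = \mc P(s) = \mc P(\{T_m\})$), every factor in $H(s)$ factorizes block-wise over the partition: the Gaussian measure because $C_s$ is block-diagonal in the sense that $C_s(x,y)=0$ whenever $x,y$ lie in different $X_m$ (hence $d\mu_{C_s} = \prod_m d\mu_{C_{s^{T_m}}}(\phi\vert_{X_m})$), the characteristic function $\chi_R$ because it is ultra-local by construction, and $V_1$, $V_2$ by the assumptions (\ref{asintv1}), (\ref{asintv2}) which are precisely tailored for this. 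A first check is that $\Re C_s^{-1}$ stays strictly positive at such values of $s$, so that the block-wise Gaussian integrals are well-defined; this will need the Hadamard-type lemma referenced in Remark \ref{rempositivity}, but I defer its statement to the bounds section.

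Next, I would compute the mixed derivative $\prod_{\ell\in T_m}\partial_{s(\ell)}$ appearing in (\ref{BKAR}). For each edge $\ell \in T_m$ there are two sources of $s$-dependence: the measure $d\mu_{C_s}$ and the explicit $s$ in $V_1+V_2$. Using the Gaussian integration-by-parts identity recalled in the excerpt, a derivative hitting the measure can be traded for $C(\ell)\partial_\ell^2$ acting on the rest of the integrand. Since $\chi_R(\phi) = \prod_{x}\chi(|\phi(x)|\leq R)$ is ultra-local, the two field derivatives $\partial_\ell^2$ at $\ell=\{x,y\}$ cleanly split between the characteristic factors at $x$ and $y$; this is where the weak/distributional reading of $\partial_\ell\chi_R$ is needed. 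Grouping, for every choice of $\ell$ one either integrates by parts and pins the derivative onto $\chi_R$ (declaring $\ell \in F_C$, producing $C(\ell)\partial_\ell\chi_R$) or leaves the $\partial_{s(\ell)}$ to act directly on $A'(X;\phi;s^T;J) = \exp(V_1 + \sum_x V_2)$ (declaring $\ell\in F_V$). Summing over all subsets $F_C\dot\cup F_V = T_m$ reproduces exactly the operator $\bm\int_{T_m}d\phi\vert_{X_m}\,ds\vert_{X_m}$.

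Combining these two steps yields the polymer-gas representation
\[
\mc Z(J) \;=\; \sum_{\{X_m\}_1^n\in\mc P(\mb L)}\prod_{m=1}^{n} A(X_m;J\vert_{X_m})
\]
with $A(X;J\vert_X)$ as defined in the theorem. The logarithm is then obtained by a direct application of the Mayer expansion (\ref{mealg}), reading off $W(X;J\vert_X)$ from the formula there. Since the theorem only asserts a formal series identity, convergence is not at issue in this section; all bounds on $\rho$, $A(X;\cdot)$ and the resulting cluster sums are postponed to the analytic sections.

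The main obstacle I anticipate is the careful handling of the distributional derivatives $\partial_\ell\chi_R$: one needs to justify that the integration-by-parts identity for Gaussians, as stated in the excerpt for smooth or compactly supported piecewise-smooth integrands, can be iterated in the presence of the step function $\chi_R$ without producing additional boundary terms beyond those already captured by the $F_C$ factors. The other delicate point is purely bookkeeping: one must match the $2^{|T_m|}$ ways the $s$-derivatives can be distributed (by the Leibniz rule on $d\mu_{C_s}\cdot \chi_R\cdot e^V$) with the single sum over $F_C\dot\cup F_V = T_m$ in $\bm\int_{T_m}$, which relies precisely on the ultra-local factorization of $\chi_R$ so that the two endpoint derivatives from $C(\ell)\partial_\ell^2$ sit on independent factors.
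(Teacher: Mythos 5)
Your strategy coincides with the paper's: interpolate $\mc Z(J;s)=\int d\mu_{C_s}\chi_R\,e^{V(\cdot;s;J)}$, apply BKAR, factorize the integrand over $\mc P(\{T_m\})$ via (\ref{asintv1})--(\ref{asintv2}), the ultralocality of $\chi_R$, and the block-diagonality of $C_{s^F}$, convert $s$-derivatives that hit the measure into $\phi$-derivatives by Gaussian integration by parts (giving the $F_C$/$F_V$ split), and exponentiate with (\ref{mealg}). This is exactly the paper's (terse) proof, and deferring the positivity of $\Re C_{s^F}^{-1}$ to a Hadamard-type statement matches Lemma~\ref{lemhadamard}.

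One clarification is needed on where the $F_C$-derivatives land. After integration by parts, $C(\ell)\partial_\ell^2$ acts on the whole remaining integrand $\chi_R\cdot e^V$; by the Leibniz rule each of its two legs can land on $\chi_R$ or on $e^V$. The operator $\bm\int_T$ should therefore be read as a composition: apply $\prod_{\ell\in F_V}\partial_{s(\ell)}$ to $A'$, multiply by $\chi_R$, and then apply $\prod_{\ell\in F_C}C(\ell)\partial_\ell$ to the resulting product. Your phrasing (``pins the derivative onto $\chi_R$'' and that the two endpoint derivatives ``sit on independent factors'' of $\chi_R$) reads as if the $F_C$-derivatives land only on $\chi_R$, which would drop the Leibniz cross-terms and produce a different operator. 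The theorem's own notation is ambiguous on this point, but the proof of Proposition~\ref{boundce} disambiguates it: there the $F_C$-legs are explicitly partitioned into a subset $\mc L_1$ hitting $A'$ and a remainder hitting the characteristic functions (which then, via a further weak integration by parts, generates additional legs on $A'$ or on the Gaussian). Related side note: the bijection between the $2^{|T_m|}$ Leibniz choices (measure vs.\ interaction) and the sum over $F_C\dot\cup F_V=T_m$ does not itself rely on ultralocality of $\chi_R$; ultralocality is used for the block factorization of $\chi_R$ over $\mc P(\{T_m\})$ and, downstream, to keep the weak integration by parts inside each $\chi(|\phi(x)|\le R)$ tractable.
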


\begin{proof}
Apply the BKAR formula to the interpolation
\begin{align*}
\mc Z(J) &= \mc Z(J;s)\vert_{s=1}\\
\mc Z(J;s) &= \int  \ud \mu_{C_s}(\phi) \chi_{R}(\phi)e^{V(\phi;s;J)}.
\end{align*}
Use the factorization of the integrand implied by (\ref{asintv1}), (\ref{asintv2}), the ultralocal structure of $\chi_{R}(\phi) $ and
\begin{align*}
\int\ud \mu_{C_{s^{T'_1,\ldots,T'_k} } }(\phi) = \prod_{j=1}^k \int\ud \mu_{C_{s^{T'_j} } }(\phi\vert_{Y_j}),
\end{align*}
with $\{Y_j\}_1^k = \mc P(s^{T'_1,\ldots,T'_k}) $

\end{proof}

\noindent
The expansion formula with large field small field decomposition is:

\begin{thm}\label{algebralfsf}
Assume (\ref{asintv1}) and (\ref{asintv2}), and introduce the integral operator
\begin{align*}
\bm\int\limits_{T,Q } \ud \phi\vert_X\ud s\vert_X&=\sum_{F_C\dot\cup F_V=T}  \int_0^1 \ud\bd s^{T}  \int \ud\mu_{C_{s^{T}}}(\phi\vert_X) \\&\qquad\qquad\qquad \prod_{\ell\in F_C}C(\ell)\partial_\ell \chi_{Q^c}(\phi\vert_{Q^c})\chi_Q^c(\phi\vert_Q) \prod_{\ell\in F_V}\partial_{s(\ell)} 
\end{align*}
with the characteristic functions ($\vert \phi(x)\vert $ the euclidean norm on $\Xi $)
\begin{align*}
\chi_{Q^c}(\phi\vert_{Q^c}) &= \prod_{x\in Q^c}\chi\Big(\vert \phi(x)\vert\leq r\Big)\\
\chi_{Q}^c(\phi\vert_{Q}) &= \prod_{x\in Q}\chi\Big(r<\vert \phi(x)\vert \leq R\Big).
\end{align*}
($r>0$ is arbitrary at this stage and will be chosen later.) Then, $\log\mc  Z(J) $ is given as in Lemma \ref{sflfme} (argument $J$ suppressed there) with $\mc V $ given by (\ref{mealg}) from 
\begin{align*}
A_s(Z;J\vert_Z) &= \sum_{T\in\mf T(Z)} \bm\int\limits_{T,\emptyset } \ud \phi\vert_Z\ud s\vert_Z A'(Z;\phi\vert_Z;s^T;J\vert_Z)
\end{align*}
and
\begin{align*}
B(X,Q;J\vert_X) &= \sum_{T\in \mf T(X)} \bm\int\limits_{T,Q}\ud\phi\vert_X\ud s\vert_X A'(X;\phi\vert_X;s^T;J\vert_X)
\end{align*}
\end{thm}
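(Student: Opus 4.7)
The plan is to insert a small-field/large-field decomposition into the representation of $\mc Z(J)$, apply BKAR exactly as in Theorem \ref{algebra}, and then invoke Lemma \ref{sflfme} to take the logarithm. First I would write the characteristic function $\chi_R$ as the ultralocal sum
\begin{align*}
\chi_R(\phi) = \prod_{x\in\mb L}\big[\chi(|\phi(x)|\leq r) + \chi(r<|\phi(x)|\leq R)\big] = \sum_{Q\subset\mb L}\chi_{Q^c}(\phi|_{Q^c})\,\chi_Q^c(\phi|_Q),
\end{align*}
turning the integral for $\mc Z(J)$ into a sum over the large-field region $Q\subset\mb L$ of Gaussian integrals with the threshold $r$ enforced on $Q^c$ and $r<|\phi|\leq R$ enforced on $Q$.

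For each fixed $Q$, I would apply BKAR to the interpolation $\mc Z(J;s)$ used in Theorem \ref{algebra}, now with the two extra characteristic functions in place. Since both characteristic functions are ultralocal, the factorization properties (\ref{asintv1}), (\ref{asintv2}) of $V_1,V_2$, together with the factorization of $\mu_{C_s}$ over the blocks of $\mc P(s^{T_1,\ldots,T_n})$, survive unchanged. The output is again a product over polymers $\{X_m\}\in\mc P(\mb L)$, and the per-polymer activity depends only on $Q_m := X_m\cap Q$: if $Q_m=\emptyset$ the polymer is entirely in the small-field sector and the activity is $A_s(X_m;J|_{X_m})$ (i.e.\ $\bm\int_{T,\emptyset}$ applied to $A'$); if $Q_m\neq\emptyset$ the activity is $B(X_m,Q_m;J|_{X_m})$ as defined in the statement.

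Next I would exchange the combined sum $\sum_Q\sum_{\{X_m\}\in\mc P(\mb L)}$ for a sum over the small-field region $\Omega:=\bigcup_{m:Q_m=\emptyset}X_m$, a partition $\{Z_m\}\in\mc P(\Omega)$ of its small-field polymers, and a collection $\{(X_m,Q_m)\}\in\mc P'(\Omega^c)$ of its large-field polymers (with $Q=\bigcup_m Q_m$ recovered automatically). This yields
\begin{align*}
\mc Z(J) = \sum_{\Omega\subset\mb L}\Big(\sum_{\{Z_m\}\in\mc P(\Omega)}\prod_m A_s(Z_m;J|_{Z_m})\Big)\sum_{\{(X_m,Q_m)\}\in\mc P'(\Omega^c)}\prod_m B(X_m,Q_m;J|_{X_m}).
\end{align*}
The inner partition sum over $\Omega$ is exponentiated by (\ref{mealg}) applied to the activity $A_s$, producing $\exp(\sum_{Z\subset\Omega}\mc V(Z))$ with $\mc V$ precisely the small-field activity described in the theorem. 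This puts $\mc Z(J)$ in the exact form hypothesized by Lemma \ref{sflfme}, whose conclusion is the announced series for $\log\mc Z(J)$.

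The main bookkeeping obstacle is verifying that the reindexing $(Q,\{X_m\})\leftrightarrow(\Omega,\{Z_m\},\{(X_m,Q_m)\})$ is bijective and that the $\Omega^c$-side large-field polymers from BKAR meet the constraints of $\mc P'(\Omega^c)$ (nonempty $Q_m\subset X_m$, disjoint $X_m$ covering $\Omega^c$). Every $(Q,\{X_m\})$ determines $\Omega$ as the union of those $X_m$ disjoint from $Q$, and then $\{Z_m\}$, $\{(X_m,Q_m)\}$ canonically, while conversely $Q=\bigcup_m Q_m$ and the concatenated partition $\{Z_m\}\cup\{X_m\}$ reconstruct the original data. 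All remaining manipulations are the same as those used in the proof of Theorem \ref{algebra}, now simply split between the two sectors.
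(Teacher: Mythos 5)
Your proposal is correct and follows essentially the same route as the paper: insert the ultralocal identity $\chi_R=\sum_Q\chi_{Q^c}\chi_Q^c$, run BKAR exactly as for Theorem \ref{algebra}, identify the small-field blocks ($Q_m=\emptyset$) with $A_s$ and the rest with $B$, reindex the double sum as a sum over $\Omega$ and elements of $\mc P(\Omega)\times\mc P'(\Omega^c)$, exponentiate the small-field part via (\ref{mealg}), and invoke Lemma \ref{sflfme}. The only immaterial difference is that you split $\chi_R$ over all of $\mb L$ before applying BKAR, whereas the paper splits it block by block after BKAR; the two orders of summation commute.
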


\begin{proof}
Proceed as in the proof of Theorem \ref{algebra}. Insert the identity
\begin{align*}
\chi_R(\phi\vert_X) &= \sum_{Q\subset X} \chi_{Q^c}(\phi\vert_{Q^c})\chi_Q^c(\phi\vert_Q).
\end{align*}
This gives
\begin{align*}
\mc Z(J) &= \sum_{\Omega\subset \mb L} \sum_{\{Z_m\}_1^n\in\mc P(\Omega)} \prod_{m=1}^n A_s(Z_m;J\vert_{Z_m}) \times \sum_{\{(X_{m'},Q_{m'})\}_1^{n'}\in \mc P'(\Omega^c)} \prod_{m'=1}^{n'} B(X_{m'},Q_{m'};J\vert_{X_{m'}})
\end{align*}
Use (\ref{mealg}) to exponentiate the first factor in the sum over $\Omega $ as required by Lemma \ref{sflfme}.
\end{proof}

\begin{rem}\label{remcharlittler}
The introduction of the characteristic functions $\chi_Q^c(\phi\vert_Q)  $ are a tool for treating bounds on the $V_1 $ part of the interaction. In this way, they are implicitly used also for showing the convergence of the expansion in Theorem \ref{algebra}. \\
They could also be used to extract somewhat better bounds for the expansion of Theorem \ref{algebralfsf} as compared to Theorem \ref{algebra}. Indeed, we have for the first term contributing to $\log\mc  Z $ in Lemma \ref{sflfme} $$\sum_{Z\subset \mb L} \mc V(Z) = \log \int \ud\mu_C(\phi) \chi_r(\phi)e^{V(\phi;J)} , $$ and our choice of $r \ll R$ will be so that the ordinary perturbation theory (expand the exponential) of this integral is absolutely convergent. For the other terms in Lemma \ref{sflfme}, thanks to the mass of $C$, the characteristic functions $\chi_Q^c(\phi\vert_Q) $ will allow to extract one nonperturbatively small factor for each point in $Q$.\\ 
It might be desirable to also get small factors from other positive contributions to the kinetic energy $\langle \phi, C^{-1}\phi\rangle $. This would involve characteristic functions $\chi_Q^c $ which are not a product of local contributions. Similarly to the situation described before in Remark \ref{remallowedchi}, this can at the moment be handled only if $R=\infty$. For single scale models, it is unnecessary. We leave generalizations in this direction to future work in the multi scale setting.
\erem
\end{rem}

\begin{rem}\label{remflowchart}
Theorems \ref{algebra} and \ref{algebralfsf} as a flow chart:
\begin{align*}
\begin{array}{lllll}
&&V(\phi;J) &  &  \vspace{10pt}\\
&&\downarrow \Text{Prop. \ref{aipgr}} & & \vspace{10pt}\\
&\phantom{A'(X;\phi;J)}&A'(X;\phi;s;J) & &\vspace{10pt}\\
&\swarrow\Text{Thm. \ref{algebra}} & \downarrow\Text{Thm. \ref{algebralfsf}} & \searrow&\Text{Thm. \ref{algebralfsf}, (\ref{mealg})} \vspace{10pt}\\
A(X;J)& & B(X,Q;J) && \mc V(Z;J) \vspace{10pt}\\
\downarrow\Text{(\ref{mealg})} & & \downarrow \Text{Lem. \ref{sflfme}} &\swarrow &\downarrow  \vspace{10pt}\\
\log \mc Z(J) & & \mc L(Z,X,Q;J) & & \downarrow\vspace{10pt}\\
& & \downarrow \Text{Lem. \ref{sflfme}} & \swarrow & \vspace{10pt}\\
& & \log \mc Z(J) & & 
\end{array}
\end{align*}
Here and from now on, we omit the $\vert_X $ in expressions like $A'(X;\phi\vert_X;s\vert_X;J\vert_X) $.
\end{rem}

\section{Norms}

Involved in our construction of $\log \mc Z(J) $ are
\begin{itemize}
\item An interaction $V(\phi;J) $, analytic in its arguments $\phi,J\in\Xi_{\mb C}^{\mb L} $ in an appropriate domain. Here, $\Xi_{\mb C} = \mb C^{{\sf N}} $ is short for the complexified state space of fields.
\item An interpolated interaction $V(\phi;s;J) $, analytic in its arguments $\phi,J\in\Xi_{\mb C}^{\mb L} $ and $s\in\mb C^{{\sf P}(\mb L) } $ in an appropriate domain. In the case of $V_2 $, we also have a family, indexed by $x\in\mb L $, of analytic functions $V_2(x;\phi;s;J)  $, which depends only on $s(\{x,y\}),\,y\in \mb L $ , and only on $\phi(y) ,J(y)$ with $y\in \mc P(s)(x) $.
\item A family, indexed by $X\subset \mb L $, of analytic functions $A'\big(X;\phi;s;J\big) $. It depends on $\phi(x),J(x) $ only for $x\in X $.
\item Three families, indexed by $X\subset \mb L  $, of analytic functions $A\big(X;J\big) $ , $\dot A\big(X;J\big) $ and $W(X;J) $. All these families only depend only on $J(x)$ for $x\in X$.
\item Three more families, indexed by $Z\subset \mb L  $, of analytic functions $A_s\big(Z;J\big) $ , $\dot A_s\big(Z;J\big) $ and $\mc V(Z;J) $. All these families only depend only on $J(x)$ for $x\in Z$.
\item A family, indexed by $(X,Q)\in {2^{\mb L}}' $, of analytic functions $B\big(X,Q;J\big)$. 
\item Families, indexed by $(Z,X,Q)\in \tilde 2^{\mb L} $, of analytic functions $\dot B(Z,X,Q;J)$, $\mc L(Z,X,Q;J) $. By abuse of notation, we also sometimes consider $B\big(X,Q;J\big)\equiv \delta_{Z,\emptyset}B\big(X,Q;J\big) $ as of this kind.
\item $\log \mc Z(J) $ itself, analytic in $J\in\Xi_{\mb C}^{\mb L} $ in an appropriate domain.
\end{itemize}

\subsection{Structure of the norms}

We want to define norms for each of these objects and use them to control the various steps of our construction. Because of the (asymptotically) infinite number of derivatives in $\phi $ needed in Theorem \ref{algebra}, our norms are analyticity norms in this variable. Further, as motivated by interest in $n$-point correlation functions for arbitrary $n$, we also need analyticity norms in the source field $J$. In the interpolation parameters, our norms need only control one derivative. \\
To define them efficiently, we introduce some notation. Define 
\begin{align*}
\mc F_{1}&= \Xi_{\mb C}^{\mb L}\times \Xi_{\mb C}^{\mb L}     &\mc S_{1} &= \emptyset    &\aleph_{1}&= \bd L\times \bd L\\ 
\mc F_{2}&= \Xi_{\mb C}^{\mb L}\times \Xi_{\mb C}^{\mb L}     &\mc S_{2} &= \mb L    &\aleph_{2}&= \mb L\times \bd L\times \bd L\\
\mc F_{\phi}&= \Xi_{\mb C}^{\mb L}\times {\mb C}^{{\sf P}(\mb L)}\times \Xi_{\mb C}^{\mb L}     &\mc S_{\phi} &= 2^{\mb L}   &\aleph_{\phi}&= 2^{\mb L}\times\bd L\times \mf F(\mb L)\times \bd L\\
\mc F^\bullet_J&= \Xi_{\mb C}^{\mb L}     &\mc S^\bullet_J &= {2^{\mb L}}^\bullet   &\aleph^\bullet_{J}&= {2^{\mb L}}^\bullet   \times\bd L\\
\mc F&= \Xi_{\mb C}^{\mb L}     &\mc S &= \emptyset    &\aleph&= \bd L
\end{align*}
Here and in the following, $\bullet $ is a universal wildchart character, employed above to assume the values $\bullet =  $ void, $'$ (prime) or $\tilde{\phantom 2} $ (tilde). We use generic names $\Psi $ for elements of $\mc F_\bullet^\bullet $, $\varsigma $ for elements of $\mc S_\bullet^\bullet $, and $\alpha $ for elements of $\aleph_\bullet^\bullet $. That is,
\begin{align*}
\Psi &= (\phi;J)   & \varsigma &= -  &  \alpha &= (\bm\xi;\bm\zeta)   & n(\alpha) &= (n(\bm\xi),n(\bm\zeta))  \\
 &= (\phi;J)   &  &= x  &   &= (x;\bm\xi;\bm\zeta)   &   &= (n(\bm\xi),n(\bm\zeta))  \\
&= (\phi;s;J)   &  &=  X  &   &= (X;\bm\xi;F;\bm\zeta)   &   &= (\vert X\vert,n(\bm\xi),n(\bm\zeta))  \\
&= J   &  &= X  &   &= (X;\bm\zeta)   &   &= (\vert X\vert,n(\bm\zeta))  \\
&= J    &  &= (X,Q)  &   &= (X,Q;\bm\zeta)   &   &= (\vert X\vert,\vert Q\vert,n(\bm\zeta))  \\
&= J    &  &= (Z,X,Q)  &   &= (Z,X,Q;\bm\zeta)   &   &= (\vert Z\vert,\vert X\vert,\vert Q\vert,n(\bm\zeta))  \\
&= J   &  &= -  &   &= \bm\zeta   &   &= n(\bm\zeta)  
\end{align*}
in the case of $\aleph_1,\aleph_2,\aleph_\phi,\aleph_J,\aleph_J',\tilde\aleph_J,\aleph $, respectively. We also introduced the notation $n(\alpha) $.\\
The objects we need norms for are certain families, indexed by elements $\varsigma \in \mc S_\bullet^\bullet $, of analytic functions of $\Psi\in \mc F_\bullet^\bullet $. The norms will be in terms of their power series in $\Psi $, which can be thought of as functions of $\alpha\in \aleph_\bullet^\bullet $. For this, for a family, indexed by $\varsigma \in\mc S_\bullet^\bullet $, of functions $A(\varsigma;\Psi) $ of $\Psi\in \mc F_\bullet^\bullet $, introduce $A(\alpha;\Psi) $ by
\begin{align*}
A(\alpha;\Psi) &= \nabla_{\phi,\bm\xi}\nabla_{J,\bm\zeta} A(\varsigma;\Psi) & \alpha&= (\varsigma;\bm\xi;\bm\zeta) \in \aleph_1,\aleph_2\\
A(\alpha;\Psi) &= \nabla_{\phi,\bm\xi}\nabla_{s,F}\nabla_{J,\bm\zeta} A(\varsigma;\Psi) & \alpha&= (\varsigma;\bm\xi;F;\bm\zeta) \in \aleph_\phi\\
&= \nabla_{J,\bm\zeta} A(\varsigma;\Psi) & \alpha&= (\varsigma;\bm\zeta) \in \aleph_J^\bullet,\aleph
\end{align*}
Here,
\begin{align*}
\nabla_{\phi,\bm\xi} &= \prod_{\xi\in\{\bm\xi\}}\frac1{n(\bm\xi,\xi)!} \frac{\partial^{n(\bm\xi,\xi)}}{\partial\phi(\xi)^{n(\bm\xi,\xi)}}\\
\nabla_{s,F} &= \prod_{x\in \mb L}\frac1{{\sf N}^{d^F(x)}d^F(x)!} \prod_{\{x,x'\}\in F} \partial_{s(\{x,x'\})}\\
\nabla_{J,\bm\zeta} &= \prod_{\zeta\in\{\bm\zeta\}}\frac1{n(\bm\zeta,\zeta)!} \frac{\partial^{n(\bm\zeta,\zeta)}}{\partial J(\zeta)^{n(\bm\zeta,\zeta)}}
\end{align*}
with $n(\bm\xi,\xi) = \vert\{m,\xi_m=\xi\}\vert $ (the reason for the normalization in $\nabla_{s,F} $ will become clear later). For $F\in\mb C^{\aleph_\bullet^\bullet} $, we define
\begin{align*}
\sum_{\alpha\in\aleph_\bullet^\bullet} &= \sum_{\bm\xi,\bm\zeta\in\bd L}  && \aleph_1 \\
&= \sum_{x\in\mb L} \sum_{\bm\xi,\bm\zeta\in\bd L} && \aleph_2\\
&= \sum_{\substack{X\subset\mb L \\ \vert X\vert \geq 2}}   \sum_{\substack{\bm\xi,\bm\zeta\in\bd L\vert_X \\ F\in \mf F(X) }}  && \aleph_\phi\\
&= \sum_{\substack{X\subset\mb L \\ \vert X\vert \geq 2}}  \sum_{\bm\zeta\in\bd L\vert_X} && \aleph_J\\
&= \sum_{ (X,Q)\in {2^{\mb L}}'   }  \sum_{\bm\zeta\in\bd L\vert_X} && \aleph_J'\\
&= \sum_{ (Z,X,Q)\in {\tilde2^{\mb L}}  }  \sum_{\bm\zeta\in\bd L\vert_X} && \tilde \aleph_J\\
&= \sum_{\bm\zeta\in\bd L} && \aleph.
\end{align*}
Even though not explicitly written, the sums over $X\subset\mb L $ above always exclude the empty set, where the activities from the last section are not defined.

\begin{defin}
Let $\bm\Lambda $ be a space of parameters (masses, coupling constants), to be specified later. For each $\bm\lambda\in\bm\Lambda $ and all values of $\bullet $, let 
\begin{align*}
\mc B_{\bullet,\bm\lambda}^\bullet\subset \mb C^{\aleph_\bullet^\bullet}
\end{align*}
be a suitable set of test functions. Let $\mc D_{\bullet,\bm\lambda}^\bullet  \subset \mc F_\bullet^\bullet $ be to be specified sets of field/interpolation parameter configurations. Then, on a family $A(\varsigma;\Psi) $, indexed by $\varsigma\in\mc S_\bullet^\bullet $, of analytic functions of $\Psi\in \mc F_\bullet^\bullet $, we define the norm
\begin{align*}
\bm\vert  A\bm\vert_{\bullet,\bm\lambda}^\bullet &= \sup_{\eta\in \mc B_{\bullet,\bm\lambda}^\bullet  }  \sum_{\alpha\in \aleph_\bullet^\bullet} \sup_{\Psi\in \mc D_{\bullet,\bm\lambda}^\bullet}  \vert \eta(\alpha)A(\alpha;\Psi) \vert   .
\end{align*}
For a family $A(\varsigma;\Psi) $, indexed by $\varsigma=X\in\mc S_\phi $, of analytic functions of $\Psi=(\phi;s;J)\in \mc F_\phi $, we use instead a second norm (two fat vertical lines), defined by
\begin{align*}
\bm\Vert  A\bm\Vert_{\phi,\bm\lambda}  &= \sup_{ \eta\in \mc B_{\phi,\bm\lambda}   }\sup_{Q\subset \mb L}   \sum_{\alpha\in \aleph_\phi}  \sup_{\Psi\in \mc D_{Q,\bm\lambda} }   \mf G_{Q,\bm\lambda}(\varsigma;\Psi) \vert \eta(\alpha)A(\alpha;\Psi) \vert .
\end{align*}
Here, for each $Q\subset \mb L $, $\mc D_{Q,\bm\lambda} \subset \mc F_\phi $ is another set of field/interpolation parameter configurations, and $\mf G_{Q ,\bm\lambda} :\mc S_\phi\times \mc F_\phi \to\mb R_{>0} $ is a fixed function (somewhat analogous to a ``large field regulator'').

\erem
\end{defin}

\begin{rem}
For later use, we also denote for $F\in\mb C^{\aleph_\bullet^\bullet} $
\begin{align*}
\Vert F\Vert_{\bm\lambda} &= \sup_{\eta\in\mc B_{\bullet,\bm\lambda}^\bullet}  \sum_{\alpha\in \aleph_\bullet^\bullet}\vert \eta(\alpha)F(\alpha)\vert.
\end{align*}
This has the spirit of a weak definition of norms, except that there is no real dual pairing since the absolute values are inside all the sums encoded in $\sum_{\alpha\in\aleph_\bullet^\bullet} $. For single scale problems, this poses no problem. For multi scale problems, where the issue of Remark \ref{remdel} becomes important, it is desirable to have some of these sums inside the absolute values. In this case, somewhat different forms of the cluster expansion are necessary.

\erem
\end{rem}

\subsection{Properties of the norms}

We will choose the sets $\mc B_{\bullet,\bm\lambda}^\bullet $ explicitly in Section 5. They will satisfy certain properties that allow us to control the expansions related to: the passage $V(\phi;J)  \to A'(X;\phi;s;J) $; the cluster expansion of Theorems \ref{algebra} or \ref{algebralfsf}; and the Mayer expansion of (\ref{mealg}) or Lemma \ref{sflfme}.

\subsubsection{Properties of test functions}

The first property is for $V(\phi;J)\to A'(X;\phi;s;J) $. We use the notations $\supp\{(\xi_1,\ldots,\xi_n)\} = \{x_1,\ldots,x_n\} $ and
\begin{align*}
\supp (\bm\xi;\bm\zeta) &= \supp\bm\xi\circ\bm\zeta\\
\supp (X;\bm\xi;F;\bm\zeta) &= X\\
\supp (X;\bm\zeta) &= X\\
\supp (X,Q;\bm\zeta) &= X\\
\supp (Z,X,Q;\bm\zeta) &= Z\cup X
\end{align*}
We also define $\alpha\circ\alpha'  $ componentwise, with $X\circ X' = X\cup X' $ and $F\circ F' = F\cup F' $ if $F\cap F'=\emptyset $ ($F\circ F' $ is undefined otherwise).\vspace{10pt}

\noindent
\textbf{Property 1.} Let $ \lambda_\bullet = \lambda_\bullet(\bm\lambda) $, $\bullet = \phi,J,s,1,2,T,X$, and $m_\bullet= m_\bullet(\bm\lambda) $, $\bullet=\text{void,}V$, be components of $\bm\lambda $, specified later. We will have $\lambda_\bullet\leq 1 $ and $m_\bullet >0 $.\\ 
(i) Let $\Lambda^1 $ be a to be specified transformation on the space $\bm\Lambda $ of parameters, and denote $\dot{\bm\lambda} = \Lambda^1\bm\lambda $. Then,
\begin{align*}
\sum_{ X\subset \mb L  } \sum_{\substack{\bm\xi,\bm\zeta\in \bd L\vert_X  \\  F\in \mf F(X)}}  \dot\lambda_X^{-\vert X\vert}   \dot\lambda_\phi^{n(\bm\xi)} \dot\lambda_s^{2\vert F\vert} \dot\lambda_J^{n(\bm\zeta)} e^{- (m+  2m_V)d(F)}\vert \eta(X;\bm\xi;F;\bm\zeta)\vert \leq 1
\end{align*}
for any $\eta\in \mc B_{\phi,\bm\lambda} $.\\
(ii) For a fixed nondecreasing $\omega(r') $ with $\omega(r)\leq 1 $ we have $\eta\in\mc B_{1,\bm\lambda} $ whenever
\begin{align*}
\vert \eta(\bm\xi;\bm\zeta)\vert \leq \omega(r')^{-1}  \left(r'+\lambda_1^{-1}\right)^{n(\bm\xi)}\lambda_J^{-n(\bm\zeta)} \left[\prod_{\ell\in \mc T(\bm\xi\circ\bm\zeta)} \lambda_s^{-2}e^{  m d(\ell)}\right] \frac{\delta_{ \bm\xi\circ\bm\zeta\subset X}}{\vert X\vert}
\end{align*}
for some fixed $X\subset \mb L $ and $r'\in [r,R] $.\\
(iii) We have $\eta\in\mc B_{2,\bm\lambda}$ whenever
\begin{align*}
\vert \eta(x;\bm\xi;\bm\zeta)\vert \leq  \lambda_2^{-n(\bm\xi)}\lambda_J^{-n(\bm\zeta)}  \prod_{\xi'\in \bm\xi\circ\bm\zeta}e^{md(x,x')} f(x) \delta_{  \bm\xi\circ\bm\zeta\subset X} 
\end{align*}
for some fixed $X\subset \mb L $ and some $f\in\mb R_{\geq 0}^{X} $ with $\Vert f\Vert_1\leq 1 $.

\erem

\begin{rem}\label{rembint}
Clearly, conditions (ii) and (iii) would be inappropriate in the setting touched upon in Remark \ref{remdel}. In the single scale case we are interested in, they are sufficient, and allow for simplified proofs of the bound on the interpolation of the interaction. In particular, the use of (i) will be possible, while in more general contexts, (i) would have to be replaced by something analogous to Property 3 below.

\end{rem}

\erem
The properties for Theorem \ref{algebra} and Theorem \ref{algebralfsf} are somewhat abstract at this stage and will be clarified in the next section:\vspace{10pt}\\
\textbf{Property 2.} For any $\eta\in \mc B_{J,\bm\lambda}^\bullet $ and functions $\gamma(T;X;\bm\xi;F) $ and $\gamma'(T;X,Q;\bm\xi;F),Q\subset X\subset\mb L,\bm\xi\in\bd L\vert_X,T\in\mf T(X),F\in\mf F(X) $ as defined in the proof of Proposition \ref{boundce}, we have
\begin{align*}
\sum_{T\in \mf T(X)}\gamma(T;X;\bm\xi;F)\eta(X;\bm\zeta) \in \mc B_{\phi,\Lambda^2\bm\lambda}\\
\sum_{\substack{T\in \mf T(X) \\ \emptyset\neq Q\subset X}}\gamma'(T;X,Q;\bm\xi;F)\eta(X,Q;\bm\zeta) \in \mc B_{\phi,\Lambda^2\bm\lambda}
\end{align*}
for some map $\Lambda^2 $ on the space $\bm\Lambda $ of parameters, specified later. For simple reasons, it is actually not globally defined, see Proposition \ref{propprop2}.

\erem
To state the properties for the Mayer expansions (\ref{mealg}) and Lemma \ref{sflfme} in a language that is also useful for alternative versions of the cluster expansion, we introduce, for values of $\bullet $ as needed, for any $n\geq 2$ and any $T\in\mf T(\ul n) $, sets of test functions
\begin{align*}
\mc B_{J,\bm\lambda}^\bullet(T) \subset \mb C^{\left(\aleph_J^\bullet\right)^n}.
\end{align*}
They are specified later. We will prove\vspace{10pt}\\
\textbf{Property 3.} 
(i) Let $\Lambda_J^\bullet $ be a map on the space of parameters $\bm\Lambda $, specified later. Let $F_m\in \mb C^{\aleph_J^\bullet},m=1,\ldots,n, $ be such that $\Vert F_m\Vert_{\Lambda_{J}^\bullet\bm\lambda}<\infty $. Then, for any $T\in\mf T(\ul n) $,
\begin{align*}
\sup_{\eta\in \mc B_{J,\bm\lambda}^\bullet(T)} \sum_{\alpha_m,m\in\ul n} \eta(\alpha_1,\cdots,\alpha_n) \prod_{m=1}^n \vert F_m(\alpha_m)\vert &\leq \left(\frac{c_{J}^\bullet(\bm\lambda)}8\right)^n  \prod_{m=1}^n  d^T(m)! \Vert F_m\Vert_{\Lambda_{J}^\bullet\bm\lambda}
\end{align*}
for some constant $c_{J}^\bullet(\bm\lambda)\geq 1 $, depending on the values of $\bullet$.\\
(ii) Let $ \eta\in \mc B_{J,\bm\lambda}^\bullet$, and define
\begin{align*}
\delta_T\eta(\alpha_1,\ldots,\alpha_n) &=  \eta(\alpha_1\circ\cdots\circ\alpha_n) \prod_{\{m,m'\}\in T}\delta_{\supp\alpha_{m}\cap\supp\alpha_{m'} \neq\emptyset}
\end{align*}
Then $ \delta_T\eta\in\mc B_{J,\bm\lambda}^\bullet(T) $. Further, for $\eta $ as before and for $a\in \mb C^{\bd L} $ so that $$ \sum_{\bm\zeta\in\bd L\vert_X}\lambda_J^{-n(\bm\zeta)}\vert a(\bm\zeta)\vert \leq \big(c^3_{\Text{ii}}\big)^{\vert X\vert} ,$$ define $$ \eta a(\alpha) = \sum_{\bm\zeta'\in \bd L\vert_{ \supp\alpha}}\eta(\varsigma;\bm\zeta\circ\bm\zeta')a(\bm\zeta').  $$ Then $ \eta a\in\mc B_{J,\Lambda^3_{\Text{ii}}\bm\lambda }$, for some transformation $\Lambda^3_{\Text{ii}} $ on $\bm\Lambda $, specified later, and depending on $c^3_{\Text{ii}} $. \\
(iii) Let $\eta\in \tilde{\mc B}_{J,\bm\lambda} $, and denote 
\begin{align*}
\mc R\eta(Z;\bm\zeta) &= \eta(Z,\emptyset,\emptyset;\bm\zeta) & \mc R\eta &\in \mb C^{\aleph_J}  \\
\mc R'\eta(X,Q;\bm\zeta) &= \eta(\emptyset,X,Q;\bm\zeta) & \mc R\eta &\in \mb C^{\aleph_J'} .
\end{align*}
Then $\mc R^\bullet\eta\in\mc B_{J,\Lambda^3_{\Text{iii}}\bm\lambda}^\bullet $ for some $\Lambda^3_{\Text{iii}} $ specified later.\\
(iv) For $\eta\in \mc B_{\bm\lambda} $, denote, with $\upmu = \lambda_{\text{min}}(\Re C^{-1}) $,
\begin{align*}
\mc I\eta(X;\bm\zeta) &= \delta_{\supp\bm\zeta\subset X}\eta(\bm\zeta) & \mc I\eta\in \mb C^{\aleph_J}\\
\tilde{\mc I}\eta(Z,X,Q;\bm\zeta) &= \delta_{\supp\bm\zeta\subset X\cup Z}\delta_{\emptyset\neq Q\subset X} e^{\frac\upmu8r^2\vert Q\vert}\eta(\bm\zeta) & \mc I\eta\in \mb C^{\tilde \aleph_J}
\end{align*}
Then $\mc I^\bullet\eta\in\mc B^\bullet_{J,\Lambda_{\Text{iv}}^3\bm\lambda} $ for some $\Lambda^3_{\Text{iv}} $ specified later. \\
(v) There is a transformation $\Lambda^3_{\Text{v}} $ on $\bm\Lambda $ such that, with $\dot{\bm\lambda} = \Lambda^3_{\Text{v}} \bm\lambda $, we have
\begin{align*}
\sum_{x\in \mb L}\sup_{\supp \bm\zeta=\{x\}}\dot\lambda_J^{n(\bm\zeta)}\vert \eta(\{x\};\bm\zeta)\vert \leq \dot\lambda_X^{-1}.
\end{align*}
for any $\eta\in\mc B_{J,\bm\lambda} $.

\erem

\subsubsection{Choice of test field configurations and regulators}

Our bounds further rely on choices $\mc D_{\bullet,\bm\lambda}^\bullet $ of test field configurations and the ``large field regulator'' $\mf G_{Q,\bm\lambda}(X;\phi;s;J) $ which, in this context, have to be compatible in much simpler ways than the choices of the $\mc B_{\bullet,\bm\lambda}^\bullet $. We set $\mc D_{1,\bm\lambda} = \mc D_{2,\bm\lambda} = \{(0\,;0)\}, $ the zero field configuration, $\mc D_{J,\bm\lambda}^\bullet = \mc D_{\bm\lambda} = \{J\in \Xi^{\mb L}, \,\Vert J\Vert_\infty\leq \kappa_J^{-1}\} $ for a fixed radius of analyticity $\kappa_J^{-1} $ as in Theorem \ref{mainthm}, and 
\begin{align*}
\mc D_{Q,\bm\lambda}\subset \Big\{(\phi;s;J),\, \phi,J\in \Xi^{\mb L}, &\vert\phi(x)\vert\leq r,x\in Q^c,\,r<\vert\phi(x)\vert\leq R,\,x\in Q, \,\Vert J\Vert_\infty\leq \kappa_J^{-1}, \\ & s\in [0,1]^{{\sf P}(\mb L)},\, s=s^F \Text{ for some } F\in\mf F(\mb L),s(\ell)\in [0,1]\forall \ell\in F \Big\} 
\end{align*}
$\mc D_{\phi,\bm\lambda} $ is never needed. Our choice for the ``large field regulator'' $\mf G_{Q,\bm\lambda}(X;\phi;s;J) $ is
\begin{align}
\mf G_{Q,\bm\lambda}(X;\phi;s;J) &= \exp\left(- \frac{\upmu}8\sum_{x\in X} \phi(x)^2\right).
\end{align}
For the current problem, it is actually independent of $Q,s,J$. For later use, we abbreviate $\upmu_{\leq 1} = \upmu\wedge 1. $

\section{Bounds for the expansion}

In this section, we prove bounds on our expansion. They rest on the properties of our norms stated in the last section. These properties will be verified in the next section.

\subsection{Interpolation of the interaction}

\begin{prop}\label{boundsintint}
Suppose that $V(\phi;J) = V_1(\phi;s;J)\vert_{s=1} + V_2(\phi;s;J)\vert_{s=1} $, with $V_1,V_2 $ as in Proposition \ref{aipgr}. As in Theorem \ref{algebra}, set
\begin{align*}
A'(X;\phi;s;J) &= \exp \left( V_1(\phi\vert_X;s\vert_X;J\vert_X) + \sum_{x\in X} V_2(x;\phi\vert_X;s\vert_X;J\vert_X) \right).
\end{align*}
Let the transformation $\Lambda_{\Text{P.\ref{boundsintint}}}$ on the space $\bm \Lambda $ of parameters be so that the components of $\dot{\bm\lambda} = \Lambda_{\Text{P.\ref{boundsintint}}}\bm\lambda $ satisfy
\begin{align}\nonumber
\dot\lambda_1^{-1}&\geq \ddot\lambda_\phi^{-1} &\dot\lambda_2^{-1} \geq  \dot\lambda_\phi^{-1} &\geq \ddot\lambda_\phi^{-1} + 6\upmu_{\leq 1}^{-1}\ddot\lambda_s^{-2}  \\  \dot\lambda_J^{-1}  &\geq \ddot\lambda_J^{-1} + \kappa_J^{-1}  & \dot\lambda_s^{-2} &\geq  1+ 6\upmu_{\leq 1}^{-1}\ddot\lambda_s^{-2} \label{asboundsintint} \\         \dot m &\geq m+3 m_V  \nonumber
\end{align}
with $\ddot{\bm\lambda} = \Lambda^1\bm\lambda $. Assume condition (\ref{choicer}) from Theorem \ref{mainthm}:
\begin{align}\label{aspos}
\upmu r^2 \geq 16   \cdot\omega(R)\cdot c_{\sf g}'\left(\frac{\log \omega(R)}{m_V}\right).
\end{align}
Assume further that $\dot\lambda_\phi $ is so small that, for any $F\in\mf F(X) $ and $s\in[0,1]^F $,
\begin{align}\label{condpos}
\sum_{x\in X}\Re V_2(x;\phi\vert_X;s^F;0)\leq -\dot\lambda_\phi^{2M} \cdot c_{pos}\cdot \sum_{x\in X}\phi(x)^{2M} + c_{pos}' \cdot \vert X\vert
\end{align}
for some fixed $c_{pos}>0$, $c_{pos}'\geq0 $ (cf. Remark \ref{rempositivity}) and all $\phi\in\Xi^{\mb L} $. Finally, assume that $\ddot\lambda_{X}\leq c_{\Text{P.\ref{boundsintint}} }^{-1} $, with $ c_{\Text{P.\ref{boundsintint}} } = c_{\Text{P.\ref{boundsintint}} } ({\sf N},m_V,r,R,\omega,M,c_{pos},c_{pos}')$ defined at the end of the proof. If $ \bm\vert V_1\bm\vert_{1,\dot{\bm\lambda}} ,\, \bm\vert V_2\bm\vert_{2,\dot{\bm\lambda}}<1$, then $ \bm\Vert A'  \bm\Vert_{\phi,\bm\lambda} <1$.

\end{prop}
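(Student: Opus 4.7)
The bound is proved in three moves: (i) a multi-variable Cauchy estimate for the derivatives $A'(\alpha;\Psi) = \nabla_{\phi,\bm\xi}\nabla_{s,F}\nabla_{J,\bm\zeta} e^{W}$ with $W = V_1 + \sum_{x\in X} V_2(x;\cdot\,)$; (ii) a bound on $|A'|=e^{\Re W}$ over the enlarged polydisc via Property~1~(ii)--(iii) combined with the positivity (\ref{condpos}); (iii) a summation against $\eta\in\mc B_{\phi,\bm\lambda}$ using Property~1(i), in which the large-field cost on $Q$ is paid for by the hypothesis (\ref{aspos}).

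For fixed $Q\subset\mb L$, $\alpha=(X;\bm\xi;F;\bm\zeta)\in\aleph_\phi$ and $\Psi_0\in\mc D_{Q,\bm\lambda}$, the multi-variable Cauchy formula on a polydisc around $\Psi_0$ of radii $\rho_\phi$ in each $\phi(\xi),\xi\in\{\bm\xi\}$, $\rho_s$ in each $s(\ell),\ell\in F$, $\rho_J$ in each $J(\zeta),\zeta\in\{\bm\zeta\}$ gives
\[
|A'(\alpha;\Psi_0)|\ \leq\ \rho_\phi^{-n(\bm\xi)}\rho_s^{-|F|}\rho_J^{-n(\bm\zeta)}\prod_x \frac1{{\sf N}^{d^F(x)}d^F(x)!}\sup_{\tilde\Psi\in\Psi_0+\mc P(\rho)}|A'(X;\tilde\Psi)|.
\]
The inequalities (\ref{asboundsintint}) are constructed exactly so that the choice $\rho_\phi=\ddot\lambda_\phi^{-1}/\sqrt{\sf N}$, $\rho_s\sim\ddot\lambda_s^2$, $\rho_J=\ddot\lambda_J^{-1}$ keeps $\tilde\phi$ inside the $V_1$-convergence region $|\tilde\phi(x)|\leq r'+\dot\lambda_1^{-1}$ (with $r'=R$ if $Q\neq\emptyset$ and $r'=r$ otherwise) and $\tilde J$ inside $|\tilde J(\zeta)|\leq\dot\lambda_J^{-1}$, while the surplus $6\upmu_{\leq 1}^{-1}\ddot\lambda_s^{-2}$ is reserved for the regulator comparison below.

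On the enlarged polydisc, Property~1(ii) with $\bm\vert V_1\bm\vert_{1,\dot{\bm\lambda}}<1$ yields $|V_1(\tilde\Psi)|\leq|X|\omega(r')$, and Property~1(iii) with $\bm\vert V_2\bm\vert_{2,\dot{\bm\lambda}}<1$ controls the degree $\leq 2M-1$ $J$-dependent part of $V_2$ as a subdominant perturbation of (\ref{condpos}), yielding
\[
\Re\sum_{x\in X}V_2(x;\tilde\Psi)\leq -\tfrac12\dot\lambda_\phi^{2M}c_{pos}\sum_{x\in X}|\tilde\phi(x)|^{2M}+2c_{pos}'|X|.
\]
Multiplying by the regulator $\mf G_{Q,\bm\lambda}(X;\phi_0)=e^{-(\upmu/8)\sum_{x\in X}\phi_0(x)^2}$ and rearranging the exponent as a sum over $x\in X$, each $x\in Q^c$ contributes at most $e^{\omega(r)+2c_{pos}'}\leq e^{1+2c_{pos}'}$, while each $x\in Q$ contributes at most $e^{\omega(R)+2c_{pos}'-\upmu r^2/8}$: using (\ref{aspos}) --- whose geometric factor $c_{\sf g}'(\log\omega(R)/m_V)$ accounts for how far the $V_1$ interaction can spread the $\omega(R)$ contribution from a single site --- the latter is bounded by $e^{-\upmu r^2/16}\leq 1$.

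Finally, Property~1(i) gives $\sum_\alpha|\eta(\alpha)|\ddot\lambda_X^{-|X|}\ddot\lambda_\phi^{n(\bm\xi)}\ddot\lambda_s^{2|F|}\ddot\lambda_J^{n(\bm\zeta)}e^{-(m+2m_V)d(F)}\leq 1$ for any $\eta\in\mc B_{\phi,\bm\lambda}$. The Cauchy denominators cancel against the $\ddot\lambda_\phi,\ddot\lambda_s,\ddot\lambda_J$ weights by construction of $\Lambda_{\text{P.\ref{boundsintint}}}$, the tree-decay $e^{-(m+2m_V)d(F)}$ is covered by the $\dot m\geq m+3m_V$ exponential weight built into the $V_i$ norms, and the per-point residual $e^{1+2c_{pos}'}$ from $Q^c$ (together with the combinatorial overhead from $M,{\sf N},\omega,r,R$) is absorbed by $\ddot\lambda_X^{-|X|}$ under the final threshold $\ddot\lambda_X\leq c_{\text{P.\ref{boundsintint}}}^{-1}$. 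The delicate point is the one-point bound on $Q$: verifying that (\ref{aspos}) indeed makes the relevant exponent negative once the $c_{\sf g}'$-factor for the range of $V_1$ and the $J$-dependent perturbation of $V_2$ are fully accounted for.
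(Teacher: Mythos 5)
Your three-move outline (Cauchy estimate, control of $e^{\Re W}$ via Properties 1(ii)--(iii) and (\ref{condpos}), resummation via Property 1(i)) correctly identifies the skeleton of the argument, but the execution has several genuine gaps that happen to be precisely the hard parts of the paper's proof.

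First, the naive multi-variable Cauchy bound you write shifts the physical field $\phi$ itself to a complex $\tilde\phi = \phi_0 + \psi$ and then asserts $\Re\sum_x V_2(x;\tilde\Psi)\leq -\tfrac12\dot\lambda_\phi^{2M}c_{pos}\sum_x|\tilde\phi(x)|^{2M}+2c_{pos}'|X|$ on the polydisc. But (\ref{condpos}) is a statement about \emph{real} fields only, and the shifted polynomial $V_2(\phi_0+\psi;\cdot) - V_2(\phi_0;0)$ contains mixed terms of degree up to $2M-1$ in $\phi_0$, which can be as large as $R^{2M-1}\rho_\phi$; turning this into the asserted inequality is exactly the Young-inequality and polynomial-bound argument (the function $P(u)$) in the paper's proof, which you have not reproduced. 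Moreover, the paper sidesteps the issue more cleanly by first applying the product rule over $X_1 = X\sqcup\{1\}$, keeping $\phi$ real inside the exponent $V_2(x;s_x\phi;0)$ where positivity is applied, and introducing \emph{separate} Cauchy shift variables $\psi_x,\psi_x'$ only for the remainder $V_2'$; this separation is what makes the positivity usable without a perturbative extension to the complex disc, and it is not visible in your proposal. Second, the claim $|V_1(\tilde\Psi)|\leq |X|\omega(r')$ with $r'=R$ whenever $Q\neq\emptyset$ is too weak: $V_1$ is non-local, and the paper's decomposition into contributions supported in $\bar Q$, in $Q^c\setminus\bar Q$, and straddling them is required to localize the $\omega(R)$ cost to $|Q|\cdot\omega(R)\cdot c_{\sf g}'(\log\omega(R)/m_V)$ rather than $|X|\cdot\omega(R)$; you note the geometric factor in passing but do not reconstruct the argument that produces it, and without it the regulator $e^{-\upmu r^2|Q|/8}$ cannot compensate when $|Q|\ll|X|$. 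Third, the bookkeeping that makes the sum over forests $F$ finite --- the $\prod_x d^F(x)!$ normalization, the need for the Cauchy radii to carry exponential weights $e^{m_V d(\cdot)}$ to generate the tree decay, and the use of the Gaussian regulator to absorb the $\phi(x')$ factors produced when $\partial_{s(\{x,x'\})}$ hits $V_2(x;\cdot)$ --- is entirely absent. You correctly flag the verification that (\ref{aspos}) suffices as ``delicate''; together with the three points above, this is where the bulk of the proof actually lies.
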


\begin{proof}
Fix $X\subset \mb L $ for the moment. Note that $V_2(x;\phi;s;J) $ depends on $\phi $ and $s$ only through the combination $s_x\phi := s(\{x,x'\})\phi(x') $. We write sloppily $V_2(x;\phi;s;J)= V_2(x;s_x\phi;J) $. Define $X_1:= X\sqcup \{1\} $ and, for $x_1\in X_1 $, set $V(x_1;\phi;s;J) = V_1(\phi\vert_X;s\vert_X;J\vert_X) $ if $x_1=1 $ and $ V(x_1;\phi;s;J) = V_2(x;s_x\phi;J\vert_X)$ if $x_1=x\in X $. Also set
\begin{align*}
A'(x_1;X;\phi;s;J) &= e^{V(x_1;\phi;s;J)}
\end{align*}
so that
\begin{align*}
A'(X;\phi;s;J) &= \prod_{x_1\in X_1} A'(x_1;X;\phi;s;J).
\end{align*}
By the product rule, for $\alpha = (X;\bm\xi;F;\bm\zeta)\in \aleph_\phi $ ,
\begin{align*}
A'(\alpha;\phi;s;J) &= \sum_{\substack{\alpha_{x_1'}\in \aleph_\phi \\  \circ_{x_1'}\alpha_{x_1'} = \alpha }} c\big((\alpha_{x_1'})_{x_1'\in X_1};\alpha\big) \prod_{x_1\in X_1}A'(x_1;\alpha_{x_1};\phi;s;J)\\
c\big((\alpha_{x_1'})_{x_1'\in X_1};\alpha\big) &= \prod_{x_1\in X_1}\delta_{X_{x_1},X} \prod_{x\in X} \frac{\prod_{x_1\in X_1} d^{F_{x_1}}(x)! }{d^F(x)!} 
\end{align*}
Note that, by definition, $F $ is the disjoint union of the $F_{x_1} $. For $x_1=1 $, we have
\begin{align*}
A'(1;\alpha_1;\phi;s;J) &= \prod_{\xi\in\{\bm\xi_1\}} \frac1{2\pi i}\oint\limits_{\vert \psi(\xi)\vert = \ddot\lambda_{\phi}^{-1}} \frac{\ud\psi(\xi)}{\psi(\xi)^{n(\bm\xi_1,\xi)+1 }} \times \prod_{\ell\in {\sf P}(X)}\frac1{2\pi i} \oint\limits_{\vert t(\ell)\vert =   \tilde\lambda_s^{-2}e^{ \tilde md(\ell)}} \frac{\ud t(\ell)}{t(\ell)^{1+\delta_{\ell\in F_1}}}\\&\qquad\qquad\qquad\times  \prod_{\zeta\in\{\bm\zeta_1\}} \frac1{2\pi i}\oint\limits_{\vert K(\zeta)\vert= \ddot\lambda_J^{-1} }\frac{\ud K(\zeta)}{K(\zeta)^{n(\bm\zeta_1,\zeta)+1}}  e^{V_1(\phi+\psi;s+t;J+K)}
\end{align*}
We set here $\tilde\lambda_s = \frac16 \upmu_{\leq 1}^{\half}\cdot \ddot\lambda_s $ and $\tilde m = m+2m_V $. For $x_1=x\in X $, $A'(x;\alpha_{x};\phi;s;J) $ vanishes if $ x\not\in \ell$ for some $\ell\in F_x $. It is non-zero if $$F_x = \big\{\{x,x'\},x'\in N_x\big\} =: \{x,N_x\} $$ with $N_x=\supp F_x\setminus \{x\}$ the set of neighbors of $x$ in $F_x$ (local notation), and then
\begin{align*}
A'(x;\alpha_{x};\phi;s;J) &=\prod_{\xi'\in\{\bm\xi_x\}} \frac{s(\{x,x'\})^{n(\bm\xi_x,\xi')}}{n(\bm\xi_x,\xi')!}\frac{\partial^{n(\bm\xi_x,\xi')}}{\partial s_x\phi(\xi')^{n(\bm\xi_x,\xi')}}\\&\qquad\times \frac1{{\sf N}^{\vert F_x\vert}\vert F_x\vert!} \prod_{x'\in N_x } \phi(x')\cdot\pderi{s_x\phi(x')} \\&\qquad\times  \prod_{\zeta\in\{\bm\zeta_x\}} \frac1{n(\bm\zeta_x,\zeta)!}\frac{\partial^{n(\bm\zeta_x,\zeta)}}{\partial J(\zeta)^{n(\bm\zeta_x,\zeta)}} e^{V_2(x;s_x\phi;J)}\\
&=  \prod_{ \xi'\in \{\bm\xi_x\} } \frac{s(\{x,x'\})^{n(\bm\xi_x,\xi')}}{2\pi i }\oint\limits_{\vert \psi_x(\xi') \vert =  \ddot\lambda_\phi^{-1}e^{  m_Vd(x,x')}} \frac{\ud\psi_x(\xi')}{\psi_x(\xi')^{n(\bm\xi_x,\xi')+1}} \\&\qquad \times  \frac1{\vert F_x\vert!} \prod_{x'\in N_x}\sum_{{\sf m}\in\N} \frac{2\,\phi(x',{\sf m})}{\sf N} \prod_{{\sf n}\in\N} \frac{1}{2\pi i } \!\!\oint\limits_{\vert \psi_x'(x',{\sf n}) \vert =  \tilde\lambda_s^{-2}e^{ \tilde md(x,x')}}  \!\!\!\frac{\ud\psi_x'(x',{\sf n})}{\psi_x'(x',{\sf n})^{ \delta_{{\sf n},{\sf m}}+1} } \\&\qquad\times  \prod_{\zeta'\in\{\bm\zeta_x\}} \oint\limits_{\vert K_x(\zeta') \vert =  \ddot\lambda_J^{-1}e^{m_Vd(x,z')} }\frac{\ud K_x(\zeta')}{K_x(\zeta')^{n(\bm\zeta_x,\zeta')+1}} \;\; e^{V_2(x;s_x\phi+\psi_x+\psi_x';J+K_x)}.
\end{align*}
By definition,
\begin{align*}
V_1(\phi+\psi;s+t;J+K) &= \sum_{\alpha\in \aleph_1} \eta_1(\alpha)V_1(\alpha;\Psi)\Big\vert_{\Psi=0}
\end{align*}
with
\begin{align*}
\eta_1(\bm\xi;\bm\zeta) &=  (\phi+\psi)(\bm\xi) \cdot(J+K)(\bm\zeta) \prod_{\ell\in\mc T( \bm\xi\circ\bm\zeta)} (s+t)(\ell) \;\delta_{\bm\xi\circ\bm\zeta\subset X}.
\end{align*}
Fix any $Q\subset \mb L $, $(\phi,s,J)\in \mc D_{Q,\bm\lambda} $, and $(\psi;t;K) $ in the region of integration specified above. Write 
\begin{align*}
\eta_1(\bm\xi;\bm\zeta) &= \eta_1(\bm\xi;\bm\zeta)\delta_{\bm\xi\circ\bm\zeta\subset \ol Q} + \eta_1(\bm\xi;\bm\zeta)\delta_{\bm\xi\circ\bm\zeta\not\subset \ol Q}\delta_{\bm\xi\circ\bm\zeta\subset Q^c} + \eta_1(\bm\xi;\bm\zeta)\delta_{\bm\xi\circ\bm\zeta\not\subset \ol Q}\delta_{\bm\xi\circ\bm\zeta\not\subset Q^c}
\end{align*}
where 
\begin{align*}
\ol Q &= \Big\{ x\in \mb L,\, d(x,Q)\leq \frac{\log \omega(R)}{m_V}  \Big\}.
\end{align*}
By Property 1 (iii) and (\ref{asboundsintint}),
\begin{align*}
\vert \eta_1(\bm\xi;\bm\zeta)\delta_{\bm\xi\circ\bm\zeta\subset \ol Q} \vert \in \vert Q\vert \cdot\omega(R)\cdot c_{\sf g}'\left(\frac{\log \omega(R)}{m_V}\right) \cdot \mc B_{1,\dot{\bm\lambda}}.
\end{align*}
Similarly,
\begin{align*}
\vert \eta_1(\bm\xi;\bm\zeta)\delta_{\bm\xi\circ\bm\zeta\not\subset \ol Q}\delta_{\bm\xi\circ\bm\zeta\subset Q^c} \vert \in \vert X\vert\cdot \mc B_{1,\dot{\bm\lambda}}.
\end{align*}
Finally, note that
\begin{align*}
\vert\eta_1(\bm\xi;\bm\zeta) \delta_{\bm\xi\circ\bm\zeta\not\subset \ol Q}\delta_{\bm\xi\circ\bm\zeta\not\subset Q^c}\vert &\leq e^{-m_Vd(Q,{\ol Q}^c)}\Big(R+\ddot\lambda_\phi^{-1}\Big)^{n(\bm\xi)}\\&\qquad\qquad\qquad\times\Big(\kappa_J^{-1}+\ddot\lambda_J^{-1}\Big)^{n(\bm\zeta)} \left[ \prod_{\ell\in\mc T(\bm\xi\circ\bm\zeta) } \tilde\lambda_s^{-2}e^{(\tilde m+m_V)d(\ell)}\right]\delta_{ \bm\xi\circ\bm\zeta\subset X} . 
\end{align*}
Therefore, by the choice of $\ol Q $,
\begin{align*}
\vert\eta_1(\bm\xi;\bm\zeta) \vert \in \vert X\vert\cdot \mc B_{1,\dot{\bm\lambda}}.
\end{align*}
It follows that, for $Q\subset \mb L $, $(\phi,s,J)\in \mc D_{Q,\bm\lambda} $, and $(\psi;t;K) $ in the above region of integration
\begin{align*}
\vert V_1(\phi+\psi;s+t;J+K)  \vert \leq \left[2\vert X\vert +   \vert Q\vert \cdot\omega(R)\cdot c_{\sf g}'\left(\frac{\log \omega(R)}{m_V}\right)\right]\; \bm\vert V_1\bm\vert_{1,\dot{\bm\lambda}}.
\end{align*}
For $V_2 $, note that
\begin{align*}
V_2(x;s_x\circ\phi+\psi_x+\psi_x';J+K_x) &= V_2(x;s_x\circ\phi;0) + V_2'(x;s_x\circ\phi,\psi_x+\psi'_x;J+K_x) \\
V_2'(x;s_x\circ\phi,\psi_x+\psi_x';J+K_x)  &= -\sum_{\substack{\bm\xi,\bm\upsilon,\bm\zeta \in\bd L\vert_X  \\ n(\bm\xi\circ\bm\upsilon\circ\bm\zeta)\leq 2M \\ n(\bm\xi)<2M }} v_2(x;\bm\xi\circ\bm\upsilon;\bm\zeta) \prod_{\xi'\in\bm\xi\circ\bm\upsilon}s(\{x,x'\}) \\&\qqquad\times \phi(\bm\xi)(\psi_x+\psi_x')(\bm\upsilon)(J+K_x)(\bm\zeta).
\end{align*}
$V_2' $ is a polynomial in $\phi $ with degree $<2M $. By definition,
\begin{align*}
\sum_{x\in X}V_2'(x;s_x\circ\phi,\psi_x+\psi_x';J+K_x) &= \sum_{\alpha\in\aleph_2}\eta_2(\alpha) V_2(\alpha;\Psi)\Big\vert_{\Psi=0}
\end{align*}
with
\begin{align*}
\eta_2(x;\bm\xi;\bm\zeta) &=  \sum_{\substack{\bm\xi'\circ\bm\upsilon=\bm\xi  \\n(\bm\xi')<2M   }} \phi(\bm\xi')(\psi_x+\psi_x')(\bm\upsilon)\prod_{\xi'\in\bm\xi}s(\{x,x'\}) (J+K_x)(\bm\zeta) \;\; \delta_{x\in X}\delta_{\bm\xi\circ\bm\zeta\subset X}
\end{align*}
We distinguish the terms with $n(\bm\xi')=0 $ and $n(\bm\xi')>0 $ to get $\eta_2= \eta_2'+\eta_2'' $. For $\Vert J\Vert_\infty\leq \kappa_J^{-1},s\in[0,1]^{{\sf P}(X)} $ and $\psi_x,\psi_x',K_x $ in the regions of integration above, we have, by (\ref{asboundsintint}),
\begin{align*}
\vert \eta_2'(x;\bm\xi;\bm\zeta)\vert &\in  \vert X\vert \cdot \mc B_{2,\dot{\bm\lambda}}.
\end{align*}
For $\eta_2'' $, by Young's inequality and a crude bound,
\begin{align*}
\vert  \phi(\bm\xi') \vert \leq \sum_{\xi'\in\{\bm\xi'\}} \frac{\vert\phi(\xi')\vert^{n(\bm\xi')} }{n(\bm\xi',\xi')} \leq  \big(2\dot\lambda_2\big)^{-n(\bm\xi')} \sum_{\xi'\in X\times\N} 1+\big(2\dot\lambda_\phi\vert\phi(\xi')\vert\big)^{2M-1}.
\end{align*}
Therefore, for $\Vert J\Vert_\infty\leq \kappa_J^{-1},s\in[0,1]^{{\sf P}(X)} $ and $\psi_x,\psi_x',K_x $ in the corresponding regions of integration,
\begin{align*}
\vert \eta_2''(x;\bm\xi;\bm\zeta)\vert &\leq  \dot\lambda_2^{-n(\bm\xi)}\dot\lambda_J^{-n(\bm\zeta)} \prod_{\xi'\in\bm\xi\circ\bm\zeta}e^{\tilde md(x,x')} \;\delta_{x\in X}\delta_{\bm\xi\circ\bm\zeta\subset X} \\&\qqquad \times    \sum_{\xi'\in X\times\N} e^{-m_Vd(x',x)}\left[ 1+\big(2\dot\lambda_\phi\vert\phi(\xi')\vert\big)^{2M-1}  \right]
\end{align*}
and so
\begin{align*}
\vert \eta_2''(x;\bm\xi;\bm\zeta)\vert & \in c_{\mf g}(m_V) \sum_{\xi'\in X\times\N} 1+\big(2\dot\lambda_\phi\vert\phi(\xi')\vert\big)^{2M-1}   \,\cdot\, \mc B_{2,\dot{\bm\lambda}} 
\end{align*}
We conclude
\begin{align*}
\left\vert\sum_{x\in X}V_2'(x;s_x\circ\phi,\psi_x+\psi_x';J+K_x)  \right\vert\leq \left[\vert X\vert +  c_{\mf g}(m_V) \sum_{\xi'\in X\times\N} 1+ \big(2\dot\lambda_\phi\vert\phi(\xi')\vert\big)^{2M-1} \right] \; \bm\vert V_2\bm\vert_{2,\dot{\bm\lambda}}
\end{align*}
Using (\ref{condpos}), we conclude that, for $(\phi;s;J)\in\mc D_{Q,\bm\lambda} $,
\begin{align*}
\Big\vert \prod_{x_1\in X_1}A'(x_1;&\alpha_{x_1};\phi;s;J ) \Big\vert\leq \ddot\lambda_\phi^{n(\bm\xi_1)} \ddot\lambda_J^{n(\bm\zeta_1)}  \tilde\lambda_s^{2\vert F_1\vert}e^{-\tilde md(F_1)} \\& \times \exp\left(\left[2\vert X\vert +  \vert Q\vert \cdot\omega(R)\cdot c_{\sf g}'\left(\frac{\log \omega(R)}{m_V}\right)\right]\; \bm\vert V_1\bm\vert_{1,\dot{\bm\lambda}}\right) \\ &\times\prod_{x\in X} \ddot \lambda_\phi^{n(\bm\xi_x)} \ddot \lambda_J^{n(\bm\zeta_x)}e^{-m_Vd(x,\bm\xi_x\circ\bm\zeta_x)} \frac{\delta_{F_x=\{x,N_x\}}}{\vert F_x\vert!} \prod_{x'\in N_x} \frac{2}{\sf N}\vert \phi(x')\vert_1\tilde\lambda_s^2e^{-\tilde md(x,x')}\\&  \times \exp\left( c_{pos}'\cdot \vert X\vert +\left[\vert X\vert +   \sum_{\xi'\in X\times\N} P\big(\dot\lambda_\phi \phi(\xi')\big) \right] \; \bm\vert V_2\bm\vert_{2,\dot{\bm\lambda}}\right)
\end{align*}
where $d(x,\bm \xi) = \sum_{\xi'\in\bm\xi}d(x,x') $ and
\begin{align*}
P(u) &= c_{\mf g}(m_V)\left[1+\big(2\vert u\vert\big)^{2M-1}\right]-c_{pos}u^{2M}
\end{align*}
is bounded above by $\tilde c_{\Text{P.\ref{boundsintint}}} = c_{\mf g}(m_V) + \frac{c_{\mf g}(m_V)^{2M}}{c_{pos}^{2M-1}} 2^{4M^2}  $. We use this to eliminate $ P\big(\dot\lambda_\phi \phi(\xi')\big)$ and obtain
\begin{align*}
\left\vert \prod_{x_1\in X_1}A'(x_1;\alpha_{x_1};\phi;s;J ) \right\vert&\leq \ddot \lambda_\phi^{n(\bm\xi)}\tilde\lambda_s^{2\vert F\vert}\ddot \lambda_J^{n(\bm\zeta)} e^{-\tilde md(F)} \\&\qquad\times\prod_{x\in X} \frac{e^{-m_Vd(x,\bm\xi_x\circ\bm\zeta_x)}\delta_{F_x=\{x,N_x\}}}{\vert F_x\vert!}\prod_{x'\in N_x} \frac{2}{\sf N}\vert \phi(x')\vert_1 \\&\qquad\times \exp\left[\vert X\vert \cdot\left[3+{\sf N}\,\tilde c_{\Text{P.\ref{boundsintint}}}  +c_{pos}'\right]+  \vert Q\vert \cdot\omega(R)\cdot c_{\sf g}'\left(\frac{\log \omega(R)}{m_V}\right)\right]
\end{align*}
We have
\begin{align*}
\sum_{\substack{\alpha_{x_1'}\in\aleph_\phi  \\ \circ_{x_1'} \alpha_{x_1'}=\alpha }} & c\big((\alpha_{x_1'})_{x_1'\in X_1};\alpha\big)\prod_{x\in X} \frac{e^{-m_Vd(x,\bm\xi_x\circ\bm\zeta_x)}\delta_{F_x=\{x,N_x\}}}{\vert F_x\vert!}\prod_{x'\in N_x} \frac{2}{\sf N}\vert \phi(x')\vert_1 \\ 
&\leq \sum_{\substack{m_x,n_x\geq 0   \\ \sum_{x}m_x\leq n(\bm\xi) \\ \sum_{x}n_x\leq n(\bm\zeta) }}  \prod_{x\in X} \frac1{m_x!n_x!}  \left[ \sum_{\xi'\in \{\bm\xi\circ\bm\zeta\}} e^{-m_Vd(x,x')}\right]^{m_x+n_x} \\&\qqquad\times \prod_{x\in X}\frac1{d^F(x)!}\sum_{F'\subset F} \prod_{\{x,x'\}\in F'} \frac2{\sf N} \big(\vert\phi(x)\vert_1+\vert\phi(x')\vert_1\big)\\
&\leq e^{2{\sf N}c_{\sf g}(m_V)\cdot\vert X\vert } \times \upmu_{\leq1}^{-\vert F\vert} \times \prod_{x\in X}\frac{\left[ 1+\frac 2{\sf N} \upmu_{\leq1}\vert\phi(x)\vert_1\right]^{d^F(x)}}{d^F(x)!} .
\end{align*}
We have, for any $\phi\in \Xi $
\begin{align*}
\mf G_{Q,\bm\lambda}(X;\phi;s;J)\prod_{x\in X}&\frac{\left[ 1+\frac 2{\sf N}\upmu_{\leq1}\vert\phi(x)\vert_1\right]^{d^F(x)}}{d^F(x)!}\\& \leq \exp\left(-\frac{\upmu}{16} \sum_{x\in X}\phi(x)^2\right) \prod_{x\in X} \frac{1}{d^F(x)!} \sqrt{1+\frac{32 d^F(x)}{\sf N}}^{d^F(x)} e^{ -\frac34 d^F(x)}\\
&\leq \exp\left(-\frac{\upmu}{16} \sum_{x\in X}\phi(x)^2\right) \prod_{x\in X} \frac{ 6^{d^F(x)}}{\sqrt{d^F(x)!}}
\end{align*}
By (\ref{aspos}), for $\phi\in \mc D_{Q,\bm\lambda} $
\begin{align*}
\exp\left(-\frac{\upmu}{16} \sum_{x\in X}\phi(x)^2\right)\leq \exp\left[-  \vert Q\vert \cdot\omega(R)\cdot c_{\sf g}'\left(\frac{\log \omega(R)}{m_V}\right)   \right]
\end{align*}
Putting everything together,
\begin{align*}
\bm\Vert A'\bm\Vert_{\phi,\bm\lambda} &= \sup_{\eta\in \mc B_{\phi,\bm\lambda}} \sup_{Q\subset \mb L} \sum_{\alpha\in\aleph_{\phi}} \sup_{\Psi\in\mc D_{Q,\bm\lambda}} \mf G_{Q,\bm\lambda} (\varsigma;\Psi) \vert \eta(\alpha)A'(\alpha;\Psi)\vert\\
&\leq \sup_{\eta\in \mc B_{\phi,\bm\lambda}}  \sum_{\alpha\in\aleph_{\phi}} \ddot \lambda_\phi^{n(\bm\xi)} 32^{\vert F\vert}\upmu_{\leq 1}^{-\vert F\vert} \tilde\lambda_s^{2\vert F\vert}\ddot  \lambda_J^{n(\bm\zeta)} e^{-\tilde md(F)} e^{\vert X\vert [3+{\sf N} \tilde c_{\Text{P.\ref{boundsintint}}} + c_{pos}' + 2{\sf N}c_{\sf g}(m_V) ]} \vert \eta(\alpha)\vert \\& <1
\end{align*}
by the assumption on $\ddot\lambda_X $ with $c_{\Text{P.\ref{boundsintint}}} =  e^{3+{\sf N} \tilde c_{\Text{P.\ref{boundsintint}}} + c_{pos}' + 2{\sf N}c_{\sf g}(m_V) } $ and the Properties 1 of the norms. This proves the claim.

\end{proof}

\subsection{Cluster Expansion}

\begin{prop}\label{boundce}
Let $A'\in \mb C^{\mc S_\phi\times \mc F_\phi} $ be a family, indexed by $X\in\mc S_\phi $, of analytic functions of $\Psi= (\phi;s;J)\in \mc F_\phi$. As in Theorems \ref{algebra} and \ref{algebralfsf}, define
\begin{align*}
A(X;J\vert_X) &= \sum_{T\in\mf T(X)} \bm\int\limits_T \ud\phi\vert_X\ud s\vert_X A'(X;\phi\vert_X;s^T;J\vert_X)\\
B(X,Q;J\vert_X) &= \sum_{T\in\mf T(X)} \bm\int\limits_{T,Q} \ud\phi\vert_X\ud s\vert_X A'(X;\phi\vert_X;s^T;J\vert_X).
\end{align*}
Let the transformation $\Lambda^2 $ be the transformation on the space $\bm\Lambda $ of parameters from Property 2, set $\dot{\bm\lambda} =\Lambda^2\bm\lambda  $. Assume that $\bm\Vert A'\bm\Vert_{\phi,\dot{\bm\lambda} }<\infty $. Then $\bm\vert A\bm\vert_{J,\bm\lambda},\bm\vert A_s\bm\vert_{J,\bm\lambda},\bm\vert B\bm\vert_{J,\bm\lambda}'\leq \bm\Vert A'\bm\Vert_{\phi,\dot{\bm\lambda} }. $
\end{prop}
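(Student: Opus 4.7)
The plan is to translate the definitions of $A$, $A_s$ and $B$ into weighted test-function bounds on $A'$, and then invoke Property 2 to package all combinatorial factors into an admissible kernel in $\mc B_{\phi,\dot{\bm\lambda}}$. I describe the argument for $A$; the cases $B$ and $A_s$ are parallel. Fix $\eta\in\mc B_{J,\bm\lambda}$. By the definition of $\bm\vert\cdot\bm\vert_{J,\bm\lambda}$ the task is to control $\sum_{X,\bm\zeta}\sup_{J\in\mc D_{\bm\lambda}}\bigl|\eta(X;\bm\zeta)\,\nabla_{J,\bm\zeta}A(X;J)\bigr|$. Since the operator $\bm\int_T$ does not touch $J$, I can commute $\nabla_{J,\bm\zeta}$ under the integrals, reducing everything to a bound on
$$\sum_{T\in\mf T(X)}\sum_{F_C\dot\cup F_V=T}\int_0^1\ud\bd s^T\int\ud\mu_{C_{s^T}}(\phi\vert_X)\prod_{\ell\in F_C}C(\ell)\partial_\ell\chi_R(\phi\vert_X)\prod_{\ell\in F_V}\partial_{s(\ell)}\nabla_{J,\bm\zeta}A'(X;\phi;s^T;J).$$

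Next I extract Taylor coefficients of $A'$ in $\phi$ and $s$. The $\phi$-derivatives produced by integration by parts against the Gaussian (for $F_C$ edges) and the $s$-derivatives (for $F_V$ edges) select exactly $\nabla_{\phi,\bm\xi}\nabla_{s,F}\nabla_{J,\bm\zeta}A'(X;\bm\xi;F;\bm\zeta;\Psi)$, paid for by explicit combinatorial weights (multinomials, powers of $\phi$, covariance factors $C(\ell)$, and the normalisations built into $\nabla_{s,F}$). To control the Gaussian integration I verify that $\Re C^{-1}_{s^T}$ is uniformly bounded below on the whole cube $s^T\in[0,1]^{{\sf P}(\mb L)}$ by a definite fraction of $\upmu$, via a Hadamard/Schur-type argument analogous to the one used for $v^{1/2}s$ in Remark \ref{rempositivity}. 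This margin is taken large enough to absorb the regulator $\mf G_{Q,\dot{\bm\lambda}}(X;\phi;s;J)=\exp\bigl(-\tfrac{\upmu}8\sum_{x\in X}\phi(x)^2\bigr)$ and still leave a finite Gaussian normalisation. Finally I insert the partition $\chi_R(\phi\vert_X)=\sum_{Q\subset X}\chi_{Q^c}(\phi\vert_{Q^c})\chi_Q^c(\phi\vert_Q)$ so that for each $Q$ the field is constrained to $\mc D_{Q,\dot{\bm\lambda}}$.

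Assembling the pieces yields a bound of the form
$$\sum_{X;\bm\xi;F;\bm\zeta}\sup_{Q\subset\mb L}\sup_{\Psi\in\mc D_{Q,\dot{\bm\lambda}}}\mf G_{Q,\dot{\bm\lambda}}(X;\phi;s;J)\Bigl|\Bigl[\sum_{T\in\mf T(X)}\gamma(T;X;\bm\xi;F)\eta(X;\bm\zeta)\Bigr]A'(X;\bm\xi;F;\bm\zeta;\phi;s;J)\Bigr|,$$
where $\gamma(T;X;\bm\xi;F)$ is the kernel that packages every combinatorial, geometric and Gaussian-normalisation factor accumulated above --- and is precisely the object referred to in the statement of Property 2. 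The content of that property is exactly that the bracketed kernel belongs to $\mc B_{\phi,\Lambda^2\bm\lambda}=\mc B_{\phi,\dot{\bm\lambda}}$, so by definition of $\bm\Vert A'\bm\Vert_{\phi,\dot{\bm\lambda}}$ the entire display is bounded by $\bm\Vert A'\bm\Vert_{\phi,\dot{\bm\lambda}}$; taking the sup over $\eta$ gives the $A$ half of the claim. For $B(X,Q;J)$ the steps are identical except that the partition of $\chi_R$ is not performed (the splitting into $\chi_{Q^c}\chi_Q^c$ is already built into $\bm\int_{T,Q}$), producing a kernel $\gamma'(T;X,Q;\bm\xi;F)\eta(X,Q;\bm\zeta)$ to which the second half of Property 2 applies. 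For $A_s(Z;J)$ one simply sets $Q=\emptyset$, so that $\chi_{Z^c}(\phi\vert_Z)=\chi_r(\phi\vert_Z)$ enforces the small-field domain throughout.

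The step I expect to require real care is the Gaussian integration: showing that $\Re C^{-1}_{s^T}\ge c\upmu>0$ uniformly in $s^T$ with enough margin to swallow $\mf G_{Q,\dot{\bm\lambda}}$, and simultaneously interpreting the distributional factors $C(\ell)\partial_\ell\chi_R$ (coming from $F_C$ edges) as controlled surface measures on $\{|\phi(x)|=R\}$ that fit the same weighted framework. These two analytic inputs determine the precise form of the transformation $\Lambda^2$ and of the kernels $\gamma,\gamma'$; once they are in hand, the remaining work is purely combinatorial book-keeping already encoded in Property 2.
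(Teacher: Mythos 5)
Your overall plan matches the paper's: reduce the bound on $A$, $A_s$ and $B$ to the estimate $\bm\Vert A'\bm\Vert_{\phi,\dot{\bm\lambda}}$ by packaging the integro-differential operator $\bm\int_T$ (resp.\ $\bm\int_{T,Q}$) together with the test function $\eta$ into a kernel $\gamma\eta$ (resp.\ $\gamma'\eta$) and invoking Property 2. You also correctly locate the division of labor --- the bound on $\gamma,\gamma'$ itself (Gaussian integration, spectral control of $\Re C^{-1}_{s^T}$, absorbing $\mf G_{Q,\bm\lambda}$) is the content of the verification of Property~2 in Proposition \ref{propprop2}, not of Proposition \ref{boundce} proper.

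However, there is a genuine gap that you flag but do not close: the treatment of $\phi$-derivatives from the $F_C$ edges that land on the characteristic function. In the operator $\bm\int_T$ the product $\prod_{\ell\in F_C}C(\ell)\partial_\ell$ acts by Leibniz on $\chi_R(\phi\vert_X)\cdot A'$, and at each site $x$ the factor $\chi(|\phi(x)|\leq R)$ can receive \emph{several} first-order derivatives, since $x$ may have large degree in the tree $T$. A single derivative of a characteristic function is a surface measure, as you note, but two or more at the same point produce derivatives of delta distributions on the sphere $\{|\phi(x)|=R\}$, which are not directly ``controlled surface measures.'' The paper resolves this by an additional integration by parts on the weak derivatives: at each such $x$, exactly one derivative is kept on $\chi$ (yielding a genuine surface measure), and the remaining derivatives are transferred by parts either onto $A'$ (the legs $\mc L_2$) or onto the Gaussian density $\ud\mu_{C_{s^T}}$ (the legs $\mc L_3$); the latter are then traded for a Cauchy contour shift $\phi\mapsto\phi+\psi$ of radius $\tilde r\sim r\sqrt{\upmu\lambda_{\min}(\Re C)/\sf N}$. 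This produces the four-fold partition $\mc L_1\dot\cup\mc L_2\dot\cup\mc L_3\dot\cup\mc L_4=\mc L(T\setminus F)$ that appears in the paper's kernel $\tilde\gamma$ and is essential for making $\gamma'$ a bona fide $L^1$ kernel. Without this step, the kernel $\gamma$ you write down is not well-defined as an integrable function and the appeal to Property~2 cannot be made. The rest of your proposal (commuting $\nabla_{J,\bm\zeta}$, extracting $\nabla_{\phi,\bm\xi}\nabla_{s,F}$ coefficients, inserting $\chi_R=\sum_Q\chi_{Q^c}\chi_Q^c$, and the handling of $A_s$ and $B$) is essentially correct.
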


\begin{proof}
We prove the statement about $B $ only (the proof for $A $, $A_s $ is simpler and largely identical). For a forest $F\in\mf F(X),X\subset \mb L $, we define the set of legs of $F$,
\begin{align*}
\mc L(F) &= \Big\{\big(\ell,x\big),x\in \ell,\ell\in F  \Big\} ,
\end{align*}
and, for its subsets, the projection onto their second component (as a multiset)
\begin{align*}
\mc L\vert_2 &= \Big\{\Big(x,(\ell,x)\in \mc L  \Big)\Big\}
\end{align*}
For $\ul{\sf n}\in \N^{\mc L(F)} $ an assignment of an index $\ul{\sf n}((\ell,x))\in\N $ to every leg of $F$, we define $(\ell,x)^{\ul{\sf n}} = (\ell,(x,\ul{\sf n}(\ell,x))) $ and $\ell^{\ul{\sf n}} = \big\{(x,\ul{\sf n}(\ell,x)),(x',\ul{\sf n}(\ell,x')) \big\} $ for $\ell=\{x,x'\}\in F $. We also define, for every $\mc L\subset \mc L(F) $ $$\mc L\vert_2^{\ul{\sf n}} = \Big\{\Big((x,\ul{\sf n}(\ell,x)),(\ell,x)\in \mc L  \Big)\Big\}\in\bd L. $$ Note that
\begin{align*}
\prod_{\ell\in F'} C(\ell)\partial_\ell &= \sum_{\ul{\sf n}\in \N^{\mc L( F')}} \prod_{\ell\in F'} C(\ell^{\ul{\sf n} })  \prod_{\xi\in\mc L(F')\vert_s^{\ul{\sf n}} } n(\mc L(F')\vert_s^{\ul{\sf n}},\xi)! \cdot \nabla_{\mc L(F')\vert_s^{\ul{\sf n}} }
\end{align*}
In the expression for $A$, a subset $\mc L_1\vert_2^{\ul{\sf n}}\subset\mc L(F')\vert_s^{\ul{\sf n}}   $ hits $A'$ and the remaining derivatives hit the characteristic functions. For each $x\in \{(\mc L(F')\setminus\mc L_1)\vert_2\}=: \mc L_4\vert_2 $ where the latter happened, we interpret multiple derivatives of a single characteristic function weakly and ``integrate by parts''. Because of the ultralocal structure of the characteristic functions, the derivatives coming from integration by parts can either hit $A' $, or the Gaussian (call the corresponding set of legs $ \mc L_2$ and $\mc L_3$, respectively). We evaluate the latter with the Cauchy formula (the radius $\tilde r = \sqrt{\frac{\upmu\lambda_{\Text{min}}(\Re C) }{64{\sf N}} }r $ of the corresponding circle of integration is optimized in Proposition \ref{propprop2}) and obtain
\begin{align*}
\bm\int\limits_{T,Q} \ud\phi\vert_X\ud s\vert_X A'(X;\phi\vert_X;s^T;J\vert_X) &=  \sum_{\substack{ \bm\xi\in \bd L\vert_X  \\ F \in\mf F(X) }} \int_0^1 \ud\bd s^T \int_{\Xi^X} \prod_{x\in X}\ud\phi(x)\,\tilde \gamma\big(T;X,Q;\bm\xi;F;\phi;s^T\big)\\&\qquad\times \mf G_{Q,\bm\lambda}(X;\phi\vert_X;s^T;J\vert_X)\nabla_{\phi,\bm\xi}\nabla_{s,F} A'(X;\phi\vert_X;s^T;J\vert_X)
\end{align*}
with
\begin{align*}
\tilde \gamma &= \delta_{F\subset T} \prod_{x\in X}{\sf N}^{d^F(x)}d^F(x)!d^{T\setminus F}(x)! \sum_{\ul{\sf n}\in \N^{\mc L(T\setminus F)} }    \prod_{x\in X}\frac{\prod_{{\sf n}\in \N} n\big(\mc L(T\setminus F)\vert_2^{\ul{\sf n}},(x,{\sf n})\big)!}{d^{T\setminus F}(x)!} \\&\qquad \times\prod_{\ell\in T\setminus F} C(\ell^{\ul{\sf n} })  \sum_{\substack{ \mc L_1\dot\cup\cdots\dot\cup\mc L_4 = \mc L(T\setminus F)  \\ \mc L_4\vert_2 = \{\mc L_2\vert_2\circ\mc L_3\vert_2\} \\ \mc L_1\vert_2^{\ul{\sf n}}\circ\mc L_2\vert_2^{\ul{\sf n}} = \bm\xi }} (-)^{\vert \mc L_2\circ\mc L_3\vert} \prod_{\xi\in \{\mc L_3\vert_2^{\ul{\sf n}} \}}\frac1{2\pi i} \oint\limits_{\vert \psi(\xi)\vert = \tilde r}\frac{\ud\psi(\xi)}{\psi(\xi)^{1+n(\mc L_3\vert_2^{\ul{\sf n}} ,\xi)}}\\ &\qquad\times \prod_{\xi\in \mc L_4\vert_2^{\ul{\sf n}} } \frac{\phi(\xi)}{\vert\phi(x)\vert} \chi_{Q,\mc L_4\vert_2}(\phi) \cdot \det (2\pi C_{s^T})^{-\half}\exp\left(-\half \big\langle \phi+\psi ,C_{s^T}^{-1} (\phi+\psi) \big\rangle + \frac{\upmu}{8} \Vert\phi\vert_X\Vert_2^2\right) 
\end{align*}
where (note again that $\mc L_4\vert_2 $ is by construction a multiset without repeated elements)
\begin{align*}
\chi_{Q,S}(\phi) &= \chi_{Q^c\setminus S}(\phi)\chi_{Q\setminus S}^c(\phi)\prod_{x\in Q^c\cap S} \delta\Big(\vert\phi(x)\vert=r \Big) \prod_{x\in Q^c\cap S} \left[\delta\Big(\vert\phi(x)\vert= R \Big) - \delta\Big(\vert\phi(x)\vert= r \Big)\right]
\end{align*}
The delta functions restrict the integration of their variables to the sphere, with the usual surface measure (with appropriate signs if ${\sf N} =1$). We now have
\begin{align*}
\bm\vert B\bm\vert_{J,\bm\lambda} &\leq \sup_{\eta\in\mc B_{J,\bm\lambda}' } \sum_{(X,Q;\bm\zeta)\in \aleph_{J}'} \sup_{ \Vert J\Vert_{\infty}\leq\kappa_J^{-1}}  \vert \eta(X,Q;\bm\zeta)\vert\sum_{T\in\mf T(X)} \sum_{\substack{ \bm\xi\in \bd L\vert_X  \\ F \in\mf F(X) }}\\&\qquad\times\Bigg\vert  \int_0^1 \ud\bd s^T \int_{\Xi^X} \prod_{x\in X}\ud\phi(x)\,\tilde \gamma\big(T;X,Q;\bm\xi;F;\phi;s^T\big)  \\&\qqquad\times\mf G_{Q,\bm\lambda}(X;\phi\vert_X;s^T;J\vert_X)\nabla_{\phi,\bm\xi}\nabla_{s,F}\nabla_{J,\bm\zeta} A'(X;\phi\vert_X;s^T;J\vert_X)\Bigg\vert\\
&\leq \sup_{\eta\in\mc B_{J,\bm\lambda}' } \sum_{(X,Q;\bm\zeta)\in \aleph_{J}'}   \sum_{\substack{ \bm\xi\in \bd L\vert_X  \\ F \in\mf F(X) }}\vert \eta(X,Q;\bm\zeta)\vert\sum_{T\in\mf T(X)}  \gamma' \big(T;X,Q;\bm\xi;F\big)  \\&\qquad\times \sup_{(\phi;s;J)\in \mc D_{Q,\bm\lambda}} \mf G_{Q,\bm\lambda}(X;\phi\vert_X;s;J\vert_X)\Big\vert\nabla_{\phi,\bm\xi}\nabla_{s,F}\nabla_{J,\bm\zeta} A'(X;\phi\vert_X;s;J\vert_X)\Big\vert\\
&\leq \bm\Vert A'\bm\Vert_{\phi,\dot{\bm\lambda}}
\end{align*}
by Property 2 with 
\begin{align*}
\gamma' \big(T;X,Q;\bm\xi;F\big) &=  \int_0^1 \ud\bd s^T \int_{\Xi^X} \prod_{x\in X}\ud\phi(x) \;\Big\vert \tilde \gamma\big(T;X,Q;\bm\xi;F;\phi;s^T\big)\Big\vert.
\end{align*}
This proves the claim about $B$. The proof for $A,A_s $ is essentially the same, using 
\begin{align*}
\gamma \big(T;X;\bm\xi;F\big) &=  \delta_{\vert X\vert\geq 2}\int_0^1 \ud\bd s^T \int_{\Xi^X} \prod_{x\in X}\ud\phi(x) \;\Big\vert \tilde \gamma\big(T;X;\bm\xi;F;\phi;s^T\big)\Big\vert
\end{align*}
with $\tilde \gamma $ as above, with $r=R $ (which implies $Q=\emptyset $).

\end{proof}

\begin{rem}\label{remnogenchar}
The proof of Proposition \ref{boundce} is limited to ``ultra-local'' characteristic functions because the integration by parts procedure that removes all but one derivative per point $x\in\mb L $ from the characteristic functions does not generalize in an easy way to the ``non-ultra-local'' case. We leave this for future work.
\end{rem}

\erem
We also need a bound for the purely local activity $A\big(\{x\};J(x)\big) $ and its logarithm:
\begin{prop}\label{blocact}
Set
\begin{align*}
\mu_{C(x,x)}(B_r) &= \int\limits_{\vert\phi(x)\vert\leq r}\ud\mu_{C(x,x)}(\phi(x)) & \vert \mu_{C(x,x)}\vert &= \int_{\Xi}\ud\vert\mu_{C(x,x)}\vert(\phi(x)) ,
\end{align*}
and assume that
\begin{align}\label{poslocact}
e^{-\frac{\upmu}4 r^2} \leq  2^{-\sf N}\left[\max_{x\in\mb L}\vert \mu_{C(x,x)}\vert\right]^{-1} e^{-c_{\Text{P.\ref{blocact}}}}.
\end{align}
where $ c_{\Text{P.\ref{blocact}}}\geq 1$ is some constant defined in the proof\footnote{This assumption implies in particular $\vert \mu_{C(x,x)}(B_r)-1\vert\leq e^{-\frac{\upmu}4 r^2-1}  $}. Let $V_1,V_2 $ satisfy
\begin{align}\label{smallness}
\bm\vert V_1\bm\vert_{1,\dot{\bm\lambda}}  +\bm\vert V_2\bm\vert_{2,\dot{\bm\lambda}}< \frac16\left[\max_{x\in\mb L}\vert \mu_{C(x,x)}\vert\right]^{-1} =: c_{\Text{(\ref{smallness})}}
\end{align}
with a $\dot{\bm\lambda} $ so that $\dot\lambda_J^{-1} \geq \lambda_J^{-1}+\kappa_J^{-1} $ and\footnote{This relation between the size of $V_2$ and $r$ is not invariant under scaling (as opposed to all others, such as (\ref{poslocact}), which only depend on $V_2$ through $c_{pos},c_{pos}' $). Like in the proof of Proposition \ref{boundsintint}, it could be replaced by a scale invariant relation, plus an absolute smallness condition on $\dot\lambda_\phi $, obtained, e.g., from Wick's rule for the Gaussian integral of a polynomial. This is somewhat more complicated, and turns out to have no benefit, so we omit it.}
\begin{align}\label{relvr}
\dot\lambda_2^{-1}\geq r .
\end{align}
Assume also (\ref{aspos}). Define the norm
\begin{align*}
 \Vert F(J(x))\Vert_{\lambda_J} := \sum_{\substack{\bm\zeta\in\bd L\\ \supp\bm\zeta\subset \{x\} } } \lambda_J^{-n(\bm\zeta)}\big\vert \nabla_{J,\bm\zeta}F(J(x))\big\vert .
\end{align*}
Then, whenever $\vert J(x)\vert\leq \kappa_J^{-1} $, we have 
\begin{align*}
\left\Vert  A_s\big(\{x\};J(x)\big)- \mu_{C(x,x)}(B_r) \right\Vert_{\lambda_J} &\leq e\vert \mu_{C(x,x)}\vert\Big(\bm\vert V_1\bm\vert_{1,\dot{\bm\lambda}}  +\bm\vert V_2\bm\vert_{2,\dot{\bm\lambda}}\Big)  \\
\left\Vert  A\big(\{x\};J(x)\big)- \mu_{C(x,x)}(B_r) \right\Vert_{\lambda_J} &\leq e\vert \mu_{C(x,x)}\vert\Big(\bm\vert V_1\bm\vert_{1,\dot{\bm\lambda}}  +\bm\vert V_2\bm\vert_{2,\dot{\bm\lambda}}\Big)  + e^{-\frac\upmu4r^2}
\end{align*}
In particular,
$$ A\big(\{x\};J(x)\big)^{-1},\log A\big(\{x\};J(x)\big),A_s\big(\{x\};J(x)\big)^{-1}\Text{ and }\log A_s\big(\{x\};J(x)\big) $$ exist and 
\begin{align*}
\big\Vert  A_s\big(\{x\};J(x)\big)^{-1} \big\Vert_{\lambda_J} &\leq \left[\vert \mu_{C(x,x)}(B_r)\vert - e\vert \mu_{C(x,x)}\vert\Big(\bm\vert V_1\bm\vert_{1,\dot{\bm\lambda}}  +\bm\vert V_2\bm\vert_{2,\dot{\bm\lambda}}\Big) \right]^{-1}    \\   
\big\Vert  A\big(\{x\};J(x)\big)^{-1} \big\Vert_{\lambda_J} &\leq\left[ \vert \mu_{C(x,x)}(B_r)\vert - e\vert \mu_{C(x,x)}\vert\Big(\bm\vert V_1\bm\vert_{1,\dot{\bm\lambda}}  +\bm\vert V_2\bm\vert_{2,\dot{\bm\lambda}}\Big) -e^{-\frac\upmu4r^2}  \right]^{-1}     \\
\big\Vert\log A_s\big(\{x\};J(x)\big) \big\Vert_{\lambda_J} &\leq   \Big\vert \log  \mu_{C(x,x)}(B_r)\Big\vert \\&\qquad + \log 1- \frac{e\vert \mu_{C(x,x)}\vert}{\vert\mu_{C(x,x)}(B_r)\vert } \Big(\bm\vert V_1\bm\vert_{1,\dot{\bm\lambda}}  +\bm\vert V_2\bm\vert_{2,\dot{\bm\lambda}}\Big)\\
\big\Vert\log A\big(\{x\};J(x)\big) \big\Vert_{\lambda_J} &\leq   \Big\vert \log  \mu_{C(x,x)}(B_r)\Big\vert \\&\qquad + \log 1- \frac{e\vert \mu_{C(x,x)}\vert}{\vert\mu_{C(x,x)}(B_r)\vert } \Big(\bm\vert V_1\bm\vert_{1,\dot{\bm\lambda}}  +\bm\vert V_2\bm\vert_{2,\dot{\bm\lambda}}\Big) -\frac{e^{-\frac\upmu4r^2}}{\vert\mu_{C(x,x)}(B_r)\vert}  
\end{align*}

\end{prop}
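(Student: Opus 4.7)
The key simplification is that for $X=\{x\}$, both $\mf T(X)$ and $\mf F(X)$ contain only the trivial (empty) object, so the BKAR operators reduce to ordinary Gaussian integrals on $\Xi$:
\begin{align*}
A_s(\{x\};J(x))&=\int_{|\phi|\leq r}\ud\mu_{C(x,x)}(\phi)\,e^{V_1(\phi;J)+V_2(x;\phi;J)}, \\
A(\{x\};J(x))&=\int_{|\phi|\leq R}\ud\mu_{C(x,x)}(\phi)\,e^{V_1(\phi;J)+V_2(x;\phi;J)}.
\end{align*}
Hence $A_s-\mu_{C(x,x)}(B_r)=\int_{|\phi|\leq r}\ud\mu(e^V-1)$ and $A-A_s=\int_{r<|\phi|\leq R}\ud\mu\,e^V$. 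The plan is to bound each piece in the power-series norm $\Vert\cdot\Vert_{\lambda_J}$, which is submultiplicative and satisfies $\Vert e^F\Vert_{\lambda_J}\leq e^{\Vert F\Vert_{\lambda_J}}$.

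The heart of the argument is the pointwise inequality $\Vert V(\phi;J+\cdot)\Vert_{\lambda_J}\leq \bm\vert V_1\bm\vert_{1,\dot{\bm\lambda}}+\bm\vert V_2\bm\vert_{2,\dot{\bm\lambda}}$ for $|\phi|\leq r$, $|J|\leq\kappa_J^{-1}$. For $V_1$, I expand $V_1(\phi;J+K)$ as a polynomial in $K$; since $|J(x,n)|+\lambda_J^{-1}\leq\kappa_J^{-1}+\lambda_J^{-1}\leq\dot\lambda_J^{-1}$, each factor of $(J+K)(\zeta')$ contributes at most $\dot\lambda_J^{-1}$ to the norm, while $|\phi(\bm\xi)|\leq r^{n(\bm\xi)}\leq(r+\dot\lambda_1^{-1})^{n(\bm\xi)}$. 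The resulting sum $\sum_{\bm\xi,\bm\zeta'\subset\{x\}}(r+\dot\lambda_1^{-1})^{n(\bm\xi)}\dot\lambda_J^{-n(\bm\zeta')}|v_1|$ is exactly what the test function of Property~1~(ii) (with $X=\{x\}$, $r'=r$, trivial tree) evaluates, giving $\leq\omega(r)\bm\vert V_1\bm\vert_{1,\dot{\bm\lambda}}\leq\bm\vert V_1\bm\vert_{1,\dot{\bm\lambda}}$. The analogous estimate for $V_2$ uses Property~1~(iii) with $X=\{x\}$, $f=\delta_x$, and the constraint $r\leq\dot\lambda_2^{-1}$ of (\ref{relvr}).

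By (\ref{smallness}) and the elementary fact $|\mu_{C(x,x)}|\geq 1$ (since $|\det C^{-1}|\geq\det\Re C^{-1}$ for normal $C^{-1}$ with spectrum in $\{\Re z>0\}$), we have $\Vert V\Vert_{\lambda_J}\leq \tfrac{1}{6|\mu|}\leq\tfrac16$; combined with $e^t-1\leq te^t$ and $e^{1/6}\leq e$, integrating over $d|\mu|$ yields the $A_s$ bound. For the tail $A-A_s$: the Gaussian Markov inequality $\int_{|\phi|>r}\ud|\mu_{C(x,x)}|\leq 2^{\sf N}|\mu_{C(x,x)}|e^{-\upmu r^2/4}$ (obtained from the ratio $\det\Re C^{-1}/\det(\Re C^{-1}-\tfrac{\upmu}{2}\id)\leq 2^{\sf N}$) gives the desired Gaussian damping, while $\sup_{|\phi|\leq R,|J|\leq\kappa_J^{-1}}\Vert e^V\Vert_{\lambda_J}$ is finite: Property~1~(ii) at $r'=R$ bounds $\Vert V_1\Vert_{\lambda_J}\leq\omega(R)\bm\vert V_1\bm\vert$, while the positivity (\ref{condpos}) dominates the polynomial $V_2'=V_2(\cdot;J)-V_2(\cdot;0)$ of degree $<2M$ via Young's inequality applied to $-c_{pos}\dot\lambda_\phi^{2M}\phi^{2M}$. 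Choosing $c_{\Text{P.\ref{blocact}}}$ in (\ref{poslocact}) larger than the logarithm of this supremum plus $\sf N\log 2$ then delivers $\Vert A-A_s\Vert_{\lambda_J}\leq e^{-\upmu r^2/4}$.

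The bounds on $A^{\pm1}, A_s^{\pm1}, \log A, \log A_s$ follow by writing $A=\mu_{C(x,x)}(B_r)(1+G)$ with $G=(A-\mu_{C(x,x)}(B_r))/\mu_{C(x,x)}(B_r)$. Smallness of $\Vert G\Vert_{\lambda_J}$ (hence $<1$) is ensured by the estimates above together with $|\mu_{C(x,x)}(B_r)|\geq 1-e^{-\upmu r^2/4-1}$ (the footnote consequence of (\ref{poslocact})); the geometric series for $(1+G)^{-1}$ and Taylor series $\log(1+G)=\sum_{k\geq 1}(-1)^{k+1}G^k/k$ combined with submultiplicativity yield the announced bounds, noting that the constant $\log\mu_{C(x,x)}(B_r)$ contributes only $|\log\mu_{C(x,x)}(B_r)|$ to $\Vert\log A\Vert_{\lambda_J}$. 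The main obstacle is the uniform bound on $\Vert e^V\Vert_{\lambda_J}$ in the intermediate shell $r<|\phi|\leq R$: the positivity (\ref{condpos}) is a bound on $\Re V_2$, so we must carefully track how it governs $|e^{V_2}|$ together with its analytic $J$-dependent corrections while remaining compatible with (\ref{aspos}) so that $c_{\Text{P.\ref{blocact}}}$ depends only on the theorem's data.
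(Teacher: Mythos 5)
Your proposal reproduces the paper's proof. You use the same decomposition ($A_s-\mu_{C(x,x)}(B_r)$ as the small-field correction and $A-A_s$ as the annulus $r<\vert\phi\vert\leq R$), the same reduction of the BKAR operator to a plain Gaussian integral on $\{x\}$, the same submultiplicativity bound $\Vert e^V-1\Vert\leq\Vert V\Vert e^{\Vert V\Vert}$ driven by Properties~1~(ii)/(iii) with $X=\{x\}$ and $r'=r$ resp.\ $r'=R$, the same use of $(\ref{condpos})$ plus Young's inequality on the annulus, and the same Gaussian tail estimate absorbed into the constant of $(\ref{poslocact})$ before taking the Neumann/Taylor series for the inverse and logarithm.

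One observation on a detail you make explicit that the paper leaves implicit: you use $\vert\mu_{C(x,x)}\vert\geq 1$ (so that $(\ref{smallness})$ forces $\bm\vert V_1\bm\vert+\bm\vert V_2\bm\vert<\tfrac16$ and $e^{1/6}\leq e$). Your justification is correct: for symmetric normal $C$ with $\sigma(C^{-1})\subset\{\Re z>0\}$, $\vert\mu_C\vert=(\vert\det C^{-1}\vert/\det\Re C^{-1})^{1/2}\geq 1$ because the eigenvalues of $\Re C^{-1}$ are the real parts of those of $C^{-1}$. Your Gaussian tail constant $2^{\sf N}$ is slightly lossy (the split $\Re C^{-1}=\tfrac\upmu2+(\Re C^{-1}-\tfrac\upmu2)$ actually gives $2^{{\sf N}/2}$), but this is harmless since $(\ref{poslocact})$ budgets a full $2^{-\sf N}$ and the remaining slack is absorbed into $c_{\text{P.\ref{blocact}}}$ exactly as you say.
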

\begin{proof}
Note that 
\begin{align*}
A_s\big(\{x\};J(x)\big) &= \mu_{C(x,x)}(B_r) + A_{0,r}\big(\{x\};J(x)\big)\\
A\big(\{x\};J(x)\big) &= \mu_{C(x,x)}(B_r)+ A_{0,r}\big(\{x\};J(x)\big) + A_{r,R}\big(\{x\};J(x)\big)
\end{align*}
with
\begin{align*}
A_{0,r}\big(\{x\};J(x)\big) &= \int\limits_{0\leq\vert \phi(x)\vert\leq r} \ud \mu_{C(x,x)} \big(\phi(x)\big) \Big[A'\big(x;\phi(x);J(x)\big)-1\Big]\\
A_{0,r}\big(\{x\};J(x)\big) &= \int\limits_{r<\vert \phi(x)\vert\leq R} \ud \mu_{C(x,x)} \big(\phi(x)\big) A'\big(x;\phi(x);J(x)\big)\\
A'\big(x;\phi(x);J(x)\big)&= \exp\left(\sum_{\supp\bm\xi,\supp\bm\zeta\subset\{x\}}\big(v_1-v_2\big)(\bm\xi,\bm\zeta) \phi(\bm\xi)J(\bm\zeta)    \right)  
\end{align*}
Here, we set $v_2 (\bm\xi,\bm\zeta) = 0  $ if $n(\bm\xi\circ\bm\zeta)>2M $. Let $\delta = 0 $ or $1$. We have
\begin{align*}
\Bigg\Vert\int\limits_{r<\vert \phi(x)\vert\leq r'} \ud \mu_{C(x,x)}& \big(\phi(x)\big) \Big[A'\big(x;\phi(x);J(x)\big) -\delta\Big] \Bigg\Vert_{\lambda_J} \\& \leq \int\limits_{r<\vert \phi(x)\vert\leq r'} \ud \vert\mu_{C(x,x)}\vert \big(\phi(x)\big) \Vert  A'\big(x;\phi(x);J(x)\big)  -\delta \Vert_{\lambda_J} ,
\end{align*}
where in the norm on the right hand side, we regard $\phi(x) $ as a parameter. By the product rule,
\begin{align*}
\Vert  A'\big(x;\phi(x);J(x)\big) -\delta \Vert_{\lambda_J} \leq \exp\left(\Re\!\!\!\sum_{\supp\bm\xi \subset\{x\} }\!\!\!\big(v_1-v_2\big)(\bm\xi,-)\phi(\bm\xi) +  \Vert V'(x;\phi(x);J(x))\Vert_{\lambda_J} \right)-\delta
\end{align*}
Here, $$ V'(\phi(x)) =\sum_{\supp\bm\xi\subset\supp\bm\zeta=\{x\} }\big(v_1-v_2\big)(\bm\xi,\bm\zeta) \phi(\bm\xi)J(\bm\zeta) .$$ In the case of $A_{0,r} $, we conclude by the assumptions
\begin{align*}
\sup_{\vert \phi(x)\vert\leq r} \Vert  A'\big(x;\phi(x);J(x)\big)-1  \Vert_{\lambda_J} \leq   \Big(\bm\vert V_1\bm\vert_{1,\dot{\bm\lambda}}  +\bm\vert V_2\bm\vert_{2,\dot{\bm\lambda}}\Big)\exp\Big(\bm\vert V_1\bm\vert_{1,\dot{\bm\lambda} }  +  \bm\vert V_2\bm\vert_{2,\dot{\bm\lambda}}  \Big)
\end{align*}
and so $$\Vert A_s(\{x\};J(x))\Vert_{\lambda_J}\leq e\vert \mu_{C(x,x)}\vert \Big(\bm\vert V_1\bm\vert_{1,\dot{\bm\lambda}}  +\bm\vert V_2\bm\vert_{2,\dot{\bm\lambda}}\Big)  .$$ For $A_{r,R} $, we have, as in the proof of Proposition \ref{boundsintint},
\begin{align*}
\sup_{r<\vert \phi(x)\vert\leq R}\Vert  A'\big(x;\phi(x);J(x)\big)  \Vert_{\lambda_J} \leq  \exp\Big[\omega(R)\cdot \bm\vert V_1\bm\vert_{1,\dot{\bm\lambda} }  + \tilde c_{\Text{P.\ref{blocact}}}  \bm\vert V_2\bm\vert_{2,\dot{\bm\lambda}} + c_{pos}'\Big]
\end{align*}
with $\tilde c_{\Text{P.\ref{blocact}}} = 1+\frac{2^{4M^2}} {c_{pos}^{2M-1}}. $ Using (\ref{aspos}),
\begin{align*}
\lambda_{\Text{min}}(\Re C(x,x)^{-1})\geq \upmu,
\end{align*}
(cf. Lemma \ref{lemhadamard}), and that $ \vert \mu_{C(x,x)}\vert $ is the ratio of two determinants, this proves 
\begin{align*}
\Vert A_{r,R}\big(\{x\};J(x)\big)\Vert_{\lambda_J}\leq  e^{-\frac{\upmu}{2}r^2 } \vert \mu_{C(x,x)}\vert e^{c_{\Text{P.\ref{blocact}}} }.
\end{align*}
with $c_{\Text{P.\ref{blocact}}} =  4\log {\sf N}+\tilde c_{\Text{P.\ref{blocact}}} + c_{pos}' $. The Proposition follows.

\end{proof}

\subsection{Mayer expansion}

We start the bounds needed for the Mayer expansion of (\ref{mealg}) and Lemma \ref{sflfme} by controlling the normalization $A\to\dot A $.

\begin{prop}\label{bnormalization}
Let $A(\varsigma;J) $ be a family, indexed by $\varsigma\in \mc S_J^\bullet $, of analytic functions of $J \in \Xi_{\mb C}^{\mb L}$, and let $a\big(\{x\};J(x)\big) $ be so that, for every $x\in\mb L $, $\sup_{\vert J(x)\vert\leq\kappa_J^{-1}}\Vert a\big(\{x\},J(x)\big)\Vert_{\lambda_J}\leq c^3_{\Text{ii}}. $ Define, as in (\ref{mealg}),
\begin{align*}
\dot A(\varsigma;J) &= A(\varsigma;J)\cdot\prod_{x\in X}  a\big(\{x\},J(x)\big).
\end{align*}
Then
\begin{align*}
\bm\vert \dot A\bm\vert_{J,\bm\lambda}^\bullet \leq \bm\vert  A\bm\vert_{J,\Lambda^3_{\Text{ii}}\bm\lambda}^\bullet
\end{align*}
\end{prop}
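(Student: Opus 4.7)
The plan is to bound $\bm\vert\dot A\bm\vert^{\bullet}_{J,\bm\lambda}$ by unfolding the definition, applying the Leibniz rule to the $J$-derivatives encoded in the $\bm\zeta$ coordinate of $\alpha$, and then absorbing the derivatives of the local factor $\prod_{x\in X}a(\{x\},J(x))$ into the test function via Property 3 (ii).

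First I would fix $\eta\in\mc B_{J,\bm\lambda}^\bullet$ and $\alpha=(\varsigma;\bm\zeta)\in\aleph_J^\bullet$ with $X=X(\varsigma)$, and apply $\nabla_{J,\bm\zeta}$ to the product $A(\varsigma;J)\cdot\prod_{x\in X}a(\{x\},J(x))$. Since each factor $a(\{x\},J(x))$ depends only on $J(x)$, the derivatives factorize: writing $\bm\zeta=\bm\zeta_A\circ\bm\zeta'$ with $\bm\zeta'=\bigcirc_{x\in X}\bm\zeta'_x$, $\supp\bm\zeta'_x\subset\{x\}$, the Leibniz rule yields
\begin{align*}
\dot A(\alpha;J)=\sum_{\bm\zeta_A\circ\bm\zeta'=\bm\zeta}\nabla_{J,\bm\zeta_A}A(\varsigma;J)\,\prod_{x\in X}\nabla_{J,\bm\zeta'_x}a(\{x\},J(x)).
\end{align*}
Taking absolute values and then the supremum over $\Vert J\Vert_\infty\leq \kappa_J^{-1}$ inside the sum allows the sup to be distributed onto the $A$ factor and each of the $\vert X\vert$ local factors separately, as the latter only depend on $J$ through the single coordinate $J(x)$.

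Next I would define a function $\tilde a\in\mb C^{\bd L}$ by $\tilde a(\bm\zeta')=\prod_{x\in X}\sup_{\vert J(x)\vert\leq\kappa_J^{-1}}\vert\nabla_{J,\bm\zeta'_x}a(\{x\},J(x))\vert$, where $\bm\zeta'_x\subset\bm\zeta'$ is the sub-multiset supported at $x$. The decomposition of $\bm\zeta'\in\bd L\vert_X$ into the pieces $(\bm\zeta'_x)_{x\in X}$ is unique, so
\begin{align*}
\sum_{\bm\zeta'\in\bd L\vert_X}\lambda_J^{-n(\bm\zeta')}\vert\tilde a(\bm\zeta')\vert=\prod_{x\in X}\sup_{\vert J(x)\vert\leq\kappa_J^{-1}}\Vert a(\{x\},J(x))\Vert_{\lambda_J}\leq (c^3_{\Text{ii}})^{\vert X\vert},
\end{align*}
which is exactly the hypothesis of Property 3 (ii) on the coefficient $\tilde a$. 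That property then gives $\eta\tilde a\in\mc B_{J,\Lambda^3_{\Text{ii}}\bm\lambda}^\bullet$.

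Finally I would assemble the bound. Summing the Leibniz expansion over $\alpha$ and reorganizing the sum over $\bm\zeta$ into an outer sum over $\bm\zeta_A$ (retained by $A$) and an inner sum over $\bm\zeta'$ (absorbed into $\tilde a$), the inner sum is precisely the definition of $(\eta\tilde a)(\varsigma;\bm\zeta_A)$. Hence
\begin{align*}
\sum_{\alpha}\sup_J\vert\eta(\alpha)\dot A(\alpha;J)\vert\leq\sum_{\alpha_A}\vert(\eta\tilde a)(\alpha_A)\vert\sup_J\vert A(\alpha_A;J)\vert\leq\bm\vert A\bm\vert^{\bullet}_{J,\Lambda^3_{\Text{ii}}\bm\lambda},
\end{align*}
and taking the supremum over $\eta\in\mc B_{J,\bm\lambda}^\bullet$ yields the claim. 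The only real subtlety is keeping the bookkeeping of multisets consistent (so that $\bm\zeta'\mapsto(\bm\zeta'_x)_{x\in X}$ is a bijection with the right combinatorial weight and the definition of $\tilde a$ matches the one fed into Property 3 (ii)); once this is in place the argument is formal.
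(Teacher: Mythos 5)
Your proposal correctly reproduces the paper's argument, which the author leaves as ``obvious from the product rule and Property 3 (ii)''; you have filled in exactly the intended steps (Leibniz expansion in the $\bm\zeta$ derivatives, factorization into single--site pieces using that each $a(\{x\},J(x))$ depends only on $J(x)$, and absorption of the local coefficient into the test function via Property 3 (ii)). However, the equality you claim in your second display is wrong in a direction that matters. You define $\tilde a(\bm\zeta')=\prod_{x\in X}\sup_{\vert J(x)\vert\leq\kappa_J^{-1}}\vert\nabla_{J,\bm\zeta'_x}a(\{x\},J(x))\vert$ and assert
\begin{align*}
\sum_{\bm\zeta'\in\bd L\vert_X}\lambda_J^{-n(\bm\zeta')}\vert\tilde a(\bm\zeta')\vert
\;=\;\prod_{x\in X}\sup_{\vert J(x)\vert\leq\kappa_J^{-1}}\Vert a(\{x\},J(x))\Vert_{\lambda_J}.
\end{align*}
The left side factorizes as $\prod_{x\in X}\sum_{\bm\zeta'_x}\lambda_J^{-n(\bm\zeta'_x)}\sup_{J(x)}\vert\nabla_{J,\bm\zeta'_x}a\vert$, with the supremum taken separately at each derivative order, while the right side is $\prod_{x\in X}\sup_{J(x)}\sum_{\bm\zeta'_x}\lambda_J^{-n(\bm\zeta'_x)}\vert\nabla_{J,\bm\zeta'_x}a\vert$. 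Since $\sum\sup\geq\sup\sum$, the left side dominates the right, so the stated hypothesis $\sup_{J(x)}\Vert a(\{x\},J(x))\Vert_{\lambda_J}\leq c^3_{\Text{ii}}$ does not by itself give the $\ell^1$ bound required to invoke Property 3 (ii). What the argument actually needs is the stronger, sup-inside-the-sum input
\begin{align*}
\sum_{\supp\bm\zeta\subset\{x\}}\lambda_J^{-n(\bm\zeta)}\sup_{\vert J(x)\vert\leq\kappa_J^{-1}}\vert\nabla_{J,\bm\zeta}a(\{x\},J(x))\vert\leq c^3_{\Text{ii}}.
\end{align*}
This is a subtlety in the paper's own formulation rather than a defect of your structure, but since the paper's proof is a one-liner and you are supplying the details, it should be flagged: either strengthen the hypothesis as above (and check it in the application, where $a=A(\{x\},J(x))^{-1}$ comes from Proposition 6), or note explicitly that the sup and sum are interchanged with loss. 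You should also notice that your $\tilde a$ depends on $X$ (through the factors $\sup_{J(x)}\vert a(\{x\},J(x))\vert$ for $x\in X$ with $\bm\zeta'_x=\emptyset$), so it is not a single element of $\mb C^{\bd L}$ as Property 3 (ii) nominally requires; since these extra factors are each bounded by $c^3_{\Text{ii}}$, they contribute at most $(c^3_{\Text{ii}})^{\vert X\vert}$ and can be absorbed into the $\lambda_X$ rescaling of $\Lambda^3_{\Text{ii}}$, but this bookkeeping should be made explicit.
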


\begin{proof}
This is obvious from the product rule and Property 3 (ii).
\end{proof}

\noindent
Next come the bounds for the actual Mayer expansion:

\begin{prop}\label{bme}
(i) Let $\dot A(X;J) $ be a family, indexed by $X\subset\mb L$, of analytic functions of $J\in\Xi_{\mb C}^{\mb L} $, such that $\bm\vert \dot A\bm\vert_{J,\dot{\bm\lambda}} <c_J(\bm\lambda)^{-1} $, where $\dot{\bm\lambda} = \Lambda_J\bm\lambda $ with $\Lambda_J $ and $c_J $ from Property 3 (i). Define the family $W(X;J) $ ($\vert X\vert\geq 2 $) by (\ref{mealg}). Then
\begin{align*}
\bm\vert W\bm\vert_{J,\bm\lambda}\leq \frac18 \frac{\bm\vert \dot A\bm\vert_{J,\dot{\bm\lambda}}}{c_J(\bm\lambda)^{-1}-\bm\vert \dot A\bm\vert_{J,\dot{\bm\lambda}}}.
\end{align*}
(ii) Let $\mc V(Z;J) $ be a family, indexed by $Z\subset\mb L $, and $B(X,Q;J) $ a family, indexed by $(X,Q)\in {2^{\mb L}}' $, of analytic functions of $J\in\Xi_{\mb C}^{\mb L} $. Assume that $\bm\vert \mc V\bm\vert_{J,\dot{\bm\lambda}}  + \bm\vert B\bm\vert'_{J,\dot{\bm\lambda}}<\tilde c_J(\bm\lambda)^{-1} $, where $\dot{\bm\lambda} = \Lambda^3_{\Text{iii}}\tilde\Lambda_J\bm\lambda $. Define the family $\dot B(Z,X,Q;J) $ as in Lemma \ref{sflfme}. Then 
\begin{align*}
\bm\vert \dot B\bm\vert^{\sim}_{J,\bm\lambda}\leq \frac18 \frac{\bm\vert \mc V\bm\vert_{J,\dot{\bm\lambda}} + \bm\vert B\bm\vert'_{J,\dot{\bm\lambda}}  }{\tilde c_J(\bm\lambda)^{-1}-\bm\vert \mc V\bm\vert_{J,\dot{\bm\lambda}} -\bm\vert B\bm\vert'_{J,\dot{\bm\lambda}} }.
\end{align*}
(iii) Let $\dot B(Z,X,Q;J) $ be a family, indexed by $(Z,X,Q)\in \tilde 2^{\mb L}$, of analytic functions of $J\in\Xi_{\mb C}^{\mb L} $, such that $\bm\vert \dot B\bm\vert^{\sim}_{J,\dot{\bm\lambda}} < \tilde c_J(\bm\lambda)^{-1} $ with $\dot{\bm\lambda}=\tilde \Lambda_J\bm\lambda $. Define the family $\mc L(Z,X,Q;J) $ as in Lemma \ref{sflfme}. Then
\begin{align*}
\bm\vert \mc L\bm\vert^{\sim}_{J,\bm\lambda}\leq \frac18 \frac{\bm\vert \dot B\bm\vert^{\sim}_{J,\dot{\bm\lambda}}}{\tilde c_J(\bm\lambda)^{-1}-\bm\vert \dot B\bm\vert^{\sim}_{J,\dot{\bm\lambda}}}.
\end{align*}
\end{prop}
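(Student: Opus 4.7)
The three parts share a common three-step structure, which we execute as follows: (a) bound the Ursell function $\rho$ (in (i), (iii)) or the connectedness indicator on $\mc C(Z,X,Q)$ (in (ii)) by a sum over spanning trees using Lemma \ref{lemursell}; (b) recognize the resulting product of $\eta$ with the tree's edge constraints as a $\delta_T\eta$ of the kind defined in Property 3(ii), thereby placing it in $\mc B_{J,\bm\lambda}^\bullet(T)$; (c) apply Property 3(i) to bound the per-tree sum over $\alpha_m$; (d) conclude by a Cayley-type tree enumeration that converts the sum over trees and over polymer number into a geometric series in the input norm.

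For part (i), the product rule in $J$ gives
\begin{align*}
\nabla_{J,\bm\zeta} W(X;J) &= \sum_{n\ge 1}\frac{1}{n!}\sum_{\substack{(X_m,\bm\zeta_m)_1^n\\ \cup X_m=X,\;\circ\bm\zeta_m=\bm\zeta\\ \bm\zeta_m\subset X_m,\;|X_m|\ge 2}} \rho(\{X_m\}) \prod_m \nabla_{J,\bm\zeta_m}\dot A(X_m;J|_{X_m}).
\end{align*}
Pairing with $\eta\in\mc B_{J,\bm\lambda}$, taking $\sup_J$ termwise, and applying Lemma \ref{lemursell} to bound $|\rho(\{X_m\})|\le\sum_{T\in\mf T(\ul n)}\delta_{T\subset\mc G(X_1,\ldots,X_n)}$, we set $\alpha_m=(X_m;\bm\zeta_m)$ so that $\supp\alpha_m=X_m$; then $|\eta(\cup X_m;\circ\bm\zeta_m)|\cdot\delta_{T\subset\mc G}$ is exactly $|\delta_T\eta(\alpha_1,\ldots,\alpha_n)|$, and Property 3(ii) puts $\delta_T\eta\in\mc B_{J,\bm\lambda}(T)$. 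Property 3(i) with $F_m(\alpha_m)=\sup_J|\dot A(\alpha_m;J)|$ bounds the $\alpha$-sum by $(c_J/8)^n\prod_m d^T(m)!\cdot\bm\vert\dot A\bm\vert_{J,\dot{\bm\lambda}}^n$, and a Cayley-type identity $\sum_{T\in\mf T(\ul n)}\prod_m d^T(m)!\le C^n n!$ combined with the $1/n!$ ordering factor reduces the remaining sum to $\tfrac{1}{8}\sum_{n\ge 1}(c_J\bm\vert\dot A\bm\vert_{J,\dot{\bm\lambda}})^n$, evaluating to the claimed closed form under $c_J\bm\vert\dot A\bm\vert_{J,\dot{\bm\lambda}}<1$.

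Part (iii) is the same argument verbatim, with $\tilde\alpha=(Z,X,Q;\bm\zeta)\in\tilde\aleph_J$, $\supp\tilde\alpha=Z\cup X$, and $\tilde c_J,\tilde\Lambda_J$ in place of $c_J,\Lambda_J$. Part (ii) is analogous, but the polymers in $\mc C(Z,X,Q)$ come in two species: $\mc V$-vertices $Z_{m'}$ (of type $\aleph_J$) and $B$-vertices $(X_m,Q_m)$ (of type $\aleph_J'$). The connectedness constraint is again bounded by $\sum_{T}\delta_{T\text{ spanning tree of }\mc G}$, and the projections $\mc R,\mc R'$ of Property 3(iii) split the joint test function on $\tilde\aleph_J$ into $\mc B_J$ and $\mc B_J'$ pieces (producing the $\Lambda^3_{\mathrm{iii}}$ in $\dot{\bm\lambda}$); applying Property 3(i) species-by-species and summing at each tree vertex over the species choice yields the combined per-vertex factor $\bm\vert\mc V\bm\vert_{J,\dot{\bm\lambda}}+\bm\vert B\bm\vert'_{J,\dot{\bm\lambda}}$, after which the same tree enumeration produces the stated bound. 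The main obstacle is the Cayley-type combinatorial identity used in step (d): the sum $\sum_T\prod_m d^T(m)!$ must be controlled tightly enough that the $(c_J/8)^n$ from Property 3(i), together with $1/n!$, yields exactly the claimed geometric series (the factor $1/8$ in Property 3(i) is calibrated for this); the mixed-species bookkeeping in part (ii) via Property 3(iii) is a secondary but essential technical point.
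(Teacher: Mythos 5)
Your proposal follows the paper's proof essentially step for step: bound $\rho$ (or the connectedness constraint on $\mc C$) by a sum over spanning trees via Lemma~\ref{lemursell}, absorb the tree intersection constraints into the test function via Property~3(ii) so that it lands in $\mc B_{J,\bm\lambda}^\bullet(T)$, apply Property~3(i) factor by factor, and close the sum with the Cayley-type bound $\sum_{T\in\mf T(\ul n)}\prod_m d^T(m)!$ from Lemma~\ref{lemvolumeargument}; the two-species handling in part (ii) via $\mc R,\mc R'$ and $\Lambda^3_{\text{iii}}$ is also exactly the paper's. The only imprecision is your stated Cayley bound ``$\le C^n n!$'': what is actually needed to produce $\tfrac18\sum_n(c_J a)^n$ after dividing by $8^n n!$ is $\sum_T\prod_m d^T(m)!\le n!\,8^{n-1}$ (which holds, since $\sum_T\prod_m d^T(m)!=(n-2)!\binom{3n-3}{n-2}\le (n-2)!\,8^{n-1}$), and you implicitly acknowledge this calibration but don't spell it out; the paper is similarly terse at this point.
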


\begin{proof}
(i) By Lemma \ref{lemursell}, we have
\begin{align*}
\big\vert \rho\big(\{X_m\}_1^n\big)\big\vert &\leq \sum_{T\subset \mc G(\{X_m\}_1^n)} \prod_{\{m,m'\}\in T}\delta_{X_m\cap X_{m'}\neq\emptyset}.
\end{align*}
Therefore
\begin{align*}
\bm\vert W\bm\vert_{J,\bm\lambda} &\leq \sup_{\eta\in \mc B_{J,\bm\lambda}} \sum_{n\geq 1}\frac1{n!} \sum_{T\in\mf T(\ul n)}\sum_{\substack{X_m\subset\mb L \\ \bm\zeta_m\in\bd L\vert_{X_m} \\ m\in \ul n}}  \left\vert \eta\big(\cup_m X_m,\circ_m\bm\zeta_m\big) \prod_{\{m,m'\}\in T}\delta_{X_m\cap X_{m'}\neq\emptyset} \right\vert \\&\qqquad\qqquad\times \prod_{m=1}^n \sup_{\Vert J\Vert_\infty\leq\kappa_J^{-1}}\vert\nabla_{J,\bm\zeta_m} \dot A(X_m;J)\vert\\
&\leq \sum_{n\geq 1}\frac1{n!} \sum_{T\in\mf T(\ul n)}\sup_{\eta\in \mc B_{J,\bm\lambda}(T)}\sum_{\substack{X_m\subset\mb L \\ \bm\zeta_m\in\bd L\vert_{X_m} \\ m\in \ul n}}  \vert \eta\big((X_m,\bm\zeta_m)_1^n\big) \vert   \prod_{m=1}^n \sup_{\Vert J\Vert_\infty\leq\kappa_J^{-1}}\vert\nabla_{J,\bm\zeta_m} \dot A(X_m;J)\vert \\
&\leq  \sum_{n\geq 1}\frac1{n!}\left(\frac{c_J(\bm\lambda)}8 \bm\vert \dot A\bm\vert_{J,\dot{\bm\lambda}} \right)^n \sum_{T\in\mf T(\ul n)} \prod_{m=1}^n d^T(m)! \\
&\leq \frac18 \frac{\bm\vert \dot A\bm\vert_{J,\dot{\bm\lambda}}}{c_J(\bm\lambda)^{-1}-\bm\vert \dot A\bm\vert_{J,\dot{\bm\lambda}}}
\end{align*}
as claimed. The argument for (iii) is identical. For (ii), we first regard $$\mc V(Z;J) \equiv \delta_{X,\emptyset}\delta_{Q,\emptyset}\mc V(Z;J)\qquad\text{and}\qquad\dot B(X,Q;J) \equiv \delta_{Z,\emptyset}\dot B(X,Q;J) $$ as families, indexed by $(Z,X,Q)\in \tilde 2^{\mb L} $. By Property 3 (iii), their $\bm\vert \cdot\bm\vert^\sim_{J,\bm\lambda} $ norms are bounded by their natural $\bm\vert\cdot\bm\vert^\bullet_{J,\Lambda^3_{\Text{iii}}\bm\lambda } $ norms ($\bullet =  $ void or $\phantom{}' $). We then use the a crude bound of the type
\begin{align*}
\sum_{(\{Z_{m'}\}_1^{n'},\{(X_m,Q_m)\}_1^n )\in\mc C(Z,X,Q) } \cdots \leq \sum_{n+n'\geq 1} \frac1{n!n'!} \sum_{T\in\mf T(\ul{n+n'})} \sum_{\substack{Z_m\subset\mb L \\ (X_{m'},Q_{m'})\in {2^{\mb L}}' }}&\prod_{\{a,a'\}\in T}\delta_{Y_a\cap Y_{a'}\neq\emptyset} \cdots \\ Y_a &= \left\{\begin{array}{cc} X_a & a\leq n \\ Z_{a-n} & a>n \end{array} \right.
\end{align*}
and argue as before. This concludes the proof of the proposition.

\end{proof}

\subsection{The bound for the logarithm}
The bounds from the previous propositions and the algebra of Theorems \ref{algebra} and \ref{algebralfsf} combine into the following theorem on $\log\mc Z(J) $, whose proof is a meditation on the flow chart of Remark \ref{remflowchart}, the bounds of this section, and Properties 3 (iv) and (v).

\begin{thm}\label{thmabound}
Let $C\in\Text{End }\mb C^{\mb L\times\N} $ be a covariance with $\lambda_{\Text{min}}(\Re C^{-1})=:\upmu>0 $. Let $V_1(\phi;J)$ be a power series in its arguments, and let $V_2(\phi;J) $ be polynomial of maximal total degree $2M$, as in Proposition \ref{aipgr}. Suppose that $\bm\vert V_1\bm\vert_{1,\dot{\bm\lambda}} +\bm\vert V_2\bm\vert_{2,\dot{\bm\lambda}}<c_{\Text{(\ref{smallness})}}$ for some $\dot{\bm\lambda}\in\bm\Lambda $. Assume
\begin{enumerate}
\item The positivity conditions (\ref{poslocact}) and (\ref{stabassumr}) below for $r$, associated to the large field decomposition.
\item The stability condition (\ref{aspos}) associated to $V_1$, for some $R>r>0 $ and nondecreasing $ \omega:[r,R]\to\mb R$ with $\omega(r)\leq 1 $.
\item The stability condition (\ref{condpos}) associated to $V_2$, for some $c_{pos}>0,c_{pos}'\geq 0. $
\item The condition $r\leq \dot\lambda_2^{-1} $ of (\ref{relvr}).
\end{enumerate}
Fix $\kappa_J>0 $.\\
(i) For the cluster expansion without explicit large field decomposition, if:
\begin{align*}
\dot{\bm\lambda} &= \Lambda_{\Text{P.\ref{boundsintint}}}\bm\lambda\ob 1 &\Text{ for some }&\bm\lambda\ob 1 ; \\
\bm\lambda\ob 1 &= \Lambda^2\bm\lambda\ob 2 &\Text{ for some }&\bm\lambda\ob 2 ;\\
\bm\lambda\ob 2 &= \Lambda^3_{\Text{ii}}\bm\lambda\ob 3  &\Text{ for some }&\bm\lambda\ob 3  &&\Text{ (with }\Lambda^3_{\Text{ii}} \Text{ determined }\\
\bm\lambda\ob 3 &= \Lambda_J\bm\lambda\ob 4 &\Text{ for some }&\bm\lambda\ob 4 ; &&\qquad\Text{by the choice } c^3_{\Text{ii}} =  2 \sup_{x\in\mb L}\big\vert \mu_{C(x,x)}(B_r)\big\vert^{-1}\Text{)};\\
\bm\lambda\ob 4 &=\Lambda^3_{\Text{iv}}\bm\lambda &\Text{ for some }&\bm\lambda;
\end{align*}
and if we have
\begin{align*}
\dot\lambda\ob 1_X&\leq \half c_J(\bm\lambda\ob 4)^{-1}  c_{\Text{P.\ref{boundsintint}}}^{-1} &\Text{ with }&\dot{\bm\lambda}\ob1 = \Lambda^1\bm\lambda\ob 1 \\ 
\dot\lambda\ob 4_J&\geq \dot \lambda_{J} &\Text{ with }&\dot{\bm\lambda}\ob4 = \Lambda^3_{\Text{v}}\bm\lambda\ob4 
\end{align*}
then, for any $J $ with $\Vert J\Vert_\infty\leq \kappa_J^{-1} $, $\mc Z(J) $ as defined by (\ref{logz}) has a logarithm, and
\begin{align*}
\bm\vert \log \mc Z\bm\vert_{\bm\lambda} \leq   1 + \frac1{\dot\lambda\ob 4_X}\sup_{x\in\mb L}\Big\vert \log  \mu_{C(x,x)}(B_r)\Big\vert.
\end{align*}
(ii) For the cluster expansion with large field decomposition, assume the same as in (i), but with the stronger $\dot\lambda_X\ob1\leq \half \tilde c_J(\bm\lambda\ob 4)^{-1} c_J(\bm\lambda\ob 4)^{-1}  c_{\Text{P.\ref{boundsintint}}}^{-1} $  and $\bm\vert V_1\bm\vert_{1,\dot{\bm\lambda}} +\bm\vert V_2\bm\vert_{2,\dot{\bm\lambda}}<\half\tilde c_J(\bm\lambda\ob 4)^{-1} c_{\Text{(\ref{smallness})}} $. A fortiori, the pure small field partition function
\begin{align*}
\mc Z_s(J) &=  \int \ud\mu_C(\phi) \chi_r(\phi)e^{V(\phi;J)} = \exp\left(\sum_{Z\subset\mb L}\mc V(Z;J) \right)
\end{align*}
is analytic in $J$ for $\Vert J\Vert_\infty\leq\kappa_J^{-1} $ and has a logarithm with $\bm\vert \mc V\bm\vert_{J,\bm\lambda\ob4}<\half \tilde c_J(\bm\lambda\ob 4)^{-1} $. If: 
\begin{align*}
\dot{\bm\lambda} &=\Lambda_{\Text{P.\ref{boundsintint}}}\tilde{\bm\lambda}\ob 1&\Text{ for some }&\tilde{\bm\lambda}\ob 1  ; \\
\tilde{\bm\lambda}\ob 1 &= \Lambda^2\tilde{\bm\lambda}\ob 2 &\Text{ for some }&\tilde{\bm\lambda}\ob 2;\\
\tilde{\bm\lambda}\ob 2 &= \Lambda^3_{\Text{iii}}\tilde \Lambda_J\tilde{\bm\lambda}\ob 3  &\Text{ for some }&\tilde{\bm\lambda}\ob 3  ; \\
\tilde{\bm\lambda}\ob 3 &= \tilde\Lambda_J\tilde{\bm\lambda}\ob 4 &\Text{ for some }&\tilde{\bm\lambda}\ob 4  ; \\
\tilde{\bm\lambda}\ob 4 &=\Lambda^3_{\Text{iv}}\tilde{\bm\lambda} &\Text{ for some }&\tilde{\bm\lambda};
\end{align*}
and if we have 
\begin{align*}
\dot{\tilde\lambda}\ob 1_X&\leq \half\tilde c_J(\tilde{\bm\lambda}\ob 4)^{-1} c_J(\tilde{\bm\lambda}\ob 4)^{-1}  c_{\Text{P.\ref{boundsintint}}}^{-1} &\Text{ with }&\dot{\tilde{\bm\lambda}}\ob1 = \Lambda^1\tilde{\bm\lambda}\ob 1
\end{align*}
and, for simplicity, $\tilde{\bm\lambda}\ob2 = \bm\lambda\ob4 $; then
\begin{align*}
\bm\vert \log \mc Z - \log \mc Z_s\bm\vert_{\bm\lambda} \leq   e^{-\frac\upmu8r^2}.
\end{align*}

\end{thm}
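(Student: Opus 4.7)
The plan is to trace the flow chart of Remark \ref{remflowchart} step by step, feeding the output of each proposition into the input of the next. The ladder of parameters $\bm\lambda\ob 1,\ldots,\bm\lambda\ob 4$ (respectively $\tilde{\bm\lambda}\ob 1,\ldots,\tilde{\bm\lambda}\ob 4$) assumed in the statement is exactly the chain of transformations that makes this composition well defined: the smallness $\dot\lambda\ob 1_X\leq\half c_J(\bm\lambda\ob 4)^{-1}c_{\Text{P.\ref{boundsintint}}}^{-1}$ collects the size factors demanded by Propositions \ref{boundsintint} and \ref{bme}, while (\ref{poslocact}), (\ref{aspos}), (\ref{condpos}) and (\ref{relvr}) supply the remaining ingredients.

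For part (i), Proposition \ref{aipgr} first produces the interpolated interaction $V_1(\phi;s;J)+V_2(\phi;s;J)$ verifying (\ref{asintv1}), (\ref{asintv2}). Proposition \ref{boundsintint} at $\bm\lambda\ob 1$ then gives $\bm\Vert A'\bm\Vert_{\phi,\dot{\bm\lambda}}<1$; Proposition \ref{boundce} at $\bm\lambda\ob 2$ pushes this through the cluster expansion to $\bm\vert A\bm\vert_{J,\bm\lambda\ob 2}<1$; Proposition \ref{blocact}, with the choice $c^3_{\Text{ii}}=2\sup_x\vert\mu_{C(x,x)}(B_r)\vert^{-1}$ built into $\Lambda^3_{\Text{ii}}$, supplies the uniform bound on inverse singletons used by Proposition \ref{bnormalization} to yield $\bm\vert\dot A\bm\vert_{J,\bm\lambda\ob 3}<\half c_J(\bm\lambda\ob 4)^{-1}$; and Proposition \ref{bme}(i) at $\bm\lambda\ob 4$ finally bounds $\sum_{\vert X\vert\geq 2}W(X;J\vert_X)$ by $\frac{1}{8}\frac{\bm\vert\dot A\bm\vert_{J,\bm\lambda\ob 3}}{c_J(\bm\lambda\ob 4)^{-1}-\bm\vert\dot A\bm\vert_{J,\bm\lambda\ob 3}}\leq 1$. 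To assemble the full estimate one applies Theorem \ref{algebra} to write $\log\mc Z(J)=\sum_{x\in\mb L}\log A(\{x\};J(x))+\sum_{\vert X\vert\geq 2}W(X;J\vert_X)$; the Mayer tail is already controlled, and for the local part the test function $\mc I\eta\in\mc B_{J,\bm\lambda\ob 4}$ furnished by Property 3 (iv), combined with Property 3 (v), extracts the prefactor $1/\dot\lambda\ob 4_X$, while Proposition \ref{blocact} supplies the pointwise bound $\vert\log\mu_{C(x,x)}(B_r)\vert$. Summing the two contributions gives the stated inequality.

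For part (ii), the same chain reappears, now run through Theorem \ref{algebralfsf} and Lemma \ref{sflfme} along the tilde parameter sequence. The pure small-field piece $\mc Z_s(J)=\exp\big(\sum_Z\mc V(Z;J)\big)$ is the specialization of (i) to $R=r$, $V_1=0$ (equivalently the first sum in Lemma \ref{sflfme}), so the same machinery produces $\bm\vert\mc V\bm\vert_{J,\bm\lambda\ob 4}<\half\tilde c_J(\bm\lambda\ob 4)^{-1}$. The remainder $\log\mc Z-\log\mc Z_s=\sum_{(Z,X,Q)\in\tilde 2^{\mb L}}\mc L(Z,X,Q;J)$ satisfies $Q\neq\emptyset$ throughout, and the plan for bounding its $\bm\vert\cdot\bm\vert_{\bm\lambda}$ norm is to dualize against the tilde test function $\tilde{\mc I}\eta\in\mc B^{\sim}_{J,\tilde{\bm\lambda}\ob 4}$ supplied by Property 3 (iv) and then chain Propositions \ref{boundsintint}, \ref{boundce}, \ref{bnormalization}, and \ref{bme}(ii)--(iii) along the tilde ladder exactly as in (i). The explicit weight $e^{+\upmu r^2\vert Q\vert/8}$ carried by $\tilde{\mc I}\eta$ is compensated by the large-field regulator $\mf G_{Q,\bm\lambda}(X;\phi;s;J)=\exp\big(-\frac{\upmu}{8}\sum_{x\in X}\phi(x)^2\big)$ sitting inside $\bm\Vert A'\bm\Vert_{\phi,\bm\lambda}$: on the support of $\chi_Q^c$ the bound $\sum_{x\in X}\phi(x)^2\geq r^2\vert Q\vert$ turns this into $e^{-\upmu r^2\vert Q\vert/8}$, leaving one net $e^{-\upmu r^2/8}$ per polymer and hence the claim.

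The only serious obstacle is bookkeeping: one must verify that the iterated parameter transformations keep every $\dot\lambda_\bullet$ within its admissible range and that the smallness conditions at the top of the ladder (on $\dot\lambda\ob 1_X$ and on $\bm\vert V_1\bm\vert_{1,\dot{\bm\lambda}}+\bm\vert V_2\bm\vert_{2,\dot{\bm\lambda}}$) do absorb $c_{\Text{P.\ref{boundsintint}}}$ together with $c_J(\bm\lambda\ob 4)$ and, in (ii), also $\tilde c_J(\bm\lambda\ob 4)$. In part (ii) one additionally has to verify that the $e^{-\upmu r^2\vert Q\vert/8}$ factor implicit in Proposition \ref{boundce} is not spent by Proposition \ref{bnormalization} or \ref{bme} and is genuinely available for pairing with the $\tilde{\mc I}$ weight; since all intermediate operations only introduce $\vert Q\vert$-independent constants, this works out.
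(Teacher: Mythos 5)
Your outline follows the paper's own (extremely terse) "proof": trace the flow chart of Remark~\ref{remflowchart} through Propositions~\ref{boundsintint}, \ref{boundce}, \ref{blocact}, \ref{bnormalization}, \ref{bme} with each link of the parameter ladder $\bm\lambda\ob1,\ldots,\bm\lambda\ob4$ (resp.\ the tilde ladder) matching the transformation demanded by the corresponding proposition, then dualize against the test function furnished by Properties 3(iv), (v). That is precisely what the paper intends, and your accounting of part (i) is correct.

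One step in your description of part (ii) is misattributed, and as written it does not actually produce the advertised factor $e^{-\upmu r^2/8}$. You claim that the $e^{+\upmu r^2\vert Q\vert/8}$ weight in $\tilde{\mc I}\eta$ is "compensated by the large-field regulator $\mf G_{Q,\bm\lambda}$ ... on the support of $\chi_Q^c$". But $\mf G_{Q,\bm\lambda}$ has already been spent inside the proof of Proposition~\ref{boundsintint} (half on the $d^F(x)!$ combinatorics, half on the $V_1$ stability estimate (\ref{aspos}) over $Q$); it is not available as a free factor to cancel the $\tilde{\mc I}\eta$ weight, and even if it were, $e^{+\upmu r^2\vert Q\vert/8}\cdot e^{-\upmu r^2\vert Q\vert/8}=1$ would leave no net gain. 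The weight $e^{+\upmu r^2\vert Q\vert/8}$ in $\tilde{\mc I}\eta$ is instead paid for by the $\lambda_Q^{\vert Q\vert}$ factor in the $\check{\mc B}^\sim_{J,\bm\lambda}$ norm with $\lambda_Q\leq e^{-\upmu r^2/8}$, built in by $\Lambda^3_{\Text{iv}}$ — this is what makes $\tilde{\mc I}\eta\in\mc B^\sim_{J,\Lambda^3_{\Text{iv}}\bm\lambda}$ at all; compatibility of that choice of $\lambda_Q$ with Proposition~\ref{boundce} is precisely the constraint $\lambda_Q\geq e^{-\upmu r^2/8}$ of Proposition~\ref{propprop2}, whose source of decay is the massive Gaussian on the large-field shell via Lemma~\ref{lemhadamard}(ii), not the regulator. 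The single net factor $e^{-\upmu r^2/8}$ in the theorem then appears at the dualization step itself: for $\eta\in\mc B_{\bm\lambda}$ one writes $\vert\eta(\bm\zeta)\vert\delta_{\supp\bm\zeta\subset Z\cup X}\delta_{\emptyset\neq Q}=e^{-\upmu r^2\vert Q\vert/8}\vert\tilde{\mc I}\eta(Z,X,Q;\bm\zeta)\vert\leq e^{-\upmu r^2/8}\vert\tilde{\mc I}\eta(Z,X,Q;\bm\zeta)\vert$, using $\vert Q\vert\geq 1$ on $\tilde 2^{\mb L}$; the remaining chain yields $\bm\vert\mc L\bm\vert^\sim_{J,\tilde{\bm\lambda}\ob4}\leq 1$ and the prefactor is what survives. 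With that correction in place the plan is sound.
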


\section{Choice of norms}

\subsection{Definition of test function spaces}

We now define explicitly the test function spaces $\mc B_{\bullet,\bm\lambda}^\bullet $ and thereby complete the definition of our norms from section 3. \\
In many cases, the Properties 1 - 3 could in principle be used as a definition of our set of test functions, but in order to make contact with the conventional $1-\infty $ type norms used in the literature, we now implement them in our setting. As a motivation, consider, for an activity $A(X) $, the norm
\begin{align*}
\Vert A\Vert = \sup_{x\in\mb L}\sum_{\substack{X\subset \mb L \\ X\ni x }} K^{\vert X\vert} e^{m d_{\text{t}}(X) }\vert A(X)\vert
\end{align*}
for some big constant $K$, some mass $m$, and $$ d_{\text{t}}(X) = \min_{T\in \mf T(X)} \sum_{\ell\in T}d(\ell)  $$ the spanning tree size of $X$. This is easily written in the form 
\begin{align*}
\sup_{\eta\in \mc B}\sum_ {X\subset \mb L }  \vert \eta(X)A(X)\vert
\end{align*}
required by our setup by introducing the operator $\Sigma: \mb C^{\mb L\times 2^{\mb L}}\to \mb C^{2^{\mb L}} $ through
\begin{align*}
\big(\Sigma(\check{\eta})\big)(X) &=  \sum_{x\in \mb L} \check{\eta}(x;X).
\end{align*}
Indeed, we then have
\begin{align*}
\Vert A\Vert = \sup_{\eta\in \Sigma (\check{\mc B}_{K,m})} \sum_ {X\subset \mb L  }  \vert \eta(X)A(X)\vert,
\end{align*}
where $\check{\mc B}_{K,m}\subset  \mb C^{\mb L\times2^{\mb L}}$ is the unit ball in the norm
\begin{align*}
\Vert\check{\eta}\Vert_{K,m} &=\sum_{x\in \mb L}   \sup_{X\subset \mb L}K^{-\vert X\vert} e^{-md_{\text{t}}(X) }e^{ \tilde m d(x,X)}\vert \check{\eta}(x;X)\vert
\end{align*}
with $\tilde m = \infty $ (other choices of $\tilde m $ generalize our class of norms). \\
With this in mind, we set $\check\aleph_\bullet^\bullet = \mb L\times \aleph_\bullet^\bullet $, and define $\Sigma_\bullet^\bullet :\mb C^{\check\aleph_\bullet^\bullet}\to\mb C^{\aleph_\bullet^\bullet}$ to be the sum over the first argument. Except for $\aleph_2 $, we further set $d_t(\alpha) = d_t(\supp\alpha) $, $d(x;\alpha) = \inf_{x'\in\supp\alpha }d(x,x') $ and $d(\alpha,\alpha') = \inf_{x\ob{'}\in \supp\alpha\ob{'}} d(x,x') $, with $\supp\alpha $ as before Property 1. For $\alpha=(x;\bm\xi;\bm\eta)\in\aleph_2$, in order to use a unified formalism, $d_t(\alpha) = \sum_{\xi'\in\bm\xi\circ\bm\eta}d(x;x') $ and $d(x';\alpha) = d(x,x') $. For $F\in\mf F(X) $, we also call $\mc T^c(F;X) $ any forest $F'\in\mf F(X) $ such that $F\dot\cup F'\in \mf T(X) $ and $F' $ has minimal length, as always denoted by $d(\mc T^c(F;X) ) $.

\begin{defin}\label{defnorms}
Let the components of $\bm\lambda \in \bm\Lambda=\mb R_{>0}^{11} $ be denoted by $$ \lambda_\bullet \;(\bullet= \phi,J,s,1,2,T,X,Q ),m,m_V,\tilde m,\mu . $$ Then, except for $\mc B_{1,\bm\lambda}$, the $\mc B_{\bullet,\bm\lambda}^\bullet  = \Sigma_{\bullet}^\bullet(\check{\mc B}_{\bullet,\bm\lambda}^\bullet) $, where $\check{\mc B}_{\bullet,\bm\lambda}^\bullet $ are unit balls of $\mb C^{\check\aleph_\bullet^\bullet}  $ in the norms 
\begin{align*}
\Vert \check{\eta}\Vert_{\bullet,\bm\lambda}^\bullet &= \sum_{x\in\mb L}\sup_{ \alpha\in \aleph_\bullet^\bullet  } \bm\lambda_\bullet^\bullet(\alpha) e^{-md_{\text{t}}(\alpha) }  e^{\tilde m d(x;\alpha)} \vert  \check{\eta}(x;\alpha)\vert
\end{align*}
Here, the weight factors are 
\begin{align*}
\bm\lambda_\bullet^\bullet(\alpha)&= \lambda_2^{n(\bm\xi)}\lambda_{J}^{n(\bm\zeta)}    && \alpha\in \aleph_2  \\
&= \lambda_X^{-\vert X\vert}\lambda_T^{-\vert X \vert+1}\lambda_\phi^{n(\bm\xi)}\lambda_J^{n(\bm\zeta)} \lambda_s^{2\vert F\vert}e^{2md(\mc T^c(F;X) )-m_Vd(F)} && \alpha\in\aleph_\phi\\
&= \lambda_X^{\vert X\vert}\lambda_T^{\vert X \vert-1} \lambda_J^{n(\bm\zeta)}  && \alpha\in\aleph_J\\
&= \lambda_X^{\vert X\vert}\lambda_T^{\vert X \vert-1}\lambda_Q^{\vert Q\vert} \lambda_J^{n(\bm\zeta)}  && \alpha\in\aleph_J'\\
&= \lambda_X^{\vert X\vert+\vert Z\vert}\lambda_T^{\vert X \vert+\vert Z\vert-1}\lambda_Q^{\vert Q\vert} \lambda_J^{n(\bm\zeta)}  && \alpha\in\tilde \aleph_J\\
&= \lambda_J^{n(\bm\zeta)} && \alpha\in\tilde \aleph
\end{align*}
For $\mc B_{1,\bm\lambda} $, $ \mc B_{1,\bm\lambda} = \cup_{r'\in[r,R]}\Sigma_1(\check{\mc B}_{1,\bm\lambda,r'}) $ with $ \check{\mc B}_{1,\bm\lambda,r'}   $ as above with the weight factor
\begin{align*}
\bm\lambda_1(\bm\xi;\bm\zeta;r') &=  \omega(r')\left(r'+\lambda_1^{-1}\right)^{-n(\bm\xi)}\lambda_{J}^{n(\bm\zeta)}\lambda_s^{2\vert \supp\bm\xi\circ\bm\zeta\vert}   
\end{align*}

\end{defin}
\erem

\begin{rem}
The mass $\mu  $ is used below. For simplicity, we will always assume $\tilde m,\mu\geq m,m_V>0 $ (in fact, $\tilde m =\mu= \infty $ is the standard choice). In application, we only need $\lambda_\bullet\leq 1 $. Note that with our choice of test functions,
\begin{align*}
\sup_{\eta\in \mc B_{\bullet,\bm\lambda}^\bullet} \sum_{\alpha\in\aleph_\bullet^\bullet} \vert\eta(\alpha)F(\alpha) \vert &= \sup_{\eta\in \mc B_{\bullet,\bm\lambda}^\bullet} \left\vert \sum_{\alpha\in\aleph_\bullet^\bullet} \eta(\alpha)F(\alpha)\right\vert.
\end{align*}
We will from now on sometimes use the RHS in the definition of our norms. \\
The weight $\bm\lambda_\phi(\alpha) $ is special as compared to the others. This is related to the way we obtain bounds on the interpolation, cf. also Remark \ref{rembint}. The $2$ in its $ e^{2md(\mc T^c(F;X) )} $ factor could be replaced by any other constant, upon changing the corresponding $ 2$ in Property 1 (i). 

\end{rem}

\erem
The verification of both Property 1 (i) and Property 2 with the above choice of $\mc B_{\phi,\bm\lambda} $ rests on Lemma \ref{lemvolumeargument} below (or some analogue of it), which induces an $m $ dependence of the corresponding maps $\Lambda^1 $ and $\Lambda^2 $ on $\bm\Lambda $. In the case of $\Lambda^2 $, this can be avoided if we use the following alternative definition of $\mc B_{\phi,\bm\lambda} $:

\begin{defin}\label{altdef} \textbf{(alternative definition of $\mc B_{\phi,\bm\lambda} $).}
Define alternatively $\check\aleph_\phi = \mb L\times \mf F(\mb L)\times \aleph_\phi $, and
\begin{align*}
\Sigma_\phi\check\eta(X;\bm\xi;F;\bm\zeta) &= \sum_{\substack{x\in\mb L  \\ F'\in\mf F(\mb L)  \\ F'\dot\cup F \in\mf T(X)}} \check\eta(x;F';X;\bm\xi;F;\bm\zeta).
\end{align*}
Then,
$\mc B_{\phi,\bm\lambda} = \Sigma_\phi(\check{\mc B}_{\phi,\bm\lambda}) $, where $\check{\mc B}_{\phi,\bm\lambda} $ is the unit ball of $\mb C^{\check\aleph_\phi}  $ in the norm 
\begin{align*}
\Vert \check{\eta}\Vert_{\phi,\bm\lambda} &=  \sum_{x\in\mb L  }  \sup_{\substack{ \alpha\in \aleph_\phi \\  F'\in\mf F(\mb L) \\ F'\dot\cup F \in\mf T(X)}}   \bm\lambda_\phi(\alpha;F') e^{-md_{\text{t}}(\alpha) }  e^{\tilde m d(x;\alpha)} \vert  \check{\eta}(x;F';\alpha)\vert.
\end{align*}
with the weight
\begin{align*}
\bm\lambda_\phi(\alpha;F')&=  \lambda_X^{-\vert X\vert}\lambda_T^{-\vert X \vert+1}\lambda_\phi^{n(\bm\xi)}\lambda_J^{n(\bm\zeta)} \lambda_s^{2\vert F\vert}e^{2md(F' )}\prod_{x\in X}d^F(x)!^{-1}d^{F'}(x)!^{-1}
\end{align*}

\end{defin}

\erem
In order to fully state Property 3, we also need to define the ``multilinear'' test function spaces $\mc B_J^\bullet(T) $ for $T\in\mf T(\ul k),k\geq 2 $. We set $\check\aleph_J^\bullet(k) = \mb L\times (\aleph_J^\bullet)^k $, and $\Sigma_J^\bullet $ is still the sum over the first component.

\begin{defin}
$\mc B^\bullet_{J,\bm\lambda}(T) = \Sigma_J^\bullet(\check{\mc B}^\bullet_{J,\bm\lambda}(T)) $, where $\check{\mc B}^\bullet_{J,\bm\lambda}(T) $ are the unit balls of $\mb C^{\check\aleph_J^\bullet(k)} $ in the norms
\begin{align*}
\Vert \check{\eta}\Vert_{\bullet,T,\bm\lambda}^\bullet &= \sum_{x\in\mb L}\sup_{\substack{\alpha_j\in \aleph_\bullet^\bullet \\j=1,\ldots,k}} \prod_{j=1}^k\left[\bm\lambda_J^\bullet(\alpha_j) e^{-md_{\text{t}}(\alpha_j) } \right] e^{\tilde m d(x;\circ_j\alpha_j) }\exp\left(\sum_{\ell\in T}\mu d(\alpha_\ell)\right)  \left\vert  \check{\eta}(x;\alpha_1,\ldots,\alpha_j)\right\vert.
\end{align*}
Here, $d(\alpha_{\{m,m'\}}) = d(\alpha_m,\alpha_{m'}) $

\erem
\end{defin}

\subsection{Verification of the Properties}

We now verify that our norms have the Properties 1 - 3 of section 3. The transformations $\Lambda^1,\Lambda^2,\Lambda_J^\bullet, $ and $\Lambda^3_{\Text{ii}} $ - $\Lambda^3_{\Text{v}}  $ that characterize these properties are by no means unique (already because no norm involves all parameters of $\bm\lambda\in\bm \Lambda $ at once), and there is sometimes no natural choice for them. Whenever there is no loss of generality, we make a simple arbitrary choice anyway.

\subsubsection{Property 1}

\begin{prop}
The Properties 1 hold with $\Lambda^1$ given by
\begin{align*}
\Lambda^1\bm\lambda &=  \left(\frac{\lambda_\phi}{2{\sf N}} ,\lambda_J,\lambda_s\frac{c_{\sf g}(m)^\half}{c_{\sf g}(m_V)^\half},\lambda_1,\lambda_2,\lambda_T, 48\lambda_X\left[ c_{\sf g}(\tilde m)  \vee   \lambda_T c_{\sf g}(m) \right],\lambda_Q,m,m_V,\tilde m,\mu\right) && \text{or}\\
&=\left(\frac{\lambda_\phi}{2{\sf N}} ,\lambda_J,\lambda_s\frac{c_{\sf g}(m)^\half}{c_{\sf g}(m_V)^\half},\lambda_1,\lambda_2,\lambda_T, 128 \lambda_X\left[ c_{\sf g}(\tilde m)  \vee   \lambda_T c_{\sf g}(m) \right] ,\lambda_Q,m,m_V,\tilde m,\mu\right)   &&
\end{align*}
for the two alternative choices of $\mc B_{\phi,\bm\lambda} $. 

\end{prop}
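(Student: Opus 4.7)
The plan is to verify each of Property 1's three sub-claims directly against Definition~\ref{defnorms}. Claims (ii) and (iii) are book-keeping checks: given the explicit upper bound on $|\eta(\alpha)|$ in each, I would exhibit a natural lift $\check\eta(x;\alpha)=|\supp\alpha|^{-1}\eta(\alpha)\delta_{x\in\supp\alpha}$ (respectively $\check\eta(x;\alpha)=f(x)\eta(\alpha)\delta_{x\in\supp\alpha}/|\supp\alpha|$ in case (iii)) satisfying $\Sigma_1(\check\eta)=\eta$ (resp.\ $\Sigma_2(\check\eta)=\eta$). Substituting into $\Vert\check\eta\Vert_{1,\bm\lambda,r'}$ (resp.\ $\Vert\check\eta\Vert_{2,\bm\lambda}$), the weights $\bm\lambda_1(\alpha;r')$, $\bm\lambda_2(\alpha)$ cancel term-by-term with the $(r'+\lambda_1^{-1})^{n(\bm\xi)}$, $\lambda_J^{-n(\bm\zeta)}$, $\prod_\ell\lambda_s^{-2}e^{md(\ell)}$ factors assumed in (ii) (resp.\ the analogous $\lambda_2^{-n(\bm\xi)}\lambda_J^{-n(\bm\zeta)}\prod e^{md(x,x')}f(x)$ in (iii)), and the residue collapses to $\lambda_s^{2}\leq 1$ (resp.\ $\Vert f\Vert_1\leq 1$) after summing over $x\in\supp\alpha$.

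For (i) I would write $\eta=\Sigma_\phi(\check\eta)$ with $\Vert\check\eta\Vert_{\phi,\bm\lambda}\leq 1$, and use the pointwise bound
\begin{align*}
|\check\eta(x;\alpha)|\leq c(x)\,\bm\lambda_\phi(\alpha)^{-1}\,e^{m d_t(X)}\,e^{-\tilde m d(x;X)},\qquad\sum_x c(x)\leq 1.
\end{align*}
Substituting into the sum of Property 1 (i) and invoking the spanning-tree inequality $d_t(X)\leq d(F)+d(\mc T^c(F;X))$ (this is what forces the coefficient $2$ in the $e^{2md(\mc T^c)}$ factor of $\bm\lambda_\phi$: one copy of $e^{md(\mc T^c)}$ neutralises $e^{md_t(X)}$, the other survives to drive tree-sum convergence), the summand reduces to a product of geometric-series factors in $\bm\xi,\bm\zeta$, a forest-decay factor in $F$, and the spatial decay $e^{-\tilde m d(x;X)}$. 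I would then execute the sums in the order $\bm\xi\to\bm\zeta\to F\to T\to X$: the $\bm\xi$-sum is bounded by $(1-1/(2{\sf N}))^{-{\sf N}|X|}\leq 2^{|X|}$ thanks to $\dot\lambda_\phi=\lambda_\phi/(2{\sf N})$; the $\bm\zeta$-sum is handled analogously; and the $F$-sum subordinate to a fixed spanning tree $T\supset F$ of $X$ factorises edgewise to $\prod_{\ell\in T}\bigl[e^{-md(\ell)}+(c_{\sf g}(m)/c_{\sf g}(m_V))e^{-m_V d(\ell)}\bigr]$, which telescopes under the iterated bounds $\sum_y e^{-md(x,y)}\leq c_{\sf g}(m)$ and $\sum_y e^{-m_V d(x,y)}\leq c_{\sf g}(m_V)$ precisely because of the choice $\dot\lambda_s=\lambda_s\sqrt{c_{\sf g}(m)/c_{\sf g}(m_V)}$.

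The residual $T$-sum, paired with $\lambda_T^{|X|-1}$, is controlled by Cayley's bound $|\mf T(X)|\leq|X|^{|X|-2}$ against the $(|X|-1)!$ leaf-contractions, and the $X$-sum anchored at $x$ via $e^{-\tilde m d(x;X)}$ contracts by $c_{\sf g}(\tilde m)^{|X|-1}$. Collecting, the per-$|X|$ contribution has the form $[c_1\max(c_{\sf g}(\tilde m),\lambda_T c_{\sf g}(m))]^{|X|}(\dot\lambda_X/\lambda_X)^{-|X|}$ for a universal $c_1$, and the choice $\dot\lambda_X/\lambda_X=48[c_{\sf g}(\tilde m)\vee\lambda_T c_{\sf g}(m)]$ is exactly what is needed so that $c_1/48\leq 1/2$ and the geometric $|X|$-sum is $\leq 1$. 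The alternative Definition~\ref{altdef} absorbs the Cayley count directly via the explicit $d^F(x)!^{-1}d^{F'}(x)!^{-1}$ in the weight, replacing the tree-counting step by a local product estimate over the vertices of $F'$; this saves one appeal to a combinatorial lemma at the cost of a larger constant ($128$ instead of $48$). The main obstacle is the careful tracking of the combinatorial constants through this chain of forest/tree/set sums: balancing the $(|X|-1)!$ from iterated contraction against $|X|^{|X|-2}$ from Cayley, against the $d^F(x)!$ present in $\nabla_{s,F}$ (or, in the alternative norm, against the matching factorials in the weight), and against $\lambda_T^{|X|-1}$; the constant $48$ (or $128$) is simply the price of this accounting and is the longest routine calculation in the argument.
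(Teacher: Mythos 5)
Your overall approach is the same as the paper's: (ii) and (iii) use precisely the lifts $\check\eta$ that the paper exhibits, and (i) lifts $\eta=\Sigma_\phi\check\eta$, applies the key spanning-tree inequality $d_t(X)\leq d(F)+d(\mc T^c(F;X))$, and then bounds the remaining forest/tree/vertex sum. The one organizational difference is that the paper isolates the combined sum over $X\subset\mb L$, $F\subset T\in\mf T(X)$ with the normalized edge factors $c_{\sf g}(m_V)^{-|F|}c_{\sf g}(m)^{-|T\setminus F|}e^{-m_Vd(F)-md(T\setminus F)}$ and anchor factor $e^{-\tilde m d(x;X)}$ as a stand-alone technical estimate (Lemma~\ref{lemvolumeargument}), proved by a pinch-and-sum argument from the leaves of $T$ inward to the anchor together with the bound $|\mf T(\ul q)|\leq (q-1)!3^q$ (and $\sum_T\prod_p d^T(p)!\leq(q-1)!8^q$ for the alternative definition); you inline the same steps without stating them as a reusable lemma.

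One bookkeeping slip worth flagging: you claim the anchored spatial sum ``contracts by $c_{\sf g}(\tilde m)^{|X|-1}$''. Once the edge sums have been normalized by $c_{\sf g}(m)^{-1}$ and $c_{\sf g}(m_V)^{-1}$ (which is exactly what the choice $\dot\lambda_s=\lambda_s\,c_{\sf g}(m)^{1/2}/c_{\sf g}(m_V)^{1/2}$ together with the $\lambda_T$ bookkeeping produces, and what Lemma~\ref{lemvolumeargument} has built in), each of the $|X|-1$ pinch-and-sum steps contributes a factor $\leq 1$, not $c_{\sf g}(\tilde m)$; only the final sum over the anchor vertex contributes a single $c_{\sf g}(\tilde m)$. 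This is why the Lemma's conclusion is $\leq c_{\sf g}(\tilde m)$ and not $\leq c_{\sf g}(\tilde m)^{|X|}$, and why the prefactor $c_{\sf g}(\tilde m)^{-1}c$ in the paper's estimate is a single constant and not geometric in $|X|$. The $48$ (resp.\ $128$) then decomposes as $4\cdot 12$ (resp.\ $4\cdot 32$), with the $4$ coming from the $\bm\xi$ and $\bm\zeta$ geometric series and the $12$ (resp.\ $32$) from the forest/tree combinatorics; your description of how the $48$ ``is simply the price of accounting'' is correct in spirit but glosses over this decomposition.

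Aside from that slip, your plan, including the correct interpretation of the coefficient $2$ in $e^{2md(\mc T^c)}$ and of the role of $d^{F'}(x)!^{-1}$ in Definition~\ref{altdef}, matches the paper's proof.
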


\begin{proof}
Properties (ii) and (iii) are easy to check using $$ \check\eta(x;\bm\xi;\bm\zeta) = \frac{\delta_{x\in\supp\bm\xi\circ\bm\zeta}}{\vert \supp\bm\xi\circ\bm\zeta\vert }\eta(\bm\xi;\bm\zeta)  $$ for (ii) and $$ \check\eta(x';x;\bm\xi;\bm\zeta) = \delta_{x,x'}\eta(x;\bm\xi;\bm\zeta) $$ for (iii). We turn to Property (i) for the first alternative definition of $\mc B_{\phi,\bm\lambda} $. Set $\dot{\bm\lambda} = \Lambda^1\bm\lambda $ as usual, and abbreviate $c=c_{\sf g}(\tilde m)c_{\sf g}(m)^{-1}\lambda_T^{-1} $. Let $\eta\in\mc B_{\phi,\bm\lambda} $. Then, by definition, our choice of $\Lambda^1 $, and $d_t(X)\leq d(F)+d(\mc T^c(F;X)) $ for any $F\in \mf F(X) $, we have
\begin{align*}
\sum_{X\subset \mb L} \sum_{\substack{\bm\xi,\bm\zeta\in \bd L\vert_X  \\  F\in \mf F(X)}}  \dot\lambda_X^{-\vert X\vert} & \dot\lambda_\phi^{n(\bm\xi)} \dot\lambda_s^{2\vert F\vert} \dot\lambda_J^{n(\bm\zeta)} e^{- (m+2m_V)d(F)}\vert \eta(X;\bm\xi;F;\bm\zeta)\vert \\ & \leq \sup_{x\in\mb L} c_{\sf g}(\tilde m)^{-1}c\sum_{X\subset \mb L} [48(c\vee 1)]^{-\vert X\vert}\sum_{\bm\xi,\bm\zeta\in \bd L\vert_X }(2{\sf N})^{-n(\bm\xi\circ\bm\zeta)} \\&\qquad\times\sum_{  F\subset T\in \mf T(X)  } c_{\sf g}(m_V)^{-\vert   F\vert} c_{\sf g}(m)^{-\vert T\setminus F\vert}e^{-m_Vd(F)-md(T\setminus F) -\tilde md(x;X)}\\
&\leq   1
\end{align*}
by the Lemma below. For the second alternative, we have in the same way
\begin{align*}
\sum_{X\subset \mb L} \sum_{\substack{\bm\xi,\bm\zeta\in \bd L\vert_X  \\  F\in \mf F(X)}}  &\dot\lambda_X^{-\vert X\vert}  \dot\lambda_\phi^{n(\bm\xi)} \dot\lambda_s^{2\vert F\vert} \dot\lambda_J^{n(\bm\zeta)} e^{- (m+2m_V)  d(F)}\vert \eta(X;\bm\xi;F;\bm\zeta)\vert \\ & \leq\sup_{x\in\mb L} c_{\sf g}(\tilde m)^{-1}c\sum_{X\subset \mb L}  [128(c\vee 1)]^{-\vert X\vert}\sum_{\bm\xi,\bm\zeta\in \bd L\vert_X }(2{\sf N})^{-n(\bm\xi\circ\bm\zeta)} \\&\qquad\times\sum_{  F\subset T\in \mf T(X)  } c_{\sf g}(m_V)^{-\vert   F\vert} c_{\sf g}(m)^{-\vert T\setminus F\vert}e^{-m_Vd(F)-md(T\setminus F) -\tilde md(x;X)}\prod_{x\in X}d^T(x)!
\end{align*}
The claim follows from the second statement in the Lemma.

\end{proof}

\begin{lem}\label{lemvolumeargument}
We have
\begin{align*}
\sup_{x\in\mb L}\sum_{\substack{ X\subset\mb L \\ F\subset T\in\mf T(X) }} 12^{-\vert X\vert} c_{\sf g}(m_V)^{-\vert F\vert}c_{\sf g}(m)^{-\vert T\setminus F\vert} e^{-m_Vd(F)-md(T\setminus F) }e^{-\tilde m d(x;X)} \leq c_{\sf g}(\tilde m)
\end{align*}
and 
\begin{align*}
\sup_{x\in\mb L}\sum_{\substack{ X\subset\mb L \\ T\in\mf T(X) }} 32^{-\vert X\vert} c_{\sf g}(m_V)^{-\vert F\vert}c_{\sf g}(m)^{-\vert T\setminus F\vert} e^{-m_Vd(F)-md(T\setminus F) }e^{-\tilde m d(x;X)}\prod_{x\in X}d^T(x)! \leq c_{\sf g}(\tilde m)
\end{align*}
\end{lem}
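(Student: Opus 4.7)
\textbf{Plan of proof of Lemma \ref{lemvolumeargument}.}

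The first key simplification is to perform the sum over $F\subset T$ in closed form. Since $d(F)+d(T\setminus F)=\sum_{\ell\in T}d(\ell)$ and $|F|+|T\setminus F|=|T|$, the sum factorizes over edges:
\begin{align*}
\sum_{F\subset T} c_{\sf g}(m_V)^{-|F|}c_{\sf g}(m)^{-|T\setminus F|} e^{-m_V d(F)-m d(T\setminus F)}
= \prod_{\ell\in T}w(\ell),
\end{align*}
where $w(\ell)=c_{\sf g}(m_V)^{-1}e^{-m_V d(\ell)}+c_{\sf g}(m)^{-1}e^{-m d(\ell)}$. By definition of $c_{\sf g}$, $\sup_{y\in\mb L}\sum_{y'\in\mb L}w(\{y,y'\})\leq 2$.

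Next, following the standard tree-enumeration scheme, I would reorganize the outer sum by first picking $x_0\in X$ (say the point of $X$ nearest to $x$, so that $e^{-\tilde m d(x;X)}\leq e^{-\tilde m d(x,x_0)}$), then summing over $n=|X|\ge 1$ and the remaining $n-1$ labeled positions, with an overall factor $1/(n-1)!$ to correct for overcounting unordered $X$:
\begin{align*}
\sum_{\substack{X\ni x_0,\;|X|=n\\ T\in\mf T(X)}}\prod_{\ell\in T}w(\ell)
\;\le\; \frac{1}{(n-1)!}\sum_{T\in\mf T(\ul n)}\sum_{x_2,\ldots,x_n\in\mb L}\prod_{\{i,j\}\in T}w(\{x_i,x_j\}).
\end{align*}
For a fixed labeled tree $T\in\mf T(\ul n)$, root at vertex $1$ (placed at $x_0$) and sum over positions $x_2,\ldots,x_n$ in BFS order starting from the leaves; each of the $n-1$ sums contributes at most $2$. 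Together with Cayley's bound $|\mf T(\ul n)|=n^{n-2}$ and the elementary estimate $n^{n-2}/(n-1)!\le e^n/n$, this gives
\begin{align*}
\sum_{\substack{X\ni x_0,\;|X|=n\\ T\in\mf T(X)}}\prod_{\ell\in T}w(\ell)\;\le\; \frac{2^{n-1} e^n}{n}.
\end{align*}
Summing over $x_0$ against $e^{-\tilde m d(x,x_0)}$ yields a factor $c_{\sf g}(\tilde m)$, and the remaining geometric series $\sum_{n\ge 1} 12^{-n}(2e)^n/n = -\log(1-\tfrac{2e}{12})\le 1$ establishes the first inequality.

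For the second inequality, the added weight $\prod_{x\in X}d^T(x)!$ is absorbed using the refined Cayley identity: the number of labeled trees on $\ul n$ with degree sequence $(d_1,\ldots,d_n)$ is $\binom{n-2}{d_1-1,\ldots,d_n-1}$, so
\begin{align*}
\sum_{T\in\mf T(\ul n)}\prod_{i=1}^n d^T(i)! \;=\; (n-2)!\!\!\!\sum_{\substack{k_i\ge 0\\ \sum k_i=n-2}}\prod_i(k_i+1) \;=\;(n-2)!\binom{3n-3}{n-2}\;\le\;(n-2)!\,8^{n-1}.
\end{align*}
Repeating the previous argument with this replacing $n^{n-2}$, the per-$n$ contribution becomes $(n-1)^{-1}\cdot 16^{n-1}$, and the geometric sum with weight $32^{-n}$ converges to a quantity $\le 1$, again against the overall $c_{\sf g}(\tilde m)$. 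The main (though minor) technical point throughout is the balancing of the combinatorial factors $n^{n-2}/(n-1)!$ or $(n-2)!\binom{3n-3}{n-2}/(n-1)!$ against the geometric decay $K^{-n}$: this dictates the explicit constants $12$ and $32$ in the statement.
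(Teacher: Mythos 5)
Your proof is correct and follows essentially the same route as the paper: anchor at a point $x_0\in X$ (the paper sums over all anchor points $p_0$, which combined with $1/q!$ gives the same $1/(q-1)!$ you obtain directly), label the remaining vertices, pinch-and-sum the tree from the leaves to the root, and control the combinatorics with Cayley's formula resp. the degree-sequence refinement. Two minor points of organization: you sum $F\subset T$ out first into a single per-edge weight $w(\ell)$ with $\sup_y\sum_{y'}w(\{y,y'\})\le 2$, whereas the paper keeps $F$ and $T\setminus F$ separate in the pinch-and-sum and collects the same factor as $2^{|T|}$ at the end; and you actually derive $\sum_{T\in\mf T(\ul n)}\prod_i d^T(i)! = (n-2)!\binom{3n-3}{n-2}\le(n-2)!\,8^{n-1}$ via the Pr\"ufer degree-sequence formula, which the paper merely asserts (as the slightly weaker $\le(q-1)!\,8^q$) without proof.
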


\begin{proof}
We have for every $x\in \mb L $
\begin{align*}
\sum_{\substack{ X\subset\mb L \\ F\subset T\in\mf T(X) }} 6^{-\vert X\vert}& c_{\sf g}(m_V)^{-\vert F\vert}c_{\sf g}(m)^{-\vert T\setminus F\vert} e^{-m_Vd(F)-md(T\setminus F) }e^{-\tilde m d(x;X)} \\&\leq \sum_{q\geq 1}\frac{6^{-q}  }{q!} \sum_{\substack{   F\subset T\in\mf T(\ul q) \\ p_0\in\ul q}} \sum_{x_1,\ldots,x_q\in X } e^{-\tilde md(x_{p_0},x)} \\&\qquad\qquad\times\prod_{\{p,p'\}\in F} c_{\sf g}(m_V)^{-1}e^{-m_Vd(x_p,x_{p'})}\prod_{\{p,p'\}\in T\setminus F} c_{\sf g}(m)^{-1}e^{-md(x_p,x_{p'})}
\end{align*}
By a pinch and sum argument, working inwards from the leafs of $T$ to $p_0 $, we have
\begin{align*}
\sum_{x_1,\ldots,x_q\in X } e^{-\tilde md(x_{p_0},x)}  \prod_{\{p,p'\}\in F} c_{\sf g}(m_V)^{-1}e^{-m_Vd(x_p,x_{p'})}\prod_{\{p,p'\}\in T\setminus F} c_{\sf g}(m)^{-1}e^{-md(x_p,x_{p'})} \leq c_{\sf g}(\tilde m).
\end{align*}
By Cayleys formula, $\vert \mf T(\ul q)\vert = q^{q-2} \leq (q-1)! 3^q $, and in fact
\begin{align*}
\sum_{T\in\mf T(\ul q)} \prod_{p=1}^q d^T(q)! \leq (q-1)!8^q.
\end{align*}
Both parts of the lemma follow.

\end{proof}

\subsubsection{Property 2}

\begin{prop}\label{propprop2}
(i) With the first alternative definition of $\mc B_{\phi,\bm\lambda} $, suppose that $\lambda_Q\geq e^{-\frac\upmu8r^2} $. 
Let $\Lambda^2 $ be any map of $\bm\Lambda $ such that, with $\dot{\bm\lambda} = \Lambda^2\bm\lambda $, we have
\begin{align*}
 \dot\lambda_X \dot\lambda_T^\half  &\geq \frac{96{\sf N}^2c_{\Text{L.\ref{lemhadamard}}} }{\lambda_X \lambda_T^\half e^{\frac\upmu{16}r^2} }\\
\dot\lambda_s&\leq c_{\sf g}(m_V)^{-\half} \cdot \big[1\wedge c_{\Text{L.\ref{lemhadamard}}}^{-1}\big]\cdot \frac{\lambda_X \lambda_T^\half  \dot\lambda_T^\half \dot\lambda_X  }{96{\sf N}^2 } \\ 
\dot\lambda_\phi&\leq  \Vert C\Vert_{3m,\infty}^{-\half}c_{\sf g}(m_V)^{-\half} \cdot \big[1\wedge c_{\Text{L.\ref{lemhadamard}}}^{-1}\big]\cdot \frac{\lambda_X \lambda_T^\half  \dot\lambda_T^\half \dot\lambda_X }{96{\sf N}^2 } 
\end{align*}
(The first of these bounds will not be saturated in application). Then, Property 2 holds for this choice of $\Lambda^2 $ if 
\begin{align*} 
c_{\Text{P.\ref{propprop2}}}  \leq \frac{\left(\dot\lambda_T\lambda_T\right)^\half}{3{\sf N}^2c_{\sf g}(m)^\half\Vert C\Vert_{3m,\infty}^{\half}}
\end{align*}
with $$ c_{\Text{P.\ref{propprop2}}}  = \sqrt{\frac{64{\sf N}}{\upmu\lambda_{\Text{min}}(\Re C) } }r^{-1} .$$ 
(ii) With the second alternative definition of $\mc B_{\phi,\bm\lambda} $, the same is true with every $c_{\sf g}(m),c_{\sf g}(m_V) $ above replaced by $1$. In particular, the constraint on $c_{\Text{P.\ref{propprop2}}} $ is 
\begin{align}\label{stabassumr}
c_{\Text{P.\ref{propprop2}}}  \leq \frac{\left(\dot\lambda_T\lambda_T\right)^\half}{3{\sf N}^2 \Vert C\Vert_{3m,\infty}^{\half}}
\end{align}

\end{prop}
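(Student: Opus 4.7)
The plan is to take the explicit form of $\tilde\gamma$ derived in the proof of Proposition \ref{boundce}, bound its absolute value pointwise in $\phi,s^T$, integrate the bound to obtain a bound on $\gamma'$, and then rearrange the sum $\sum_{T,Q}\gamma'(T;X,Q;\bm\xi;F)\eta(X,Q;\bm\zeta)$ into the form $\Sigma_\phi\check\eta$ for some $\check\eta$ which satisfies $\Vert\check\eta\Vert_{\phi,\Lambda^2\bm\lambda}\leq 1$. The case of $\gamma$ is analogous (with $r=R$, $Q=\emptyset$).

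The first key step is a Gaussian bound. For $s=s^T$ with $T$ a forest, one verifies (this is the content of the auxiliary Lemma \ref{lemhadamard} referred to in the statement) that $\lambda_{\min}(\Re C_{s^T}^{-1})\geq \upmu/c_{\Text{L.\ref{lemhadamard}}}$, so that
\[
\det(2\pi C_{s^T})^{-\half}\int_{\Xi^X}\prod_{x\in X}\ud\phi(x)\,\exp\bigl(-\tfrac12\langle\phi+\psi,C_{s^T}^{-1}(\phi+\psi)\rangle+\tfrac\upmu8\Vert\phi\vert_X\Vert_2^2\bigr)\,\chi_{Q,\mc L_4\vert_2}(\phi)
\]
is controlled by a product over $x\in Q^c\cap\mc L_4\vert_2$ of surface measures of spheres of radius $r$, a similar product for $x\in Q\cap\mc L_4\vert_2$ with radii in $[r,R]$, and a bulk factor. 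Since for $x\in Q$ the integrand carries an additional weight $e^{-\upmu\phi(x)^2/8}\leq e^{-\upmu r^2/8}$, the sphere integrals in $Q$ produce a nonperturbatively small factor $\bigl(e^{-\upmu r^2/8}\bigr)^{\vert Q\vert}$, precisely compensating the weight $\lambda_Q^{\vert Q\vert}$ under the assumption $\lambda_Q\geq e^{-\upmu r^2/8}$. The Cauchy-integral factors introduced in the proof of Proposition \ref{boundce} at radius $\tilde r=\sqrt{\upmu\lambda_{\min}(\Re C)/(64{\sf N})}\,r$ contribute $\tilde r^{-n(\bm\xi)}$, which after converting into the test-function weight $\dot\lambda_\phi^{n(\bm\xi)}$ is exactly the motivation for the constraint on $c_{\Text{P.\ref{propprop2}}}$.

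Next, I collect the covariance factors $\prod_{\ell\in T\setminus F}C(\ell^{\ul{\sf n}})$ into $\Vert C\Vert_{3m,\infty}^{\vert T\setminus F\vert}$ (times an exponential decay $e^{-3md(T\setminus F)}$, of which $2m$ goes into the weight $e^{2md(\mc T^c(F;X))}$ and $m$ is kept for the $e^{-md_t(\alpha)}$ factor). The combinatorial sums over $F_C\dot\cup F_V=T$, over the index assignment $\ul{\sf n}\in\N^{\mc L(T\setminus F)}$, and over the partition $\mc L_1\dot\cup\cdots\dot\cup\mc L_4$ of $\mc L(T\setminus F)$ produce a power of $\sf N$ and factorials $\prod_x d^{T\setminus F}(x)!$. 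In the second alternative (Definition \ref{altdef}), these $d^T$ factorials are absorbed directly by $\prod_x d^F(x)!^{-1}d^{F'}(x)!^{-1}$ built into $\bm\lambda_\phi(\alpha;F')$, which is why no $c_{\sf g}$ factors appear there. In the first alternative, they must be summed against a tree through the volume argument of Lemma \ref{lemvolumeargument}, producing the additional factors $c_{\sf g}(m)$ and $c_{\sf g}(m_V)$ in $\dot\lambda_s,\dot\lambda_\phi$.

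Finally, one packages the bound into $\check\eta$. Writing $\eta=\Sigma_J^{\bullet}\check\eta_J$ for $\check\eta_J\in \check{\mc B}_{J,\bm\lambda}^\bullet$, one sets
\[
\check\eta(x;X;\bm\xi;F;\bm\zeta)\;=\;\sum_{Q\subset X}\sum_{T\in\mf T(X)}\,\bigl[\text{bound on }\gamma'\bigr]\,\check\eta_J(x;X,Q;\bm\zeta),
\]
(respectively its analogue involving $F'$ in variant (ii)); the assumed inequalities on $\dot\lambda_X,\dot\lambda_s,\dot\lambda_\phi$ and on $c_{\Text{P.\ref{propprop2}}}$ are designed exactly so that all prefactors collected above are bounded by $1$ after trading weights. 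The hardest part is not any single estimate but the bookkeeping: making sure that the three sources of combinatorial blow-up (sum over $T$, sum over $\ul{\sf n}$, and sum over the four-fold partition of $\mc L(T\setminus F)$) are all absorbed simultaneously by the available geometric weights, and in particular that the $\upmu\,r^2$-type exponential suppression from the sphere integrations survives with enough margin to control the remaining $\vert Q\vert$-dependence in $\bm\lambda_\phi(\alpha)$.
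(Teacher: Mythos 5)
Your proposal is essentially the same argument the paper gives, modulo two minor slips of attribution: Lemma~\ref{lemhadamard}(i) in fact yields $\lambda_{\text{min}}(\Re C_{s^T}^{-1})\geq\upmu$ with no loss (the constant $c_{\text{L.\ref{lemhadamard}}}$ is a separate per-site factor coming from the determinant ratio and sphere-volume bound in part (ii), not a degradation of the eigenvalue estimate), and the Cauchy integrations at radius $\tilde r$ contribute $c_{\text{P.\ref{propprop2}}}^{\vert\mc L_3\vert}$ rather than $\tilde r^{-n(\bm\xi)}$, since $\bm\xi = \mc L_1\vert_2^{\ul{\sf n}}\circ\mc L_2\vert_2^{\ul{\sf n}}$ collects the legs that hit $A'$, not the ones routed into the Gaussian. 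Neither slip affects the bookkeeping: both $\dot\lambda_\phi$ and $c_{\text{P.\ref{propprop2}}}$ are required to satisfy the same smallness bound against $(\dot\lambda_T\lambda_T)^{1/2}$ divided by $\sf N^2$, $c_{\sf g}(m)^{1/2}$, and $\Vert C\Vert_{3m,\infty}^{1/2}$, so the route you describe --- pointwise bound on $\tilde\gamma$ via the Gaussian estimate of Lemma~\ref{lemhadamard}, extraction of the $e^{-\upmu r^2}$-type suppression from the $Q$ and $\mc L_4$ sphere integrals, collection of the $\Vert C\Vert_{3m,\infty}$ and combinatorial factors, and final absorption by Lemma~\ref{lemvolumeargument} (first alternative) or the factorial weights built into Definition~\ref{altdef} (second alternative) --- is exactly what the paper's proof does.
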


\begin{proof}
We prove only the statement for
\begin{align*}
\gamma'\eta(X;\bm\xi;F;\bm\zeta) &= \sum_{\substack{ T\in\mf T(X) \\ \emptyset\neq Q\subset X }} \gamma'(T;X,Q;\bm\xi;F)\eta(X,Q;\bm\zeta).
\end{align*}
for the first alternative definition of $\mc B_{\phi,\bm\lambda} $. The other statements are simpler and largely identical. Note that $\gamma'\eta = \Sigma_{\phi}\gamma'\check\eta $, so we have to bound $\Vert \gamma'\check\eta\Vert_{\phi,\dot{\bm\lambda}} $. We have
\begin{align*}
\Vert \gamma'\check\eta\Vert_{\phi,\dot{\bm\lambda}} &\leq   \sup_{\substack{X\subset \mb L \\ \bm\xi\in\bd L\vert_X,F\in\mf F(X)}}  \sum_{\substack{ T\in\mf T(X) \\ \emptyset\neq Q\subset X }} \left( \dot\lambda_X \lambda_X \right)^{-\vert X\vert}\left(\dot\lambda_T\lambda_T\right)^{-\vert X\vert+1}\lambda_Q^{-\vert Q\vert} \\&\qqquad\times\dot\lambda_\phi^{n(\bm\xi)} \dot\lambda_s^{2\vert F\vert}e^{2m d(\mc T^c(F;X))-m_Vd(F)} \vert \gamma'(T;X,Q;\bm\xi;F)\vert 
\end{align*}
Quickly recall the definition of $\gamma' $. For $\phi\in\Xi^{X} $, $\psi\in \Xi_{\mb C}^{X} $ and $s\in [0,1]^T $, we have, by a simple computation using that $\Re C^{-1}_{s^T} $ is nonnegative (see the Lemma below),
\begin{align*}
- \Re \big\langle \phi+\psi ,C^{-1}_{s^T}(\phi+\psi)  \big\rangle + \frac{\upmu}8 \Vert\phi \Vert_2^2 \leq -\half \big\langle\phi,\left[\Re C^{-1}_{s^T}-\frac{\upmu}{4}\right] \phi \big\rangle + 4\lambda_{\Text{min}}(\Re C_{s^T})^{-1}\Vert \psi\Vert_2^2.
\end{align*}
Again by the Lemma, for any $Q,S\subset X $
\begin{align*}
\vert \det 2\pi C_{s^T}\vert^{-\half} \int_{\Xi^{X}} \prod_{x\in X}\ud\phi(x) \chi_{Q,S}(\phi) \exp \left[-\half \big\langle\phi,\left[\Re C^{-1}_{s^T}-\frac{\upmu}{4}\right] \phi \big\rangle \right] \leq c_{\Text{L.\ref{lemhadamard}}}^{\vert X\vert}   e^{-\vert Q\cup S\vert\frac\upmu4r^2 } 
\end{align*}
Performing some elementary bounds and using Lemma \ref{lemvolumeargument}, we conclude
\begin{align*}
\Vert \gamma'\check\eta\Vert_{\phi,\dot{\bm\lambda}} &\leq \sup_{\substack{X\subset\mb L\\ F\in\mf F(X)}}{\sf N}^{2\vert X\vert-2} \left( \dot\lambda_X \lambda_X \right)^{-\vert X\vert}\left(\dot\lambda_T\lambda_T\right)^{-\vert X\vert+1}  c_{\Text{L.\ref{lemhadamard}}}^{\vert X\vert}\\&\qquad\times \sum_{\substack{F\subset T\in \mf T(X) \\ \emptyset\neq Q\subset X }} e^{-\vert Q\vert\frac\upmu{8}r^2 }\lambda_Q^{-Q}e^{-m_Vd(F)-md(T\setminus F)} \prod_{x\in X}d^T(x)!\\&\qquad\times\dot\lambda_s^{2\vert F\vert} \Vert C\Vert_{3m,\infty}^{\vert T\setminus F\vert}\sum_{\substack{ \mc L_1\dot\cup\cdots\dot\cup\mc L_4 = \mc L(T\setminus F)  \\ \mc L_4\vert_2 = \{\mc L_2\vert_2\circ\mc L_3\vert_2\} }} c_{\Text{P.\ref{propprop2}}}^{\vert \mc L_3\vert}\dot\lambda_\phi^{\vert \mc L_1\vert + \vert\mc L_2\vert}    e^{ -\vert  \mc L_4 \vert\frac\upmu{16}r^2  } \\
&\leq     \sup_{X\subset\mb L}  \left( {\sf N}^{2} \left(\dot\lambda_T\lambda_T\right)^{-\half} \left[c_{\sf g}(m_V)^\half\dot\lambda_s +c_{\sf g}(m)^\half\Vert C\Vert_{3m,\infty}^{\half}\left[ \dot\lambda_\phi +  c_{\Text{P.\ref{propprop2}}} \right] \right] \right)^{ \vert X\vert -2}\\&\qquad\times \left( \frac{ 32{\sf N}^{2} c_{\Text{L.\ref{lemhadamard}}}  }{\dot\lambda_X \lambda_X  \left(\dot\lambda_T\lambda_T\right)^{\half}} \left[c_{\sf g}(m_V)^\half\dot\lambda_s +c_{\sf g}(m)^\half\Vert C\Vert_{3m,\infty}^{\half} \left[\dot\lambda_\phi +  e^{ - \frac\upmu{16}r^2  } \right] \right] \right)^{\vert X\vert}\\& \leq 1.
\end{align*}
The Proposition follows.

\end{proof}

\begin{lem}\label{lemhadamard}
Let $X\subset \mb L $, $T\in\mf T(X) $ and $s\in[0,1]^T $ be fixed. Let $\sigma(A) $ denote the spectrum of a symmetric real matrix $A$ on $X\times\N$. Set as before $\upmu = \lambda_{\Text{min}}(\Re C^{-1}) $ and $\mf a = \lambda_{\Text{min}}(\Re C) $. \\
(i) $\sigma(\Re C_{s^T}^{-1})\subset [\upmu,\mf a^{-1}] $ and $\sigma(\Re C_{s^T})\subset [\mf a,\upmu^{-1}] $.\\
(ii) For any $Q,S\subset X $,
\begin{align*}
\vert \det 2\pi C_{s^T}\vert^{-\half} \int_{\Xi^{X}} \prod_{x\in X}\ud\phi(x) \chi_{Q,S}(\phi) \exp \left[-\half \big\langle\phi,\left[\Re C^{-1}_{s^T}-\frac{\upmu}{4}\right] \phi \big\rangle \right] \leq c_{\Text{L.\ref{lemhadamard}}}^{\vert X\vert}   e^{-\vert Q\cup S\vert\frac\upmu4r^2 } 
\end{align*}
\end{lem}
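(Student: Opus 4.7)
The plan is to handle (i) first, then derive (ii) as a routine Gaussian estimate. Throughout I use two standard positivity facts: (a) the BKAR kernel $(s^T(\{x,y\}))_{x,y\in X}$ is positive semi-definite with unit diagonal, via the decomposition $s^T(\{x,y\}) = \int_0^1 \mathbbm{1}[x,y \text{ lie in the same component of } T|_{s\geq t}]\,dt$; and (b) normality of the symmetric complex matrix $C$ forces $P:=\Re C$ and $Q:=\Im C$ (real symmetric) to commute, so they admit a common orthonormal eigenbasis. In that basis the relation $\alpha/(\alpha^2+\beta^2)\geq\upmu$ among eigenvalues $\alpha$ of $P$ and $\beta$ of $Q$ yields $\sigma(P)\subset[\mf a,\upmu^{-1}]$.

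For (i), the bounds on $\Re C_{s^T}=P\odot s^T$ are immediate from the Schur product theorem in the form $\lambda_{\min}(A)\min_i B_{ii}\leq \lambda_{\min}(A\odot B)$ and $\lambda_{\max}(A\odot B)\leq\lambda_{\max}(A)\max_i B_{ii}$ (for $B\geq 0$), applied with $A=P$ and $B=s^T$, giving $\sigma(\Re C_{s^T})\subset[\mf a,\upmu^{-1}]$. The bound on $\Re C_{s^T}^{-1}$ is the delicate half since $C_{s^T}$ need not be normal. Here I will use the identity $\Re(P+iQ)^{-1}=(P+QP^{-1}Q)^{-1}$, valid for any $P>0$ real symmetric and $Q$ real symmetric without requiring $[P,Q]=0$, obtained by solving $(P+iQ)(A+iB)=I$ for real $A,B$ and eliminating $B=-P^{-1}QA$. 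Combined with the Schur complement characterisation, $\Re C^{-1}\geq\upmu I$ is equivalent to
\[
M_0:=\begin{pmatrix}P & Q\\ Q & \upmu^{-1}I-P\end{pmatrix}\;\geq\; 0.
\]
Since $s^T$ has unit diagonal, $I\odot s^T=I$, so Hadamard-multiplying $M_0$ by the (manifestly PSD) block matrix $\bigl(\begin{smallmatrix}1 & 1\\1 & 1\end{smallmatrix}\bigr)\otimes s^T$ produces exactly the analogous block matrix for $C_{s^T}$. The Schur product theorem keeps it PSD, and unwinding the Schur complement (using $P_{s^T}\geq \mf a I>0$, already shown) gives $P_{s^T}+Q_{s^T}P_{s^T}^{-1}Q_{s^T}\leq\upmu^{-1}I$, i.e.\ $\Re C_{s^T}^{-1}\geq\upmu I$. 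The reverse bound $\Re C_{s^T}^{-1}\leq\mf a^{-1}I$ follows from $\|C_{s^T}^{-1}\|_{\mathrm{op}}\leq\mf a^{-1}$, which in turn comes from the identity $\Re\langle u,C_{s^T}u\rangle=\langle u_1,\Re C_{s^T}u_1\rangle+\langle u_2,\Re C_{s^T}u_2\rangle\geq \mf a\|u\|^2$ for complex $u=u_1+iu_2$ (available because $C_{s^T}$ is symmetric and $\Re C_{s^T}\geq\mf a$), implying $\sigma_{\min}(C_{s^T})\geq\mf a$.

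For (ii), (i) is used as a black box. The splitting $\Re C_{s^T}^{-1}-\upmu/4=(\Re C_{s^T}^{-1}-3\upmu/4)+\upmu/2$ has first summand $\geq 0$, so
\[
\exp\!\bigl[-\tfrac12\langle\phi,(\Re C_{s^T}^{-1}-\upmu/4)\phi\rangle\bigr]\leq\exp\!\bigl[-\tfrac{\upmu}{4}\|\phi\|^2\bigr].
\]
On the support of $\chi_{Q,S}$ every $x\in Q\cup S$ satisfies $|\phi(x)|^2\geq r^2$, extracting a factor $e^{-\upmu|Q\cup S|r^2/4}$. What remains factorises over $x\in X$: each $x\notin Q\cup S$ contributes a bounded Gaussian integral on $\{|\phi(x)|\leq r\}$, and each $x\in Q\cup S$ a bounded surface or radial integral of $e^{-\upmu(|\phi|^2-r^2)/4}$. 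Multiplying by the prefactor $|\det 2\pi C_{s^T}|^{-1/2}\leq(2\pi\mf a)^{-|X|{\sf N}/2}$ (from $\sigma_{\min}(C_{s^T})\geq\mf a$), all constants collect into $c_{\Text{L.\ref{lemhadamard}}}^{|X|}$ and the desired bound follows.

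The main technical obstacle is the Hadamard/Schur-complement step in (i): the crucial observation is that $\Re C^{-1}\geq\upmu I$ is equivalent to positivity of an auxiliary $2\times 2$ block matrix whose structure (and in particular the constant diagonal block $\upmu^{-1}I-P$) is preserved under Hadamard multiplication by $\tilde s^T$ precisely because $s^T$ has unit diagonal. Once this observation is in place, everything else is standard operator-theoretic and Gaussian bookkeeping.
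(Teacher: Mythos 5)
Your proof of part (i) is correct and takes essentially the paper's route with one nice simplification. Both arguments rest on the same two observations — that $\bm 1_{s^T}$ is positive semidefinite with unit diagonal, and that normality of $C$ lets one control $\sigma(\Re C)$ — and both reduce $\lambda_{\min}(\Re C_{s^T}^{-1})\geq\upmu$ to a Hadamard product of a block matrix $\left(\begin{smallmatrix}\Re C & \Im C\\ \Im C& \upmu^{-1}-\Re C\end{smallmatrix}\right)$ (the unit-diagonal property being what keeps the corner block of the form $\upmu^{-1}-\Re C_{s^T}$). The paper then diagonalizes that block matrix and invokes Oppenheim's inequality for the characteristic polynomial; you instead observe that $\Re C^{-1}\geq\upmu$ is exactly positivity of the block matrix, apply the Schur product theorem, and unwind the Schur complement. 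Your route avoids Oppenheim and the explicit $2\times 2$ eigenvalue computation, and I would call it slightly cleaner. (Your separate argument $\sigma_{\min}(C_{s^T})\geq\mf a$ for the $\mf a^{-1}$ upper bound is also correct, though the paper's one-liner $\Re C_{s^T}+\Im C_{s^T}(\Re C_{s^T})^{-1}\Im C_{s^T}\geq\Re C_{s^T}\geq\mf a$ is shorter and gives $|\det C_{s^T}|^{-\half}\leq\mf a^{-\frac{\sf N}{2}\vert X\vert}$ for free, which you also need.)

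Your part (ii), however, establishes only the crude version of the lemma that the paper explicitly declines in Remark \ref{remdestroycancel}. By discarding all of the Gaussian weight except $e^{-\frac\upmu4\Vert\phi\Vert^2}$ and then bounding $\vert\det 2\pi C_{s^T}\vert^{-\half}\leq(2\pi\mf a)^{-\vert X\vert{\sf N}/2}$, each site contributes a factor $\sim\mf a^{-{\sf N}/2}\cdot\min\{\omega_{\sf N}r^{\sf N},(4\pi/\upmu)^{{\sf N}/2}\}$, i.e.\ $c_{\text{L.\ref{lemhadamard}}}\sim\mf a^{-{\sf N}/2}(\upmu^{-{\sf N}/2}\wedge r^{\sf N})$. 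That is exactly the $\upmu$- and $r$-dependence the paper is at pains to avoid. The paper's proof instead keeps the Gaussian, splits off the sites in $P=Q\cup S$ via the Schur complement identity $\langle\phi,A\phi\rangle=\langle\phi\vert_P,(A/A_{P^c,P^c})\phi\vert_P\rangle+\cdots$, so that the full $\vert\det C_{s^T}\vert^{-\half}$ is only multiplied by a determinant on $X\setminus P$. Both determinants are then controlled by Hadamard's inequality and the Hadamard-product monotonicity of $\Im C_{s^T}(\Re C_{s^T})^{-1}\Im C_{s^T}$, yielding the constants $c_\infty$ (a diagonal entry bound) and $c_d$ (the log-averaged determinant in (\ref{cd})) rather than $\mf a^{-{\sf N}/2}$ and $\upmu^{-{\sf N}/2}$. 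This refinement is not cosmetic: in the many-Boson application one has $c_\infty,c_d\leq\const$ uniformly as $\mu\to 0^-$ (Remark \ref{remdestroycancel}), whereas with your constant $c_{\text{L.\ref{lemhadamard}}}\sim r^{\sf N}\sim\mf v^{-{\sf N}/4}$ (or $\sim\upmu^{-{\sf N}/2}$), the hypothesis (\ref{stabassumr})/(\ref{addconstr}) and the downstream smallness conditions in Theorem \ref{thmabound} would fail, collapsing the range of allowed chemical potentials. So for the lemma-as-used, you need the Schur-complement factorization of the Gaussian integral and the Hadamard-inequality bound on the determinant ratio, not merely the operator lower bound from (i).
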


\begin{proof}
(i) $C_{s^T} $ is the Hadamard product $\bm 1_{s^T}\circ C\vert_{X}$, where $\bm 1 $ is the operator on $\mb R^{X\times\N} $ with constant kernel equal to $1$. $\bm 1_{s^T}$ is positive semidefinite and has $1$'s on the diagonal. Let $\lambda_k $ be the eigenvalues of $C$ and $\mc F_{k\xi} $ the unitary matrix that diagonalizes $C$ and $\rho_p^{s^T} $ be the eigenvalues of $\Re C_{s^T} $ and $\mc F_{k\xi}^{s^T} $ the unitary matrix that diagonalizes $\Re C_{s^T}$. Then,
\begin{align*}
\rho_p^{s^T} &= \sum_{k}\mc R^{s^T}_{pk}\Re\lambda_k\\
\mc R^{s^T}_{pk} &= \sum_{\xi,\xi'\in X\times\N} \mc F^{s^T}_{p\xi}\ol{\mc F_{k\xi}} \bm 1_{s^T}(\xi,\xi') \ol{\mc F^{s^T}_{p\xi'}}\mc F_{k\xi'}\geq 0
\end{align*}
We have $\sum_{k}\mc R^{\bd s(T)}_{pk}=1$ and so the eigenvalues of $\Re C_{\bd s(T)} $ are convex combinations of the eigenvalues of $\Re C $. The same statement is true for $\Im C_{s^T} $ and for $C_{s^T} $ itself (but note that $C_{s^T} $ might not be normal). In particular, $\Re C_{s^T} $ is positive definite with $\sigma(\Re C_{s^T})\subset [\mf a,\upmu^{-1}] $. Note that $$ \Re C_{s^T}^{-1} = \Big(\Re C_{s^T} + \Im C_{s^T} (\Re C_{s^T})^{-1}\Im C_{s^T}\Big)^{-1}. $$ It follows that $$ \lambda_{\text{max}}(\Re C_{s^T}^{-1}) \leq \mf a^{-1} . $$ The same is true for $\Im C_{s^T}^{-1} $. Further, by the Schur determinant formula, we have
\begin{align*}
\det \Big[\lambda - (\Re C_{s^T} + \Im C_{s^T}&(\Re C_{s^T})^{-1}\Im C_{s^T}) \Big] \\& = \det \left(\begin{array}{cc} \Re C &\Im C \\ \Im C & \lambda- \Re C \end{array}\right)\circ \left(\begin{array}{cc} \bm 1_{s^T} & \bm 1_{s^T} \\ \bm 1_{s^T} & \bm 1_{s^T} \end{array}\right)  \cdot \det (\Re C_{s^T})^{-1}
\end{align*}
We have for the eigenvalues 
\begin{align*}
\text{spec}  \left(\begin{array}{cc} \Re C &\Im C \\ \Im C & \lambda- \Re C \end{array}\right) = \left\{\half \lambda\pm \half \left[4\frac{1-\lambda \cdot \Re a }{\vert a\vert^2}+\lambda^2 \right]^\half\,,\; a\in \text{spec }C^{-1} \right\}
\end{align*}
which is $\subset \mb R_{>0} $ whenever $\lambda >\frac 1{\upmu} $, and so it follows from Oppenheims inequality that the above characteristic polynomial is nonzero for $\lambda >\frac 1{\upmu}$, i.e. $\lambda_{\text{min}}(\Re C_{s^T}^{-1} )\geq \upmu $.\\
(ii) Call $\mf I=\mf I(X,Q,S,T,s) $ the quantity we have to bound. For $A\in\text{End }\mb C^{X\times\N}$, call $I(A) = I(A;X,Q,S) $ the integral
\begin{align*}
I(A) &= \int_{\Xi^X}\prod_{x\in X}\ud\phi(x) \chi_{Q,S}(\phi) e^{-\half \langle \phi,A\phi\rangle}.
\end{align*}
Then
\begin{align*}
\mf I = \vert \det 2\pi C_{s^T}\vert^{-\half} I(\Re C^{-1}_{s^T}-\frac{\upmu}{4}).
\end{align*}
By (i), we have
\begin{align*}
I\Big(\Re C^{-1}_{s^T}-\frac{\upmu}{4}\Big) \leq  e^{-\vert Q\cup S\vert\frac\upmu4r^2 } I\Big(\half \Re C^{-1}_{s^T}\Big).
\end{align*}
Using the identity
\begin{align*}
\langle \phi ,A \phi\rangle &= \left\langle \phi\vert_P, \left(A_{P,P} - A_{P,X\setminus P} A_{X\setminus P,X\setminus P}^{-1}A_{X\setminus P,P}\right)\phi\vert_P\right\rangle\\
& \qquad+ \left\langle  \phi\vert_{X\setminus P} + A_{X\setminus P,X\setminus P}^{-1} A_{X\setminus P,P}\phi\vert_P, A_{X\setminus P,X\setminus P}\left[\phi\vert_{X\setminus P} + A_{X\setminus P,X\setminus P}^{-1}A_{X\setminus P,P}\phi\vert_P\right]  \right\rangle
\end{align*}
with $P=Q\cup S $ and $A_{P,P'}(\xi,\xi') =\delta_{x\in P}\delta_{x'\in P'}A(\xi,\xi') $, and noting the eigenvalue inequalities for the Schur complement $A_{P,P} - A_{P,X\setminus P} A_{X\setminus P,X\setminus P}^{-1}A_{X\setminus P,P} $ analogous to (i), one shows easily
\begin{align*}
I\Big(\half \Re C^{-1}_{s^T}\Big)\leq \left(2\omega_{\sf N}r^{\sf N}e^{-\frac\upmu2r^2}\right)^{\vert P\vert} \det \Big(4\pi\Re C^{-1}_{s^T,X\setminus P,X\setminus P} \Big)^{-\half}
\end{align*}
where $\omega_{\sf N} $ is the volume of the $\sf N $ sphere. We claim that
\begin{align}\label{cinfty}
\det \Big(\Re C^{-1}_{s^T,X\setminus P,X\setminus P} \Big)^{-\half} &\leq  c_{\infty}^{\half \vert X\setminus P\vert }
\end{align}
with $c_{\infty} = \sup_{\xi\in \mb L\times\N} \Re C(\xi,\xi) + \big[\Im C(\Re C)^{-1}\Im C\big](\xi,\xi).  $ Indeed, let $\mc G_{p\xi} $ be the matrix diagonalizing $\Re C^{-1}_{s^T,X\setminus P,X\setminus P}  $, and $\rho_p $ be the corresponding eigenvalues. Repeating the argument from (i) and using the concavity of the logarithm, we get
\begin{align*}
\det \Big(\Re C^{-1}_{s^T,X\setminus P,X\setminus P} \Big)^{-\half} &\leq \exp\left[-\half \sum_{p}\sum_{\xi\in X\setminus P\times\N}\vert \mc G_{p\xi}\vert^2 \log \rho_p \right] = \det   \Big(\Re C^{-1}_{s^T}\Big)_{X\setminus P,X\setminus P} ^{-\half}\\
&\leq \left[\prod_{\xi\in X\setminus P\times\N}  \Re C_{s^T}(\xi,\xi) + \big[\Im C_{s^T}(\Re C_{s^T})^{-1}\Im C_{s^T}\big](\xi,\xi)\right]^\half
\end{align*}
by the Hadamard inequality. However, 
\begin{align*}
\Im C_{s^T} (\Re C_{s^T})^{-1}\Im C_{s^T}(\xi,\xi) &\leq  \Im C (\Re C)^{-1}\Im C(\xi,\xi)
\end{align*}
Indeed, the matrix 
\begin{align*}
\left(\begin{array}{cc} \Re C_{s^T} & \Im C_{s^T} \\ \Im C_{s^T} & [\Im C  (\Re C)^{-1}\Im C]_{s^T} \end{array}\right) = \left(\begin{array}{cc} \Re C & \Im C \\ \Im C & \Im C (\Re C)^{-1}\Im C \end{array}\right)\circ \left(\begin{array}{cc} \bd 1_{s^T} & \bd 1_{s^T} \\ \bd 1_{s^T} & \bd 1_{s^T} \end{array}\right)
\end{align*}
is positive semidefinite by the Schur product theorem, since both of its Hadamard factors are (for the first, we use that $\Re C $ is positive definite and its Schur complement in the block matrix is $0$). The Schur complement of the above, namely $[\Im C  (\Re C)^{-1}\Im C]_{s^T} - \Im C_{s^T}(\Re  C_{s^T})^{-1} \Im C_{s^T} $, is therefore positive definite, hence has nonnegative diagonal entries, which are equal to $\Im C  (\Re C)^{-1}\Im C(\xi,\xi)- \Im C_{s^T}(\Re  C_{s^T})^{-1} \Im C_{s^T}(\xi,\xi)$, as stated.\\
Putting together the results obtained so far, we get the bound
\begin{align*}
\mf I \leq \left(2^{\half\sf N}c_\infty^\half +2 r^{\sf N}e^{-\frac\upmu2r^2} \right)^{\vert X\vert}e^{-\vert Q\cup S\vert\frac\upmu4r^2 }\cdot \vert \det   C_{s^T}\vert^{-\half} .
\end{align*}
Arguing like before for the determinant, we see
\begin{align*}
\vert \det   C_{s^T}\vert^{-\half} &\leq \exp\left[-\half \sum_k \sum_{\xi\in X\times\N} \vert \mc F_{k\xi}\vert^2\log\Re\lambda_k\right],
\end{align*}
where, as in (i), $\mc F_{k\xi} $ diagonalizes $C$ and $\lambda_k $ are the eigenvalues of $C$. Abbreviating
\begin{align}\label{cd}
c_d &= \sup_{X\subset \mb L} \exp\left[-\frac1{2\vert X\vert} \sum_k \sum_{\xi\in X\times\N} \vert \mc F_{k\xi}\vert^2\log\Re\lambda_k\right],
\end{align}
(obviously, $c_d\leq \mf a^{-\frac{\sf N}2} $, but see Remark \ref{remdestroycancel} just below) the claim follows with $$c_{\Text{L.\ref{lemhadamard}}} = c_d\left(2^{\half\sf N}c_\infty^\half +2 r^{\sf N}e^{-\frac\upmu2r^2} \right). $$

\end{proof}

\begin{rem}\label{remdestroycancel}
An inequality of the type $$ I\Big(\half \Re C^{-1}_{s^T}\Big)\leq \left(\const\upmu^{-\frac{\sf N}2}\right)^{\vert X\vert}  $$ would easily follow from (i), but we want to get a better $\upmu $ - dependence in the case when most of the eigenvalues of $C^{-1} $ are larger than $\upmu$. In the same way, the bound $\vert \det   C_{s^T}\vert^{-\half} \leq\mf a^{-\frac{\sf N}2 \vert X\vert} $ follows easily from the one derived above, but also generally has a worse $\upmu $ dependence.\\ 
For example, let ${\sf N}=2 $, $\mb L= \mb Z^{D+1}/L\mb Z^{D+1} $ the discrete torus, and set $\mb L^* = \frac{2\pi}L\mb Z^{D+1}/2\pi \mb Z^{D+1} $. Then, the many Boson propagator defined in (\ref{mbbfktp}) with $\theta=1 $ is 
\begin{align}\label{propmbfourier}
C\big((x,{\sf n}),(x',{\sf n'})\big) &= L^{-D-1}\sum_{k=(k_0,\bd k)\in\mb L^*}   \frac{ e^{ik(x-x')}}{2\cosh (\hat{\text{h}}(\bd k)-\mu) -2\cos k_0}\\&\qqquad\times \left(\begin{array}{cc} e^{\hat{\text{h}}(\bd k)-\mu}-\cos k_0 & -\sin k_0\\ \sin k_0&  e^{\hat{\text{h}}(\bd k)-\mu}-\cos k_0\end{array}\right)_{\sf n,n'}
\end{align}
where $ \hat{\text{h}}(\bd k)$ is the spatial Fourier transform of $\text{h}$. For $\text{h} = -\Delta $ (or some reasonable approximation thereof), $ \hat{\text{h}}(\bd k) $ is positive with a quadratic zero at $\bd k=0 $. Therefore, the singularity (\ref{propmbfourier}) at $\upmu=k=0 $ is integrable for any $D\geq 1$, and we have $c_\infty\leq \const $ independent of $\mu\leq 0 $. Also, the matrix $\mc F_{k\xi} $ diagonalizing $C$ has $ \vert \mc F_{k\xi}\vert^2\sim L^{-D-1}$ (discrete Fourier transform tensorized with a $2\times 2$ matrix), and we deduce in a similar way $c_d\leq \const, $ independent of $\mu\leq 0$. Finally, an appropriate choice for $r $ will be $r\sim \mf v^{-\frac14+\epsilon} $ for $\mf v $ the (small) coupling constant (cf. below (\ref{notnorms})). As long as $\upmu\geq \mf v^{\half -3\epsilon} $, we conclude that $c_{\Text{L.\ref{lemhadamard}}} $ is bounded uniformly.

\erem
\end{rem}

\subsubsection{Property 3}

\begin{prop}\label{propprop3}
The Properties 3 hold with $\Lambda^3_{\Text{iii}} \bm\lambda = \bm\lambda  $ and 
\begin{align*}
\Lambda_J^\bullet \bm\lambda &= \left(\lambda_\phi,\lambda_J,\lambda_s,\lambda_1,\lambda_2,\lambda_T,\frac{\lambda_X}{3},\lambda_Q,m,m_V,\tilde m,\mu\right) \\
\Lambda^3_{\Text{ii}} \bm\lambda &= \left(\lambda_\phi,\lambda_J,\lambda_s,\lambda_1,\lambda_2,\lambda_T,\frac{\lambda_X}{c^3_{\Text{ii}}},\lambda_Q,m,m_V,\tilde m,\mu\right) \\
\Lambda^3_{\Text{iv}}  \bm\lambda &= \left(\vphantom\half\lambda_\phi,\lambda_J,\lambda_s,\lambda_1,\lambda_2,\lambda_T\wedge 1,\lambda_X\wedge 1,\lambda_Q\wedge e^{-\frac\upmu8r^2},2m,m_V,\tilde m,\mu\right) \\
\Lambda^3_{\Text{v}}  \bm\lambda &= \left(\lambda_\phi,\lambda_J,\lambda_s,\lambda_1,\lambda_2,\lambda_T,\frac{\lambda_X}{c_{\sf g}(\tilde m)},\lambda_Q,m,m_V,\tilde m,\mu\right) 
\end{align*}
and $$c_J^\bullet(\bm\lambda) = 16\; \sup_{x\in\mb L}\sum_{y\in \mb L} e^{-\mu d(x,y)}.  $$

\end{prop}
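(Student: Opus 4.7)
The plan is to verify (i)--(v) of Property 3 for the explicit norms of Definition \ref{defnorms}, in each case by constructing a checked representative of the transformed test function and bounding its norm in the appropriate unit ball $\check{\mc B}$.

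For (iii), I take $\check{\mc R\eta}(y;Z;\bm\zeta) = \check\eta(y;Z,\emptyset,\emptyset;\bm\zeta)$ and analogously for $\mc R'$; at these boundary values of $\tilde\aleph_J$ the weight $\lambda_X^{|X|+|Z|}\lambda_T^{|X|+|Z|-1}\lambda_Q^{|Q|}$ reduces exactly to that of $\aleph_J$ or $\aleph_J'$, so $\Lambda^3_{\Text{iii}} = \Text{id}$ suffices. For (iv) take $\check{\mc I\eta}(y;X;\bm\zeta) = \delta_{\supp\bm\zeta\subset X}\check\eta(y;\bm\zeta)$; in the $\check{\mc B}$-norm the supremum over $X\supset\supp\bm\zeta$ is achieved at $X=\supp\bm\zeta$ by monotonicity of $d_t$, the wedge conditions $\lambda_X,\lambda_T\leq 1$, and $\tilde m\geq m$, leaving $\lambda_X^{|\supp\bm\zeta|}\lambda_T^{|\supp\bm\zeta|-1}e^{-md_t(\supp\bm\zeta)}\leq 1$ as residual factor, and the second copy of the mass obtained from $m\to 2m$ keeps what is left of the $\check\eta$-norm intact. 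The same analysis for $\tilde{\mc I}$ uses $\lambda_Q\leq e^{-\frac\upmu 8 r^2}$ to cancel $e^{\frac\upmu 8 r^2|Q|}$ and reduces $X\cup Z\supset\supp\bm\zeta$ again to $\supp\bm\zeta$. For (v), the unit-ball bound at $X=\{x\}$ gives $|\eta(\{x\};\bm\zeta)| \leq \sum_y|\check\eta(y;\{x\};\bm\zeta)|$; summing $\sum_x e^{-\tilde m d(y,x)}\leq c_{\sf g}(\tilde m)$ and exchanging summation yields the claim with $\dot\lambda_X = \lambda_X/c_{\sf g}(\tilde m)$.

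For (ii), first part, the delta-constraints $\delta_{\supp\alpha_m\cap\supp\alpha_{m'}\neq\emptyset}$ along tree edges force $d(\alpha_m,\alpha_{m'}) = 0$ (hence $\exp(\sum_\ell\mu d(\alpha_\ell)) = 1$), $|\cup X_m|\leq\sum|X_m|$, and $d_t(\cup X_m)\leq\sum d_t(X_m)$. Together with $\lambda_X,\lambda_T\leq 1$ these give $\prod_m\bm\lambda_J^\bullet(\alpha_m)\,e^{-md_t(\alpha_m)}\leq \bm\lambda_J^\bullet(\circ_m\alpha_m)\,e^{-md_t(\circ_m\alpha_m)}$, so the obvious checked version $\check{\delta_T\eta}(y;\vec\alpha) = \check\eta(y;\circ_m\alpha_m)\prod\delta$ has norm bounded by $\|\check\eta\|\leq 1$. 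For the second part, set $\check{\eta a}(y;X;\bm\zeta) = \sum_{\bm\zeta'\in\bd L|_{\supp(X;\bm\zeta)}}\check\eta(y;X;\bm\zeta\circ\bm\zeta')a(\bm\zeta')$; the hypothesis $\sum_{\bm\zeta'\subset X}\lambda_J^{-n(\bm\zeta')}|a(\bm\zeta')|\leq(c^3_{\Text{ii}})^{|X|}$ extracts a multiplicative $(c^3_{\Text{ii}})^{|X|}$ which is absorbed by dividing $\lambda_X$ by $c^3_{\Text{ii}}$.

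The main obstacle is (i), the tree-graph bound. Writing $\eta = \Sigma\check\eta$ with $\|\check\eta\|\leq 1$, and summing the $x$-dependence against the $e^{-\tilde m d(x;\circ\alpha)}$ factor, the pointwise inequality
\begin{align*}
\sum_x|\check\eta(x;\alpha_1,\ldots,\alpha_n)|\leq \prod_m\bigl[\bm\lambda_J^\bullet(\alpha_m)^{-1}e^{m d_t(\alpha_m)}\bigr]\exp\Bigl(-\mu\sum_{\ell\in T}d(\alpha_\ell)\Bigr)
\end{align*}
results. I then peel off leaves of $T$ inductively: for a leaf $m_\ell$ with parent $m_p$, the factor $e^{-\mu d(\alpha_{m_\ell},\alpha_{m_p})}$ is dominated by $\sum_{y\in\supp\alpha_{m_p}}e^{-\mu d(y,\alpha_{m_\ell})}$, and each summand defines a test function concentrated at $y$ lying in $\mc B^\bullet_{J,\Lambda_J^\bullet\bm\lambda}$---its checked norm reduces to $\sup_{\alpha}e^{(\tilde m-\mu)d(y;\alpha)}$, bounded by $1$ under the standing convention $\mu\geq\tilde m$ noted after Definition \ref{defnorms}, while the ratio $\bm\lambda_J^\bullet(\alpha)/\bm\lambda_J^\bullet(\alpha)_{\Lambda_J^\bullet\bm\lambda} = 3^{-|X|}\leq 1$ is absorbed freely because $\Lambda_J^\bullet$ divides $\lambda_X$ by 3. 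Pairing with $F_{m_\ell}$ yields $|\supp\alpha_{m_p}|\cdot\|F_{m_\ell}\|_{\Lambda_J^\bullet\bm\lambda}$. Iterating from leaves toward a chosen root of $T$, the accumulated volume factors distribute as $\prod_m|X_m|^{d^T(m)-1}$, which together with the geometric sums bounded by $c_{\sf g}(\mu) = \sup_x\sum_y e^{-\mu d(x,y)}$ reorganize into $\prod_m d^T(m)!\cdot(c_J^\bullet/8)^n$ with $c_J^\bullet = 16\,c_{\sf g}(\mu)$; the $\frac{1}{8}$ represents slack deliberately built into the induction to close the bookkeeping of the combinatorial factors.
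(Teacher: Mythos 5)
Your proof of parts (ii), (iii) and (v) is fine and matches the paper's ``easy verifications.'' But there are two genuine gaps, one in (iv) and one in (i).

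For (iv), you claim the supremum over $X\supset\supp\bm\zeta$ is ``achieved at $X=\supp\bm\zeta$ by monotonicity of $d_t$.'' This is false: $d_t$ is the minimal spanning tree size, and adding points can \emph{decrease} it (the Steiner phenomenon; e.g. the Fermat point of an equilateral triangle lowers the MST weight from $2$ to $\sqrt 3$). The correct ingredient, which is precisely what the paper invokes and what explains the $m\to 2m$ doubling in $\Lambda^3_{\text{iv}}$, is the weaker inequality $d_t(X\cup Y)\geq \tfrac12 d_t(X)$, so that for $X\supset\supp\bm\zeta$ one has $e^{-2m d_t(X)}\leq e^{-m d_t(\supp\bm\zeta)}$. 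Your argument has the right conclusion but attributes it to a false monotonicity statement; as written it is a non sequitur, and the factor $2$ in $m\to 2m$ is left unexplained (you call it ``a second copy of the mass'' but never derive it).

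For (i), your strategy is genuinely different from the paper's and has a hidden hypothesis. You discard the pivot $x$ at the outset (bounding $\sum_x|\check\eta(x;\vec\alpha)|$ by throwing away $e^{\tilde m d(x;\circ\alpha)}\geq 1$), then reassign a base point $y\in\supp\alpha_{m_p}$ when peeling a leaf. The localized test function $\delta_{y',y}\,\bm\lambda_J(\alpha)^{-1}e^{md_t(\alpha)}e^{-\mu d(y,\alpha)}$ has checked norm $\sup_\alpha e^{(\tilde m - \mu)d(y;\alpha)}$, which you bound by $1$ citing ``the standing convention $\mu\geq\tilde m$.'' That convention does not exist in the paper: the only standing assumption after Definition~\ref{defnorms} is $\tilde m,\mu\geq m,m_V$ (both are lower bounded by $m,m_V$), not that $\mu\geq\tilde m$. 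So your argument breaks down whenever $\tilde m > \mu$. The paper avoids this restriction by \emph{retaining} the pivot through the recursion and splitting $\check\eta=\check\eta_1+\check\eta_2$ according to whether $x$ is closer to the leaf $\alpha_1$ or to the residue $\alpha'=\circ_{m\geq 2}\alpha_m$. In case $\eta_1$, the pivot goes with the leaf and a \emph{new} pivot $y$ for the residue is sampled against the kernel $\omega_1(y;\alpha_1,\alpha')=e^{-\tilde m d(y;\alpha')-\mu d(y;\alpha_1)}$; in case $\eta_2$, the pivot stays with the residue and $\alpha_1$ is summed out against $F_1$. This dichotomy is what makes the bound insensitive to the relative size of $\tilde m$ and $\mu$, and it also accounts for the $2^{|T|}$ factor that is absorbed into $(c_J/8)^n$ in the final estimate — a factor your bookkeeping does not produce. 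If you want to stay with your leaner pivot-free approach, you would need either to add $\mu\geq\tilde m$ as an explicit hypothesis (or work at the degenerate point $\tilde m=\mu=\infty$ only), or to insert an analogue of the paper's splitting.
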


\begin{proof}
The statements (ii) to (v) are easy verifications, using for (iv) $d_t(X\cup Y)\geq \half d_t(X) $ for any two sets $X,Y\subset\mb L$\footnote{The choice $\lambda_Q\wedge e^{-\frac\upmu8r^2} $ in $\Lambda^3_{\Text{iv}} $ is, of course, arbitrary. It is inspired by the domain of $ \Lambda^2$, cf. Proposition \ref{propprop2}, and also necessary to get the factor $e^{-\frac\upmu8r^2} $ in Theorem \ref{thmabound} (ii).}. We prove (i). Assume w.l.o.g. that $1$ is a leaf of $T$ and $\{1,2\}\in T $. Let $\eta = \Sigma_J^\bullet(\check\eta)\in \mc B_{J,\bm\lambda}^\bullet(T) $ and denote, with $\alpha' = \circ_{m\geq 2}\alpha_m $,
\begin{align*}
\eta_1(\alpha_1,\ldots,\alpha_n) &= \sum_{x\in \mb L}\delta\Big(\tilde md(x,\alpha_1)\leq\tilde md(x,\alpha')\Big) \check\eta(x;\alpha_1,\ldots,\alpha_n) = \Sigma_{J}^\bullet(\check\eta_1)(\alpha_1,\ldots,\alpha_n)\\
\eta_2(\alpha_1,\ldots,\alpha_n) &= \sum_{x\in \mb L}\delta\Big(\tilde md(x,\alpha_1)>\tilde md(x,\alpha')\Big) \check\eta(x;\alpha_1,\ldots,\alpha_n) = \Sigma_{J}^\bullet(\check\eta_2)(\alpha_1,\ldots,\alpha_n)
\end{align*}
Obviously, $\eta=\eta_1+\eta_2 $ and $\eta_1,\eta_2\in\mc B_{J,\bm\lambda}^\bullet(T)  $. Define
\begin{align*}
\check\theta_1(x;\alpha_1) &=  \vert \alpha_1\vert^{-1}\sum_{\alpha_2,\ldots,\alpha_n\in\aleph_J^\bullet}\check\eta_1(x;\alpha_1 ,\ldots,\alpha_n)\prod_{m=2}^n F_m(\alpha_m)\\
\check\vartheta_2(x;\alpha_2,\ldots,\alpha_n) &=  \vert \alpha_2\vert^{-1}\sum_{\alpha_1\in\aleph_J^\bullet}\check\eta_2(x;\alpha_1,\alpha_2,\ldots,\alpha_n)  F_1(\alpha_1)
\end{align*}
Just for this proof, we abbreviated $\vert \alpha\vert = \vert \supp\alpha\vert $. We will show that
\begin{align*}
\Vert \check\theta_1\Vert_{J, \bm\lambda}^\bullet &\leq \frac{c_J^\bullet(\bm\lambda)}{16} \sup_{\vartheta_1\in \mc B_{J,\bm\lambda}^\bullet(T') } \left\vert \sum_{\alpha_2,\ldots,\alpha_n\in\aleph_J^\bullet} \vartheta_1(\alpha_2,\ldots,\alpha_n)\prod_{m=2}^n F_m(\alpha_m) \right\vert\\
\Vert \check\vartheta_2\Vert_{J,T',\bm\lambda}^\bullet &\leq\frac{c_J^\bullet(\bm\lambda)}{16} \sup_{\theta_2\in \mc B_{J, \bm\lambda}^\bullet}\left\vert \sum_{\alpha_1\in\aleph_J^\bullet} \theta_2(\alpha_1)F_1(\alpha_1) \right\vert
\end{align*}
with $T' = T\setminus\{\{1,2\}\} $. It follows from this by working inwards from the leaves of the tree that
\begin{align*}
\left\vert \sum_{\alpha_1,\ldots,\alpha_n} \eta(\alpha_1,\ldots,\alpha_n)\prod_{m=1}^n F_m(\alpha_m)\right\vert& \leq \left(\frac{c_J^\bullet(\bm\lambda)}{16}\right)^n 2^{\vert T\vert}\prod_{m=1}^n \left\Vert \vert\alpha\vert^{d^T(m)}F_m\right\Vert_{\bm\lambda}\\
&\leq \left(\frac{c_J^\bullet(\bm\lambda)}8\right)^n \prod_{m=1}^n d^T(m)!\left\Vert F_m\right\Vert_{\bm\lambda},
\end{align*}
which would be the claim. The last step follows from $\vert\alpha\vert^d \dot{\bm\lambda}_J^\bullet(\alpha)\leq d!  \bm\lambda_J^\bullet (\alpha)$, as is easily verified. \\
For the bound on $\check\theta_1 $, define 
\begin{align*}
\omega_1(y;\alpha_1,\alpha') &= \exp\left(-\tilde md(y;\alpha')-\mu d(y;\alpha_1)   \right)\\
\omega_1(\alpha_1;\alpha' ) &= \sum_{y\in\mb L}\omega_1(y;\alpha_1,\alpha' ) .
\end{align*}
Trivially,
\begin{align*}
\omega_1(\alpha_1,\alpha') \geq \exp\left(-\mu d(\alpha_1;\alpha')   \right).
\end{align*}
Therefore,
\begin{align*}
\Vert \check\theta\Vert_{J, \bm\lambda}^\bullet &= \sum_{x\in\mb L}\sup_{ \alpha_1\in \aleph_J^\bullet   }  \bm\lambda_J^\bullet(\alpha_1 ) e^{- md_{\text{t}}(\alpha_1) }  e^{\tilde m d(x;\alpha_1)} \vert  \check{\theta}(x;\alpha_1)\vert\\
&= \sum_{x\in\mb L}\sup_{ \alpha_1\in \aleph_J^\bullet   }  \vert \alpha_1\vert^{-1} \bm\lambda_J^\bullet(\alpha_1 ) e^{- md_{\text{t}}(\alpha_1) }  e^{\tilde m d(x;\alpha_1)} \left\vert  \sum_{\alpha_2,\ldots,\alpha_n\in\aleph_J^\bullet} \check\eta_1(x;\alpha_1,\ldots,\alpha_n)\prod_{m=2}^n F_m(\alpha_m) \right \vert\\
&= \sum_{x\in \mb L}\sup_{ \alpha_1\in \aleph_J^\bullet  }  \vert \alpha_1\vert^{-1}\bm\lambda_J^\bullet(\alpha_1 ) e^{-md_{\text{t}}(\alpha_1) }  e^{\tilde m d(x;\alpha_1)} \\
&\qquad\qquad\times \left\vert  \sum_{\alpha_2,\ldots,\alpha_n\in\aleph_J^\bullet} \sum_{y\in \mb L} \frac{\omega_1(y;\alpha_1,\alpha')}{\omega_1(\alpha_1,\alpha' )}  \check\eta_1(x;\alpha_1,\ldots,\alpha_n)\prod_{m=2}^n F_m(\alpha_m) \right \vert\\
&\leq \sum_{x\in\mb L} \sup_{\alpha_1\in \aleph_J^\bullet} \sum_{y\in\mb L} \sup_{\substack{\alpha_m\in \aleph_J^\bullet \\ m=2,\ldots,n}}  \vert \alpha_1\vert^{-1}\prod_{m=1}^n \left[\bm\lambda_J^\bullet(\alpha_m) e^{-md_{\text{t}}(\alpha_m) } \right] \\&\qquad\qquad\times e^{\tilde md(x;\alpha_1)-\mu d(y;\alpha_1)  } \exp\left(\sum_{\ell\in T}\mu d(\alpha_\ell)  \right)  \left\vert \check{\eta}_1(x;\alpha_1,\ldots,\alpha_k) \right\vert  \\&\qquad\qquad  \times     \sup_{\vartheta\in \mc B_{J,\bm\lambda}^\bullet(T')}\left\vert  \sum_{\alpha_2,\ldots,\alpha_n\in\aleph_J^\bullet} \vartheta(\alpha_2,\ldots,\alpha_n)\prod_{m=2}^n F_m(\alpha_m) \right \vert\\
&\leq \frac{c_J^\bullet(\bm\lambda)}{16} \sup_{\vartheta\in \mc B_{J,\bm\lambda}^\bullet(T') } \left\vert \sum_{\alpha_2,\ldots,\alpha_n\in\aleph_J^\bullet} \vartheta(\alpha_2,\ldots,\alpha_n)\prod_{m=2}^n F_m(\alpha_m) \right\vert
\end{align*}
where in the last step, we used that $\tilde md(x;\alpha_1) = \tilde md(x;\circ_m\alpha_m) $ in the support of $\check\eta_1 $, and that
\begin{align*}
\vert\alpha_1\vert^{-1}\sum_{y\in \mb L} e^{-\mu d(y;\alpha_1)  } &\leq  \vert\alpha_1\vert^{-1}\sum_{\substack{y\in \mb L \\ z\in \supp\alpha_1 }}  e^{-\mu d(y;z)  } \leq \frac{c_J^\bullet(\bm\lambda)}{16}.
\end{align*}
The bound on $\check\vartheta_2 $ is entirely analogous, using
\begin{align*}
\omega_2(y;\alpha,\alpha') &= \exp\left(-\tilde md(y;\alpha_1)-\mu d(y;\alpha_2)   \right).
\end{align*}

\end{proof}

\section{Applications}
In this section we apply Theorem \ref{thmabound} to the models described in section 1.

\subsection{Proof of Theorem \ref{mainthm}}
Set $\kappa_J=\infty$. Let $m_V, m,\lambda_J>0 $ be given and consider a symmetric, normal covariance $C\in\text{End }\mb C^{\mb L\times \N} $ with  
\begin{align*}
\lambda_{\text{min}}\big(\Re C^{-1}\big) =&\, \upmu>0 & \text{and}& & \Vert C\Vert_{6m,\infty} &= \mf c_{\infty}<\infty.
\end{align*}
Depending on these data, choose $R>r>0$, $\mf v_2\geq 0$, some fixed nondecreasing $\omega:[r,R]\to\mb R $ with $\omega(r)\leq 1 $, and some $c_v>0 $, according to conditions we are about to derive. Let $V_1(\phi;J)$ be analytic and satisfy (\ref{assv1}), and let $V_2 $ be a two body potential with source term that satisfies (\ref{assv2}). W.l.o.g., $V_1(0;J)=0 $.\\
Note first that 
\begin{align*}
\bm\vert V_1\bm\vert_{1,\dot\lambda} &\leq \sup_{r'\in[r,R]}\omega(r')^{-1} \Vert V_1\Vert_{r'+\mf v_1^{-1},\lambda_J,\dot m}\leq \frac{\mf v_1}{\dot\lambda_1}\\
\bm\vert V_2\bm\vert_{2,\dot\lambda} &\leq \tilde c_{2} \Vert v^\half\Vert^2_{2\dot m} \dot\lambda_2^{-4} + \vert a\vert \cdot  \lambda_J^{-1}\dot\lambda_2^{-1}\leq \tilde c_2 \left(\frac{\mf v_2^{\frac14}}{\dot\lambda_2}\right)^4 + \frac{\mf v_2^{\frac14}}{\dot\lambda_2}
\end{align*}
for some $\tilde c_{2} = \tilde c_{2}({\sf N}) $ whenever $\dot\lambda_J\geq \lambda_J $ and $\dot\lambda_1\geq \mf v_1 $. Let $c_2>0 $ be such that $$ \tilde c_2 c_2^4 + c_2 = 2^{-5}  c_{\Text{(\ref{smallness})}} . $$ By (\ref{assv2}), according to Remark \ref{rempositivity}, $V_2$ satisfies the positivity requirement (\ref{condpos}) for any $\dot\lambda_\phi\leq c_2\mf v_2^{\frac 14} $ with $$ c_{pos} = c_2c_v^2 \qqquad c_{pos}'=0. $$ Consider the norms introduced in Definitions \ref{defnorms} and \ref{altdef} for the following sequence of parameters $\bm\lambda $ as in Theorem \ref{thmabound}:
\begin{align*}
\bm\lambda &=  \Big(1,\lambda_J,1,1,1,1,\frac1{12},1,m,m_V,\infty,\infty\Big) \\
\bm\lambda\ob 4 = \Lambda^3_{\Text{iv}}\bm\lambda &= \Big(1,\lambda_J,1,1,1,1,\frac1{12},e^{-\frac\upmu8r^2},2m,m_V,\infty,\infty\Big)\\
\bm\lambda\ob 3 = \Lambda_{J}\bm\lambda\ob 4 &= \Big(1,\lambda_J,1,1,1,1,\frac1{26},e^{-\frac\upmu8r^2},2m,m_V,\infty,\infty\Big)\\
\bm\lambda\ob 2 = \Lambda^3_{\Text{ii}}\bm\lambda\ob 3 &= \Big(1,\lambda_J,1,1,1,1,\frac1{2^43^2},e^{-\frac\upmu8r^2},2m,m_V,\infty,\infty\Big)\\
\bm\lambda\ob 1 = \Lambda^2\bm\lambda\ob 2 &= \Big(\mf c_\infty^{-\half}\lambda\ob1_s,\lambda_J,\lambda\ob1_s,1,1,c_{\sf g}(2m)^{-1} ,2^{-16}c_{\Text{P.\ref{boundsintint}}}^{-1}   ,e^{-\frac\upmu8r^2},2m,m_V,\infty,\infty\Big)\\
\dot{\bm\lambda} = \Lambda_{\Text{P.\ref{boundsintint}}}\bm\lambda\ob 1 &= \Big(\dot\lambda_\phi,\lambda_J,\dot\lambda_s,\dot\lambda_1,\dot\lambda_2,1 ,1,1,2m+3m_V,m_V,\infty,\infty\Big)
\end{align*}
with
\begin{align*}
\lambda\ob1_s&=\frac{ \big[1\wedge c_{\Text{L.\ref{lemhadamard}}}^{-1}\big] }{2^{25}3^3 {\sf N}^2 c_{\sf g}(2m)^\half   c_{\Text{P.\ref{boundsintint}}}} \\
\dot\lambda_s &= \left[1+6\upmu_{\leq1}^{-1}{\lambda\ob1_s}^{-2}\frac{c_{\sf g}(m_V) }{c_{\sf g}(2m) }\right]^{-1}\\
\dot\lambda_1 &= \frac {c_{\Text{(\ref{smallness})}} \mf v_1}{32} \wedge \frac{\mf c_\infty^{-\half}\lambda\ob1_s}{2{\sf N}} \\
\dot\lambda_\phi = \dot\lambda_2 &=  c_2\mf v_2^{\frac14}\wedge \left[\frac{2{\sf N}}{\mf c_\infty^{-\half}\lambda\ob1_s}+6\upmu_{\leq1}^{-1}{\lambda\ob1_s}^{-2}\frac{c_{\sf g}(m_V) }{c_{\sf g}(2m) }\right]^{-1}
\end{align*}
According to Proposition \ref{propprop2}, the above choice of $\bm\lambda\ob1 $ is valid as long as 
\begin{align}\label{addconstr}
e^{\frac\upmu{16}r^2}    \geq  2^{25}3^3{\sf N}^2 c_{\sf g}(2m)^{\half} c_{\Text{P.\ref{boundsintint}}} c_{\Text{L.\ref{lemhadamard}}}  .
\end{align}
Set also $\tilde{\bm\lambda}_\bullet^\bullet = \bm\lambda_\bullet^\bullet $, except for the $\lambda_X $ component, where $\tilde\lambda\ob\bullet_X = 12\lambda\ob\bullet_X $ for $\bullet =  $ void, $4,3,2 $. It is now easy to check that all conditions of Theorem \ref{thmabound} (i) and (ii) are satisfied if:
\begin{enumerate}
\item In addition to (\ref{addconstr}), $r $ is big enough so that (\ref{poslocact}) and (\ref{stabassumr}) are satisfied.
\item $R $ is so that the stability condition (\ref{choicer}) is satisfied.
\item $\mf v_1 $ satisfies the smallness condition
\begin{align*}
\frac {c_{\Text{(\ref{smallness})}} \mf v_1}{32} \leq  \frac{\mf c_\infty^{-\half}\lambda\ob1_s}{2{\sf N}}
\end{align*}
\item $\mf v_2$ satisfies the smallness condition
\begin{align*}
c_2\mf v_2^{\frac14}\leq r\wedge  \left[\frac{2{\sf N}}{\mf c_\infty^{-\half}\lambda\ob1_s}+6\upmu_{\leq1}^{-1}{\lambda\ob1_s}^{-2}\frac{c_{\sf g}(m_V) }{c_{\sf g}(2m) }\right]^{-1}.
\end{align*}
\end{enumerate}
We conclude that $\log \mc Z(J) $ exists, is analytic in $J$, and satisfies
\begin{align*}
\bm\vert \log \mc Z\bm\vert_{\bm\lambda} \leq   1 + 12 \sup_{x\in\mb L}\Big\vert \log  \mu_{C(x,x)}(B_r)\Big\vert \leq 5.
\end{align*}
(We also conclude that $\log \mc Z(J) $ is nonperturbatively close to its pure small field approximation $\log \mc Z_s(J) $). Since $\bm\vert \log \mc Z\bm\vert_{\bm\lambda} = \Vert\log \mc Z \Vert_{\lambda_J,m} $, the Theorem follows.\qed

\subsection{Many Boson systems}

We now apply Theorem \ref{mainthm} to the situation of section 1.1. There, $\mb L = \theta\mb Z/\beta\mb Z\times \mb Z^D/L\mb Z^D $, ${\sf N}=2 $ (labeling real and imaginary parts of the complex field), and the propagator is given by $$ C =\left(\begin{array}{cc} {\sf C}+{\sf C}^T & i({\sf C}-{\sf C}^T) \\ -i({\sf C}-{\sf C}^T) & {\sf C}+{\sf C}^T  \end{array}\right) $$ with ${\sf C}(x,x') = \sum_{{\sf p}\in \beta\mb Z\times L\mb Z^D} {\sf C}(x-x'+{\sf p}) $ (use the Poisson summation formula) and 
\begin{align*}
{\sf C}( \tau,\bd x ) = (2\pi)^{-D-1}\int\limits_{\vert k_0\vert\leq \frac\pi\theta} \int\limits_{\vert \bd k\vert_\infty\leq \pi} \frac{e^{ik_0\tau+i\bd k\cdot\bd x}\ud k_0\ud\bd k}{2 e^{-\hat h(\bd k)+\mu} \sin^2\frac\theta 2 k_0 +  \left[1-e^{-\hat h(\bd k)+\mu}\right] + i e^{-\hat h(\bd k)+\mu}\sin\theta k_0}
\end{align*}
is real, but ${\sf C}(x)\neq {\sf C}(-x) $. $\mu<0 $, $\hat {\text{h}}(\bd k) $ is the Fourier transform of $\text{h} $, and, by assumption, $\hat {\text{h}}(\bd k) = \tilde h(\bd k)\sum_{i=1}^D (1-\cos k_i) $ with a uniformly positive and bounded $\tilde h $. Define the (anisotropic) metric $$d_{\tilde\upmu}\big((\tau,\bd x),(\tau',\bd x')\big) = \tilde\upmu \vert \tau-\tau'\vert + \tilde\upmu^\half \vert \bd x-\bd x'\vert_2.  $$ with $\tilde\upmu=e^{-\mu}-1 $. We easily derive the estimates\footnote{For the $\upmu $ - independent bound on $\lambda_{\text{min}}(\Re C) $, the $\sin^2\frac\theta 2 k_0 $ term is essential. Otherwise, we often use that the integral for ${\sf C}( \tau,\bd x ) $ is convergent even at $\mu=0 $.} (see below (\ref{cinfty}) for $c_\infty $ (\ref{cd}) for $c_d $)
\begin{align*}
\upmu &= 1-e^\mu & c_{\sf g}(m) &\leq c_1\cdot m^{-D-1}\cdot \tilde\upmu^{-1-\frac D2}  & c_{\sf g}'(a) &\leq c_2 \cdot a^{D+1}\tilde\upmu^{-1-\frac D2} \\
\Vert C\Vert_{6m_0,\infty} &\leq c_3  & \vert \mu_{C(x,x)}\vert &\leq c_4  & \lambda_{\text{min}}(\Re C)&\geq c_5 \\
c_\infty &\leq c_6  & c_d&\leq c_7
\end{align*}
for some constants $m_0,c_1,c_2,\ldots $ depending only on $\theta$ and $\text{h} $.\\
The interaction is given by
\begin{align*}
V_1(\phi;J) &= V(\phi) + \mc D(\phi) + \langle J,\phi\rangle,
\end{align*}
with an explicit quartic part $V(\phi) $ and a power series part $\mc D(\phi) $, see section 1.1. $V(\phi) $ is not stable in the sense required by Theorem \ref{mainthm}, but we are allowed to include it into $V_1 $ because we have the bounds (norms as in section 1.1, in particular with the standard torus distance)
\begin{align*}
\Vert V\Vert_{2R,2{\sf m}} &\leq c_8\cdot \tilde{\mf v}^{-4\epsilon} & \Vert \mc D\Vert_{2R,2{\sf m}} & \leq c_9 \tilde{\mf v}^{\half -8\epsilon},
\end{align*}
with $R=\tilde{\mf v}^{-\frac 14-\epsilon} $, for some small $\epsilon>0 $ (not explicitly chosen in \cite{BFKT4}) and  $\tilde{\mf v} = \theta\mf v $, $\mf v $ small enough. In particular, if $\lambda_J\leq \tilde{\mf v}^{2} $ (for convenience), we have (\ref{assv1}) with $d=d_{\tilde \upmu} $, $\dot m =  2{\sf m} \tilde\upmu^{-\half}$\footnote{For simplicity, we formulated Theorem \ref{mainthm} for isotropic metrics (same mass for all directions), so this choice for $\dot m $ suboptimal. }, $r = \tilde{\mf v}^\delta $, $\mf v_1 = \tilde{\mf v}^{\frac 14-\epsilon} $ and
\begin{align*}
\omega(r') &= c_{10}\cdot \tilde{\mf v}^{-4\epsilon}\left(\frac{\tilde{\mf v}^{-\frac 14+\epsilon}+r'}{\tilde{\mf v}^{-\frac 14+\epsilon}+\tilde{\mf v}^{-\frac 14-\epsilon}}\right)^4.
\end{align*}
In order to apply Theorem \ref{mainthm}, it has to be shown that the exponents $0<\delta<\frac14 $  can be chosen so that the stability condition (\ref{choicer}) is satisfied, $r $ is big enough so that (\ref{poslocact}), (\ref{stabassumr}) and (\ref{addconstr}) are satisfied for $\tilde{\mf v} $ small enough (clearly, $\omega(r)\leq 1 $ is already satisfied for $\delta<\frac14  $ and $\tilde{\mf v} $ small enough). According to the above, (\ref{choicer}) reads
\begin{align*}
\upmu \tilde{\mf v}^{-2\delta} \geq c_{11} \tilde{\mf v}^{-4\epsilon} \big(m_V^{-1}\log \tilde{\mf v} ^{-4\epsilon}\big)^{D+1} \tilde\upmu^{-1-\frac D2}.
\end{align*}
Set $m = m_0\wedge \frac13 {\sf m}\tilde\upmu^{-\half} $ and $m_V = \frac23{\sf m}\tilde\upmu^{-\half} -m.  $ Then the above is satisfied for $\delta>2\epsilon $ and $ \tilde{\mf v} $ small enough, depending on $\mu $. The conditions (\ref{poslocact}), (\ref{stabassumr}) and (\ref{addconstr}) read
\begin{align*}
e^{-\frac{\upmu}{4}\tilde{\mf v}^{-2\delta}} &\leq c_{12} \\
\tilde{\mf v}^{\delta}&\leq c_{13} \cdot \upmu^\half \cdot  \big[c_{\sf g}(2m)\Vert C\Vert_{6m,\infty}\big]^{-\frac12}\\
&\leq c_{14} \cdot \upmu^\half \cdot \big[m^{-D-1}\tilde\upmu^{-1-\frac D2} \big]^{-\frac12}\\
e^{-\frac{\upmu}{16}\tilde{\mf v}^{-2\delta}}  &\leq  c_{15}\big[m^{-D-1}\tilde\upmu^{-1-\frac D2} \big]^{-\frac12}
\end{align*}
These conditions can be satisfied for any $\delta>0 $, but the maximal choice $\delta = \frac14-\epsilon $ is optimal. The smallness condition on $\mf v_1 =  \tilde{\mf v}^{\frac 14-\epsilon}$ that is required in the assumptions of Theorem \ref{mainthm} is
\begin{align*}
\mf v_1 \leq   c_{16} \cdot  \big[m^{-D-1}\tilde\upmu^{-1-\frac D2} \big]^{-\frac12}.
\end{align*}
Applying Theorem \ref{mainthm}, we have proven

\begin{thm}
Let $\log \mc Z(J^*,J) $ of (\ref{genfctmb}) be the small field approximation to the coherent state generating function of the truncated correlations of the Bose gas, with kinetic energy $\Text{h} $, repulsive two body potential $v$ of strength $\mf v $ and decay rate ${\sf m} $, and at chemical potential $\mu<0 $ and inverse temperature $\beta $. (The approximation depends on two parameters $\theta,\epsilon>0 $ that are small, and $J^*,J\in\mb C^{\mb L} $ with $\mb L = \theta\mb Z/\beta\mb Z\times\mb Z^D/L\mb Z^D $). Suppose the following smallness conditions on $\mf v $ are satisfied:
\begin{align*}
\mf v &\leq \const \upmu^{\frac{2}{1-4\epsilon}}\min\left\{1,\left[m^{-D-1}\tilde\upmu^{-1-\frac D2} \right]^{-\frac2{1-4\epsilon}},\log \left[m^{-D-1}\tilde\upmu^{-1-\frac D2} \right]\right\} \\
\mf v&\leq \const \left[m^{-D-1}\tilde\upmu^{-1-\frac D2} \right]^{-\frac2{1-4\epsilon}}
\\
\mf v^{ \frac{1-12\epsilon}2   } \log \mf v ^{-4\epsilon}&\leq \const m_V^{-D-1}  \cdot \upmu \cdot \tilde\upmu^{1+\frac D2}
\end{align*}
with $\const $ small enough (depending on $\Text{h},\theta $ and $\sf m $),
\begin{align*}
\tilde\upmu &= e^{-\mu}-1 &\upmu &= 1-e^\mu&  m &= m_0\wedge \frac13 {\sf m}\tilde\upmu^{-\half} \text{  ($m_0 $ small enough)} & m_V &= \frac23{\sf m}\tilde\upmu^{-\half} -m;
\end{align*}
Then $\log \mc Z(J^*,J) $ exists, is analytic in $J^*,J$ for $\Vert J\Vert_\infty,\Vert J^*\Vert_\infty\leq \const \mf v^2  $, and its power series coefficients at $J=J^*=0 $ (i.e. the truncated correlations) satisfy
\begin{align*}
\sup_{x\in \mb L}\sum_{\substack{x_1,\ldots,x_n\\ x_1^*,\ldots,x_n^* \\x\in \{x_1,\ldots,x_n^*\}}} e^{md_{\tilde\upmu,t}(x_1,\ldots,x_n^*)} \vert C_n(x_1,\ldots,x_n;x_1^*,\ldots,x_n^*)\vert \leq \const
\end{align*}
for $d_{\tilde\upmu,t}(x_1,\ldots,x_n^*) $ the size, in the metric $\tilde\upmu \vert \tau-\tau'\vert + \tilde\upmu^\half \vert \bd x-\bd x'\vert_2 $, of a minimal tree with vertices $x_1,\ldots,x_n^*. $

\end{thm}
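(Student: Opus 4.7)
The plan is to invoke Theorem \ref{mainthm} in the concrete setting of section 1.1, using essentially the setup and bounds already outlined in the preceding subsection on many Boson systems. First I would identify the data: take the real two-component field corresponding to the symmetric (but complex) covariance built from ${\sf C}$ by the $2\times 2$ block formula, and note that the interaction $V = V(\phi) + \mc D(\phi) + \langle J,\phi\rangle$ of the small field functional integral has \emph{no} strictly stable $V_2$ piece in this application — it must be absorbed entirely into $V_1$. This works because the bounds $\Vert V\Vert_{2R,2{\sf m}} \leq \const\,\tilde{\mf v}^{-4\epsilon}$ and $\Vert\mc D\Vert_{2R,2{\sf m}}\leq\const\,\tilde{\mf v}^{\frac12-8\epsilon}$ cited from \cite{BFKT4} translate, via the comparison footnote between (\ref{originalnorm}) and the norm $\Vert V_1\Vert_{r'+\mf v_1^{-1},\lambda_J,\dot m}$, into the hypothesis (\ref{assv1}) with $\omega(r')$ of the form given in Remark \ref{remchoicer}, using $R=\tilde{\mf v}^{-1/4-\epsilon}$, $\mf v_1 = \tilde{\mf v}^{1/4-\epsilon}$ and $\lambda_J\leq\tilde{\mf v}^2$. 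The mild additional cost $\tilde{\mf v}^{-4\epsilon}$ is absorbed into $\omega$.

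Second, I would collect the propagator estimates for $C$ in the anisotropic metric $d_{\tilde\upmu}$ with $\tilde\upmu=e^{-\mu}-1$: the mass $\upmu=1-e^\mu$, the geometric constants $c_{\sf g}(m)\lesssim m^{-D-1}\tilde\upmu^{-1-D/2}$, $c_{\sf g}'(a)\lesssim a^{D+1}\tilde\upmu^{-1-D/2}$, an $O(1)$ bound on $\Vert C\Vert_{6m_0,\infty}$, $\vert\mu_{C(x,x)}\vert$, $\lambda_{\min}(\Re C)$, $c_\infty$, $c_d$ (the last two using, as explained in Remark \ref{remdestroycancel}, that the integral defining ${\sf C}$ is convergent uniformly in $\mu\leq 0$ and that $\mc F$ is a Fourier transform tensored with a $2\times 2$ unitary). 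All of these follow by standard Fourier analysis; the only care needed is with the anisotropic scaling and the lower bound on $\lambda_{\min}(\Re C)$, where the $\sin^2(\theta k_0/2)$ contribution in the denominator is what prevents collapse as $\mu\uparrow 0$.

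Third, I would choose $r = \tilde{\mf v}^\delta$ with some $0<\delta<\frac14$ (the optimal choice turning out to be $\delta=\frac14-\epsilon$), set $m = m_0 \wedge \frac13{\sf m}\tilde\upmu^{-1/2}$ and $m_V=\frac23{\sf m}\tilde\upmu^{-1/2}-m$ so that the total mass budget for $\dot m=2m+3m_V$ matches $2{\sf m}\tilde\upmu^{-1/2}$, and verify each of the conditions required by Theorem \ref{mainthm}: the stability condition (\ref{choicer}) translates (after plugging in the above $c_{\sf g}'$ and $\omega(R)\lesssim\tilde{\mf v}^{-4\epsilon}$) into $\upmu\tilde{\mf v}^{-2\delta}\gtrsim\tilde{\mf v}^{-4\epsilon}(\log\tilde{\mf v}^{-1})^{D+1}\tilde\upmu^{-1-D/2}$, which is satisfied for $\delta>2\epsilon$ and $\mf v$ small; the positivity conditions (\ref{poslocact}), (\ref{stabassumr}) and the auxiliary (\ref{addconstr}) from Proposition \ref{propprop2} become explicit inequalities on $\exp(-\upmu r^2/c)$ and powers of $\tilde{\mf v}^\delta$ and $\tilde\upmu$ which hold for $\mf v$ small depending on $\mu$; finally, the smallness $\mf v_1\leq\const[m^{-D-1}\tilde\upmu^{-1-D/2}]^{-1/2}$ required by the map $\Lambda_{\text{P.\ref{boundsintint}}}$ gives the last inequality in the theorem's hypothesis list.

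Given all of this, Theorem \ref{mainthm} produces $\bm\vert\log\mc Z\bm\vert_{\bm\lambda}\leq 5$ for the parameter vector $\bm\lambda$ with mass $m$ in the metric $d_{\tilde\upmu}$. Unfolding the definition of $\bm\vert\cdot\bm\vert_{\bm\lambda}$ and using that the norm saturates on the sequence of variables $(x_1,\ldots,x_n^*)$ supporting an $n$-th truncated correlation, one reads off exactly the stated weighted $\ell^1$ bound on $\vert C_n(x_1,\ldots,x_n^*)\vert$ weighted by $e^{m\, d_{\tilde\upmu,t}}$. The only real obstruction in the argument is the third step — threading the needle on $\delta$ and the two separate logarithmic/polynomial demands of (\ref{choicer}) versus the positivity conditions — and this is what forces the somewhat awkward coupling of the three smallness hypotheses on $\mf v$ in the statement. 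Everything else is a bookkeeping translation of the general machinery.
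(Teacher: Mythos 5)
Your proposal follows the same route as the paper: identify the two-component real field and the symmetric complex covariance built from ${\sf C}$, absorb the entire interaction $V+\mc D+\langle J,\phi\rangle$ into $V_1$ (taking $V_2=0$), collect the propagator estimates in the anisotropic metric $d_{\tilde\upmu}$, pick $r,m,m_V,\mf v_1,\omega$ as you describe, verify conditions (\ref{choicer}), (\ref{poslocact}), (\ref{stabassumr}), (\ref{addconstr}), and finally apply Theorem \ref{mainthm}. One small inaccuracy worth flagging: with the stated choices, $\dot m=2m+3m_V=2{\sf m}\tilde\upmu^{-1/2}-m$, so it is \emph{dominated by}, rather than equal to, $2{\sf m}\tilde\upmu^{-1/2}$; this is in fact all that is needed, since a larger decay rate in the hypothesis on $V_1$ implies the required bound at the smaller mass $\dot m$, but your phrase "matches" should be "is bounded by."
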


\begin{rem}
It is interesting to investigate the range of chemical potentials for which this Theorem is useful at a given interaction strength $\mf v $. For small $\mu $, the above conditions read
\begin{align*}
\mf v &\leq \const \vert\mu\vert^{\frac{2}{1-4\epsilon}}\cdot \vert\mu\vert^{\frac{2+D}{1-4\epsilon}}  \\
\mf v&\leq \const \vert\mu\vert^{\frac{2+D}{1-4\epsilon}}
\\
\mf v^{ \frac{1-12\epsilon}2 } \log \mf v ^{-4\epsilon}&\leq \const   \vert \mu\vert\cdot \vert \mu\vert^{1+\frac D2} m_V^{-D-1} =\const   \vert \mu\vert\cdot \vert \mu\vert^{\half} 
\end{align*}
The first condition is the dominant one, and gives the restriction $$\mu\leq -\const\mf v^{\frac{1-4\epsilon}{4+D}} . $$ The factors $ \vert\mu\vert^{\frac{2+D}{1-4\epsilon}}$ are really $[c_{\sf g}(m_0)\Vert C\Vert_{6m_0,\infty}\Vert]^{-\frac{2}{1-4\epsilon}} $. With some effort (further reducing the clarity of the proof), this could be replaced by the $1,\infty $ norm $\Vert C\Vert_{6m_0}^{-\frac{2}{1-4\epsilon}} $. In an ideal situation, one has $\Vert C\Vert_{6m_0}\sim \hat {\sf C}^{-1}(0)\sim \upmu^{-1} , $ with $\hat {\sf C}^{-1}$ the integrand of the Fourier integral defining $\sf C $. Similarly, incorporating anisotropic norms would improve $\vert \mu\vert^{1+\frac D2} m_V^{D+1} \to 1 $. The last two conditions then reduce to the intuitively optimal $\mu\leq -\const \mf v^{\frac{1-12\epsilon}{2}} $. The first condition (which is essentially due to (\ref{stabassumr}) and therefore related to our treatment of the oscillations of coherent states and the boundary terms due to characteristic functions) weakens this inequality to $\mu\leq -\const \mf v^{\frac{1-4\epsilon}{3}}. $

\erem
\end{rem}

\subsection{Unbounded spin systems}

Consider now the situation of section 1.2. With a similar and easier analysis as in the last section, we get from Theorem \ref{mainthm}:

\begin{thm}
Let $\log \mc Z(J) $ of (\ref{genfctuss}) be the generating functional of truncated correlations of an unbounded spin system on a finite lattice $\mb L$, with kinetic energy determined by a (possibly complex) covariance $C$, and with a repulsive two body interaction $v $. Let $m,\upmu,\mf v>0 $, and assume that
\begin{align*}
\lambda_{\Text{min}}(\Re C^{-1})&\geq \upmu &  \Vert C\Vert_{6m,\infty}&\leq \mf c_\infty\\
\lambda_{\Text{min}}(v)&\geq c_v\mf v &  \Vert v^\half \Vert_{10m}&\leq \mf v^\half.
\end{align*}
Then, if $c_J $ and $\mf v $ are small enough (depending on $m,\upmu,c_v $ and $c_\infty$), $\log \mc Z(J) $ exists, is analytic for $\Vert J\Vert_\infty c_J\mf v^{-\frac 14}$, and its power series coefficients (i.e. the truncated correlations) satisfy
\begin{align*}
\sup_{x\in\mb L}\sum_{\substack{ x_1,\ldots,x_n \\ x\in\{x_1,\ldots,x_n\} }} e^{md_t(x_1,\ldots,x_n)} \vert C_n(x_1,\ldots,x_n)\vert \leq\const
\end{align*}

\end{thm}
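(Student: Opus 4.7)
The plan is to deduce this theorem as a direct corollary of Theorem \ref{mainthm}, specialized to the case where the power series part of the interaction is absent and the small field cutoff is removed. Concretely, take $V_1\equiv 0$ (so $R=\infty$, $\omega\equiv 0$, and both the normalization $\omega(r)\le 1$ and the stability condition \eqref{choicer} hold vacuously), set $\mf v_2 := \mf v$, and let
$$V_2(\phi;J) \;=\; -\tfrac12\sum_{x\in\mb L}\Bigl(\sum_{y\in\mb L} v^\half(x,y)\,\phi(y)^2\Bigr)^2 \;+\; \langle J,\phi\rangle,$$
absorbing the source term of \eqref{genfctuss} into the $a\,J(x)\phi(x)$ piece of $V_2$ with $|a|=1$ (the sign is irrelevant after $J\mapsto -J$; the factor of $\tfrac12$ versus \eqref{formtwobodypotential} is absorbed by redefining $v\leadsto v/2$). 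Choose $m_V := m$, so $\dot m = 2m+3m_V = 5m$ and $2\dot m = 10m$.

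With this matching, the hypotheses of Theorem \ref{mainthm} become routine verifications. For \eqref{assv2}: the decay bound is $\Vert v^\half\Vert_{2\dot m}\le \Vert v^\half\Vert_{10m}\le \mf v^\half = \mf v_2^\half$; the eigenvalue inequality follows from $\lambda_{\min}(v^\half) = \sqrt{\lambda_{\min}(v)} \ge \sqrt{c_v}\,\mf v_2^\half$ (giving the constant $\sqrt{c_v}$ in place of $c_v$); and $|a|\le \lambda_J\mf v^{1/4}$ is arranged by setting $\lambda_J := \mf v^{-1/4}$, which determines the claimed source-field analyticity radius (up to constants absorbed into $c_J$). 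The stability hypothesis \eqref{condpos} with $M=2$ follows from the positive definiteness of $v$:
$$\Re V_2(\phi;0) \;\le\; -\tfrac12\, c_v\,\mf v\,\sum_{x\in\mb L}\phi(x)^4,$$
yielding $c_{pos}=1$, $c_{pos}'=0$ and $\lambda_\phi = (c_v\mf v/2)^{1/4}$; Remark \ref{rempositivity} handles the interpolated version $V_2(x;\phi;s;J)$ automatically since the minimal eigenvalue of the Hadamard product $v^\half(x,y)s(\{x,y\})^2$ is controlled by Lemma \ref{lemhadamard}.

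It remains to select $r$ so that the positivity/smallness conditions \eqref{poslocact}, \eqref{stabassumr} and \eqref{addconstr} are met; these depend only on $m,\upmu,c_v$ and $\mf c_\infty$, and since \eqref{choicer} is vacuous (right-hand side equals zero), $r$ is unconstrained from above and all three can be satisfied simultaneously by choosing $r$ large. The global smallness $\bm\vert V_2\bm\vert_{2,\dot{\bm\lambda}}<c_{\text{(\ref{smallness})}}$ then follows from the rescaling chain at the start of section 6.1 once $\mf v$ is sufficiently small (depending on $m,\upmu,c_v,\mf c_\infty$). Theorem \ref{mainthm} now yields $\Vert \log\mc Z\Vert_{\lambda_J,m} \le 5$, and since for each fixed $n$ the sum claimed in the statement is bounded above by $\lambda_J^{\,n}\,\Vert\log\mc Z\Vert_{\lambda_J,m}$, the desired bound follows with an $n$-dependent constant.

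The only mildly delicate step is the parameter bookkeeping, i.e.\ ensuring that the decay rates assigned to $v^\half$ and $C$, the choice of $m_V$, and the thresholds on $r$ are simultaneously consistent; but this is a strictly simpler instance of the chain worked out at the beginning of section 6.1 (with $V_1 = 0$ collapsing most of the steps), and no new analytic input beyond Theorem \ref{mainthm} is required.
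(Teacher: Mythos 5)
Your proposal is correct and is essentially the same argument the paper intends: the paper's own proof is a one-line deferral (``with a similar and easier analysis as in the last section, we get from Theorem \ref{mainthm}''), and you fill in precisely the specialization that the author had in mind. The key moves you make — setting $V_1\equiv 0$ so that $R=\infty$ and $\omega\equiv 0$, so that both $\omega(r)\leq 1$ and (\ref{choicer}) are vacuous; identifying $V_2$ with the two-body potential of (\ref{genfctuss}) plus the absorbed source term; setting $\mf v_2 = \mf v$ and $m_V = m$ so $2\dot m = 10m$; choosing $\lambda_J\sim\mf v^{-1/4}$ so $a\leq\lambda_J\mf v_2^{1/4}$; using the positive-definiteness of $v$ for the stability condition; and then fixing $r$ large and $\mf v$ small to meet (\ref{poslocact}), (\ref{stabassumr}), (\ref{addconstr}) together with $\dot\lambda_2^{-1}\geq r$ — are exactly the verifications that the chain in section 6.1 reduces to when $V_1=0$. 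The deduction of the coefficient bound from $\Vert\log\mc Z\Vert_{\lambda_J,m}\leq 5$ by multiplying by $\lambda_J^n$ for fixed $n$ is the right reading of the theorem's claim (the constant is indeed $n$- and $\mf v$-dependent, as the paper's own Remark in section 6.3 confirms with the refinement $c^n n!\mf v^{n/4}$ via the shift $V_2(\phi;J)=V(\phi+C^{1/2}J)$). One small caveat worth being aware of: your choice $\lambda_J=\mf v^{-1/4}$ gives analyticity only for $\Vert J\Vert_\infty\leq\lambda_J^{-1}=\mf v^{1/4}$, not the (garbled) ``$c_J\mf v^{-1/4}$'' printed in the theorem; the larger radius $c_J\mf v^{-1/4}$ is obtained only via the shift argument sketched in the paper's subsequent Remark, which eliminates the linear source term from $V_2$ and lets $\lambda_J$ be taken as small as $c_J^{-1}\mf v^{1/4}$. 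Since the paper's ``proof'' references only the unshifted analysis of section 6.2, your reading of the intended statement as $\Vert J\Vert_\infty\leq c_J\mf v^{1/4}$ is the most internally consistent one, and the discrepancy is almost certainly a typo in the theorem's display.
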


\begin{rem}
As in \cite{APS}, one might be interested in the asymptotics of the constant in the bound for $C_n $. For this, notice that
\begin{align*}
\mc Z(J) &= e^{-\half \langle J,C J\rangle} \int_{\Xi^{\mb L}} \prod_{x\in X}\ud \phi(x) e^{-\half \langle \phi,C^{-1}\phi\rangle + V(\phi + C^\half J) }
\end{align*}
This is true for real $C$ and follows for complex $C$ by analytic continuation, or by a Stokes argument similar to the ones in \cite{BFKT3}. It is easy to see that $V_2(\phi;J) = V(\phi+C^\half J ) $ still satisfies the assumptions (in particular the stability assumptions) needed for Theorem \ref{thmabound}. Going through these assumptions as before shows $\Vert \log e^{\half \langle J,C J\rangle}\mc Z(J)\Vert_{\lambda_J,m}<\infty $ for any $\lambda_J\geq c_J^{-1} \mf v^{\frac 14} $. This implies
\begin{align*}
\sup_{x\in\mb L}\sum_{\substack{ x_1,\ldots,x_n \\ x\in\{x_1,\ldots,x_n\} }} e^{md_t(x_1,\ldots,x_n)} \vert C_n(x_1,\ldots,x_n)\vert \leq c^n n!\mf v^{\frac n4}.
\end{align*}
for a constant $c$ independent of $n>2 $ and $\mf v$. In the case of a $2$ component $\phi^4 $ model on the torus with the discrete Laplacian as kinetic energy, the asymptotics $c^n n! \mf v^{\frac n2-1} $ have been conjectured in \cite{APS}, based on the tree level perturbation theory of the model. As far as optimal $\mf v $ behavior is concerned, the authors of that work could prove a bound of the type $c^n (n!)^2\mf v^{\frac n2-1} $. This bound could also be obtained from our results, by first using Theorem \ref{thmabound} (ii) to reduce $\mc Z $ to its small field approximation $\mc Z_s $ (with an optimal choice of $r$, the corresponding large field error has the asymptotics $c^nn!\mf v^{\frac n4} e^{-c'\mf v^{-\half}}\leq {c''}^n (n!)^2 \mf v^{\frac n2} $), and then analyzing the convergent perturbation theory for $\log \mc Z_s  $. We leave this to the interested reader.

\erem
\end{rem}

\begin{appendix}
\section{Notation}

Due to the lack in this field of standard notation (or even a standard reference on which this thesis could be based), we provide an extensive index of notation for the convenience of the reader.\\[15pt]
\textbf{General Mathematics.}\tabulinesep=3pt
\begin{longtabu}{|l|l|X|}\hline
%
$\chi(\cdots) $ &  & for some condition $\cdots $ is $1 $ if the condition is satisfied and zero otherwise.  \\  \hline  
%
%
$\dot\cup $ &   & Disjoint union, (almost) always of nonempty sets. \\ \hline
$M\dot\cap M'\neq\emptyset $ & Page 12 & $M\cap M'\neq\emptyset $ unless $M =\emptyset. $  \\ \hline
$\ul n $ &  & $=\{1,\ldots,n\} $ \\ \hline
$\sigma(C) $ &  & The spectrum of an endomorphism $C$ \\ \hline
$\lambda_{\Text{min}}(C) $ &  & Smallest eigenvalue of a self adjoint endomorphism. \\ \hline
${\sf P}(M) $ &   & set of all pairs of elements of $M$ (full graph on $M$).\\ \hline
$\{N_m\}_1^m,(N_m)_1^n,\{(x_m)_1^n\} $ &   & $= \{N_m,m=1,\ldots,n\} $ or $(N_1,\ldots, N_m) $. $\{(x_m)_1^n\},x_m\in \mb L ,$ is the equivalence class of $(x_m)_1^n $ in $\mb L^n/\mc S_n $ (i.e. an unordered sequence / multiset)\\ \hline
$\mc G(\{X_m\}_1^n),\mc G((X_m)_1^n) $ &   & The connectedness graph of $\{X_m\}_1^n $ ($(X_m)_1^n $). \\ \hline 
$\mf T(M),\mf F(M)  $ &   & Set of trees (forests) on $M$.   \\ \hline
$d^F(x) $ &   & Degree of $x$ in the forest $F$.  \\ \hline
$\mc T(X),\mc T^c(F;X) $ & Page 15/40   & $\mc T(X) $ is any minimal spanning tree of $X$. $ \mc T^c(F;X)$ is any minimal forest $F' $ on $X$ such that $F\dot\cup F'\in\mf T(X). $  \\ \hline
$\mc L(F),\mc L(F)\vert_2 $ &  Page 32 & $= \{(\ell,x),x\in\ell,\ell\in F\} $, the set of legs of a forest $F$. $\mc L(F)\vert_2=  \{ (x,(\ell,x)\in \mc L   ) \}  $ (an unordered sequence).  \\ \hline 
%
%
%
%
\end{longtabu}

\noindent
\textbf{Partitions, Interpolation.}
\begin{longtabu}{|l|l|X|}\hline
%
%
$\mc P(M),\mc P(x) $ & Page 10 & the set of partitions of $M$. Elements of $\mc P(M) $ are denoted $\{N_m\}_1^n=\mc P  $. For $x\in M $, $ \mc P(x)$ is the unique $N_m $ with $x\in N_m$. \\ \hline
$\mc P(s),\mc P(F) $ & Page 14 & The partition of $\mb L $ defined by an interpolation parameter $s\in[0,1]^{{\sf P}(\mb L)} $ or a forest $F\in\mf F(\mb L) $. \\ \hline
$ \mc M(M)$ & Page 11 & $=\{\{N_m\}_1^n,\vert N_m\vert\geq 2,\cup N_m=M \}. $ \\ \hline 
${2^{\mb L}}',\mc P'(\mb L) $ & Page 12 &  ${2^{\mb L}}' = \{(X,Q),\emptyset\neq Q\subset X\subset \mb L\} $, and $\{(X_m,Q_m)\}_1^n\in\mc P'(\mb L)  $ if $ (X_m,Q_m)\in {2^{\mb L}}'$ and $\{X_m\}_1^n\in\mc P(\mb L) $ \\ \hline
$\tilde2^{\mb L},\mc M ,\mc C $ & Page 12/13 &  ${\tilde 2}^{\mb L} = \{(Z,X,Q),\, Z\subset 2^{\mb L}, (X,Q)\in {2^{\mb L}}',Z\dot\cap X\neq \emptyset\} $. See page 12 for the definitions of $\mc M,\mc C $  \\ \hline
%
%
%
%
$C_{s} $ & Page 14 & Subscript $s $ denotes interpolation by $s $ (Hadamard product).  \\ \hline
\end{longtabu}

\noindent
\textbf{Fields, Coefficient systems.} 
\begin{longtabu}{|l|l|X|}\hline
${\sf N},\Xi,\Xi_{\mb C}  $ & Page 7 & Number of field components (real dimension of the target space $\Xi =\mb R^{\sf N} $ of the fields); $\Xi_{\mb C} = \mb C^{\sf N}. $ \\ \hline
$ \langle\phi,\psi\rangle  $ & Page 7 &  real Euclidean scalar product on $\Xi^{\mb L}. $ \\ \hline
$\xi,x,\zeta,z $ & Page 7 &  For $ \xi,\zeta\in \mb L\times\N$, $x,z $ denote their first component. See page 7 for this somewhat ambiguous convention. \\ \hline 
$\bd L,\bd L\vert_X $ & Page 7 & Space of unordered sequences $\bm \xi $ (of any finite length) of elements of $\mb L\times\N $ (resp. $X\times\N,X\subset \mb L $). \\ \hline 
$\supp\bm\xi $ & Page 7 &  $  = \{x_m\}_1^n$ for $\bm\xi = \{(\xi_m)_1^n\}; $ confer the above convention. \\ \hline 
$n(\bm \xi),n(\bm \xi,\xi)$ & Page 21 & $n(\bm \xi) $ is the length of the unordered sequence $\bm\xi $. $n(\bm \xi,\xi)  =  \vert \{m,\xi_m=\xi\}\vert$ for $\bm\xi=\{(\xi_m)_1^n\} $. \\ \hline 
$\phi(\bm\xi)  $ & Page 7 &  $= \phi(\xi_1)\cdots\phi(\xi_n) $ for $\bm\xi = \{(\xi_1,\ldots,\xi_n)\} $ \\ \hline
%
%
$\nabla_{\phi;\bm\xi},\nabla_{s,F},\nabla_{J,\bm\zeta} $ & Page 21 & Derivative with respect to the fields / the source. \\ \hline 
$s^F,\ud\bd s^F $ & Page 14 &  $s^F(\{x,y\}) =  \min\{s(\ell),\ell \Text{ on the }F\Text{ path linking }x,y\} $ is interpolation parameter defined by a forest $F$ with weights $s(\ell),\ell\in F $. $\ud\bd s^F $ is integration over these weights.   \\ \hline
$\aleph_\bullet^\bullet,\mc F_\bullet^\bullet,\mc S_\bullet^\bullet $ & Page 20 & Spaces of power series coefficients, of field variables, and of sets of dependence. \\ \hline 
$\alpha,\Psi,\varsigma $ & Page 21 & Generic name for elements of $\aleph_\bullet^\bullet,\mc F_\bullet^\bullet,\mc S_\bullet^\bullet . $   \\ \hline 
$\supp\alpha,\alpha\cap\alpha',\alpha\circ\alpha' $ & Page 23 & See page 23. \\ \hline 
$\check\aleph_\bullet^\bullet  $ & Page 40 & $ = \mb L\times \aleph_\bullet^\bullet $ . \\ \hline 
\end{longtabu}

\noindent
\textbf{Activities, Algebra.}
\begin{longtabu}{|l|l|X|}\hline
$A,\dot A,A' $ & Page 17/10 & $A'(X;\phi;s;J) $ is the intermediate activity before integrating out $\phi$. $A(X;J) $ is $A'$ after integration without large field decomposition. $\dot A $ is the normalized version of $A$ (see (\ref{mealg})). \\ \hline 
$A_s,B $ & Page 18 & The pure small field and large field activities.   \\ \hline 
$\mc Z(J),\mc Z_s(J) $ & Page 9/38 & The partition function and its pure small field approximation. \\ \hline 
$\mc L,\dot B $ & Page 13 & Intermediate activities in the Mayer resummation of the large field small field cluster expansion. \\ \hline 
$\bm\int\limits_T\bm\int\limits_{T,Q} $ & Page 17/18 & Integro Differential operators that integrate out scale $\kappa $, with or without large field decomposition. \\ \hline 
$\chi_Q,\chi_Q^c $ & Page 18 & Small/large field characteristic functions.  \\ \hline 
$r,R $ &  & Cutoff parameters for small and large fields \\ \hline 
\end{longtabu}

\noindent
\textbf{Norms.}
\begin{longtabu}{|l|l|X|}\hline
$ \sum_{\alpha\in\aleph_\bullet^\bullet}   $ & Page 21/22 & See page 21. \\ \hline 
$\Sigma_\bullet^\bullet $ & Page 40 & $:\mb C^{\check\aleph_\bullet^\bullet}\to \mb C^{ \aleph_\bullet^\bullet}, $ the sum over the first component. \\ \hline 
$\bm \lambda,\bm\Lambda $ & Page 40 & Array (space of arrays) of parameters which tune our norms. \\ \hline 
$\mc B_{\bullet,\bm\lambda}^\bullet,\check{\mc B}_{\bullet,\bm\lambda}^\bullet $ & Page 40 & $\mc B_{\bullet,\bm\lambda}^\bullet\subset \mb C^{\aleph_\bullet^\bullet},\check{\mc B}_{\bullet,\bm\lambda}^\bullet\subset \mb C^{\check\aleph_\bullet^\bullet}, $ sets of test functions used in the definition of the fundamental analyticity norms. See page 41 for an alternative definition of $ {\mc B}_{\phi,\bm\lambda}  $ \\ \hline 
$\mc B_{J,\bm\lambda}^\bullet(T),\check{\mc B}_{J,\bm\lambda}^\bullet(T) $ & Page 41 & $\mc B_{\bullet,\bm\lambda}^\bullet(T)\subset \mb C^{(\aleph_\bullet^\bullet)^k} $ (for $T\in\mf T(\ul k) $) a set of test functions as needed for Property 3 (i).  \\ \hline 
$\mc D_{\bullet,\bm\lambda}^\bullet  $ & Page 25 & $\subset \mc F_\bullet^\bullet $, sets of field / interpolation parameter configurations used in the definition of the fundamental analyticity norms.   \\ \hline 
$\mf G_{Q,\bm\lambda}  $ & Page 25 & ``large field regulators''. \\ \hline 
$\bm\vert \cdot\bm\vert_{\bullet,\bm\lambda}^\bullet,\bm\Vert\cdot\bm\Vert_{\phi,\bm\lambda} $ & Page 22 & The fundamental analyticity norms on our activities. \\ \hline 
$\Vert \cdot\Vert_{\bm\lambda} $ & Page 22 & Fundamental norms for coefficient functions. \\ \hline 
$\Vert \cdot\Vert_{m},\Vert \cdot\Vert_{m,\infty} $ & Page 8 & $1,\infty $ norm and $\infty $ norm with exponential weight of mass $m$ for kernels on $\mb L $ or $\mb L\times \N $ \\ \hline 
$\Vert \cdot\Vert_{R,\lambda_J\dot m},\Vert \cdot\Vert_{m,\infty} $ & Page 8 & $1,\infty $ + exponential tree decay norm for coefficients of power series. \\ \hline 
$d_t(\alpha),d(x;\alpha) $ & Page 40 & Tree size (resp. distance to $x$) of $\supp\alpha. $ Mind the exceptions for $\alpha\in\aleph_2. $\\ \hline 
$m,m_V,\tilde m,\mu $ & Page 40/41 & Components of $\bm\lambda $. The masses needed in the definition of the fundamental norms on test functions. \\ \hline 
$\lambda_\bullet $ & Page 40 & Weights needed in the definition of the fundamental norms on test functions.  \\ \hline 
$\bm\lambda_\bullet^\bullet(\alpha) $ & Page 40 & Appropriate combination of the weights $\lambda_\bullet  $ as needed in the definition of the fundamental norms on test functions.  \\ \hline 
\end{longtabu}

\noindent
\textbf{Bounds, Construction.}
\begin{longtabu}{|l|l|X|}\hline
$c_{\sf g}(m),c_{\sf g}'(a) $ & Page 6/7 & Geometric constants. For a standard $D$-dimensional lattice: $c_{\sf g}(m)\sim m^{-D} $ and $c_{\sf g}'(a)\sim a^D $ if $m<1,a>1 $.  \\ \hline 
$\omega(r') $ & Page 8 & Function controlling the size of the nonpolynomial part of the interaction on the small and large field regions. See in particular Remark \ref{remchoicer}.  \\ \hline
$c_{pos},c_{pos}' $ & Page 15 & Constants in the positivity assumption on the polynomial part of the interaction. See Remark \ref{rempositivity}.  \\ \hline 
$\Lambda_\bullet^\bullet $ & Page 23-25 & Transformations on the space of parameters as needed to formalize our way of bounding in Properties 1-3, or appearing in certain Propositions and Lemmas. See section 5.2 for concretizations.  \\ \hline 
$c_J^\bullet(\bm\lambda),c_\bullet^\bullet $ & Page 24/25 & $c_J^\bullet(\bm\lambda) $ is the constant from Property 3 (i). See Proposition \ref{propprop3} for its value. The names of other constants in the construction indicates their origin.  \\ \hline 
$\delta_T\eta $ & Page 24 & Test function needed to control the Mayer resummation.  \\ \hline 
$\gamma,\gamma' $ & Page 24 & Coefficient functions arising from the $\bm\int_T $ integro differential operators. Feature in Property 2. See page 33 for their definition. \\ \hline 
$\upmu_{\leq 1} $ & Page 25 & $=\upmu\wedge 1 $ \\ \hline 
$\mu_{C(x,x)}(B_r), \vert \mu_{C(x,x)}\vert $ & Page 33 & Small field normalization and total mass of the single site measure $ \mu_{C(x,x)}$. \\ \hline 
%
%
%
\end{longtabu}

\end{appendix}

\end{document}